\documentclass[11pt,letterpaper]{article}
\usepackage[margin=1in]{geometry}
\usepackage[T1]{fontenc}
\usepackage{lmodern,microtype}
\usepackage{amsmath,amssymb,amsthm,mathtools,booktabs,array,tabularx,enumitem}
\setlist{topsep=4pt,itemsep=2pt,parsep=0pt}
\usepackage[table]{xcolor}
\usepackage{tikz,placeins}
\usetikzlibrary{shapes.geometric}
\usepackage{hyperref}
\usepackage{algorithm}
\usepackage{algpseudocode}

\hypersetup{hypertexnames=false,colorlinks=true,linkcolor=blue!45!black,citecolor=blue!45!black,urlcolor=blue!45!black}

\newtheorem{theorem}{Theorem}[section]
\newtheorem{lemma}[theorem]{Lemma}
\newtheorem{proposition}[theorem]{Proposition}
\newtheorem{corollary}[theorem]{Corollary}
\theoremstyle{definition}
\newtheorem{definition}[theorem]{Definition}

\newcommand{\eps}{\varepsilon}
\DeclareMathOperator{\clip}{clip}
\tikzset{
 hzgood/.style={ellipse,draw=blue!55!black,fill=blue!4,
   line width=.7pt,minimum width=1.65cm,minimum height=.85cm,
   inner sep=4pt,align=center},
 hzbuyer/.style={ellipse,draw=orange!65!black,fill=orange!5,dashed,
   line width=.7pt,minimum width=1.65cm,minimum height=.85cm,
   inner sep=4pt,align=center},
 hzedge/.style={draw=black!65,line width=.65pt},
 hzcoef/.style={fill=white,inner sep=2pt}}

\title{Exploring the Tractability Frontier of Hylland--Zeckhauser Models}
\author{Ioannis Panageas\thanks{\texttt{ipanagea@ics.uci.edu}}, Avi Shein\thanks{\texttt{ashein1@uci.edu}},
Shayan Taherijam\thanks{\texttt{staherij@uci.edu}}, and Vijay V. Vazirani\thanks{\texttt{vazirani@ics.uci.edu}}\\[.8ex]
University of California, Irvine}
\date{}
\begin{document}
\pagenumbering{roman}
\maketitle
\begingroup
\let\small\normalsize
\begin{abstract}
The Hylland--Zeckhauser (HZ) mechanism provides a market-based approach to
one-sided matching, making efficient equilibrium computation central
to its implementation. Under bi-valued utilities, it admits a strongly
polynomial-time algorithm~\cite{vy}, whereas with four values per buyer, the problem
is PPAD-hard under inverse polynomial approximation~\cite{ccpy}. To understand the tractability boundary,
we study two extensions of the bi-valued case: a third linear utility
value and separable concave utilities with two common slopes.

For every fixed rational $a\in(0,1)$, we prove that computing a
normalized additive $\varepsilon_{*}(a)/N$-equilibrium in an $N$-buyer
HZ market is PPAD-complete for some constant $\varepsilon_{*}(a)>0$,
even when all utilities belong to the same prescribed set
$\{0,a,1\}$. This resolves the main question left open by~\cite{vy,ccpy}:
whether three utility values already suffice for PPAD-hardness. 
Assuming the PCP-for-PPAD conjecture, we strengthen the above-stated result 
to constant additive-error hardness.

For the concave extension, we compute an exact equilibrium in
strongly polynomial time with two marginal utility slopes common across goods for
each buyer and arbitrary rational buyer--good breakpoints.
With three common marginal slopes, suitable breakpoints
encode the three valued linear case, yielding PPAD-hardness.

Thus, our paper identifies two thresholds separating tractable and 
intractable cases of  HZ models with utility functions specified by utility values or marginal slopes.
In both cases, the tractability threshold lies between two and three.
\end{abstract}
\endgroup

\thispagestyle{empty}
\clearpage
\setcounter{tocdepth}{2}
\tableofcontents
\clearpage
\pagenumbering{arabic}

\section{Introduction}

Allocating indivisible goods without monetary transfers is a basic
problem in matching markets, such as assigning students to dormitory
rooms. The Hylland--Zeckhauser (HZ) mechanism~\cite{hz} addresses
this problem through a market for probability shares of goods.
There are $N$ buyers with cardinal utilities and $N$ indivisible
goods, each available in one copy. Each buyer receives one unit
of artificial money and chooses an affordable bundle of probability
shares of total quantity one to maximize her expected utility.
Equilibrium prices make these demands clear the market. The resulting
doubly stochastic allocation can be implemented as a lottery over
assignments in which each buyer receives exactly one good. No
monetary transfers are made. For linear utilities over probability
shares, the allocation is ex ante envy-free and, with
minimum-expenditure tie-breaking among optimal bundles, ex ante
Pareto-efficient~\cite{hz}.

At fixed prices, a buyer's demand in the linear model is a linear
program with a budget constraint and a unit-quantity constraint.
Finding prices that clear the market is the computational difficulty.
An optimal bundle may mix an expensive preferred good with a cheaper
less preferred good to satisfy both constraints. Vazirani
and Yannakakis~\cite{vy} gave a strongly polynomial exact
algorithm for bi-valued utilities, where each buyer has at most
two values, which may differ across buyers, and asked whether
$\{0,1/2,1\}$ already suffices for intractability.
Chen, Chen, Peng, and Yannakakis~\cite{ccpy}
proved PPAD-hardness of inverse polynomial approximation with
four values per buyer. Yan and Liu~\cite{yl} subsequently gave
a polynomial-time additive $1/e$ approximation for utilities in
$[0,1]$ and a $5\times5$ market over $\{0,1/2,1\}$ whose
exact equilibria all have an irrational price. Irrationality does
not, however, establish hardness of approximation; the three-value
case remained open.

We study two extensions of the tractable bi-valued domain.
The first keeps utilities linear and adds a third value. The
second allows separable concave utilities over fractional
allocations, with two marginal slopes common across goods for
each buyer and a separate breakpoint for every buyer--good pair.
Taking all breakpoints in $\{0,1\}$ recovers bi-valued linear
utilities, so both extensions contain the tractable case. We can
therefore distinguish the effect of an additional utility value
from that of diminishing returns.

\subsection{Our results}

\paragraph{Three common utility values.}
We use the additive approximation notion of~\cite{ccpy}.
A normalized additive $\eps$-equilibrium consists of
nonnegative prices with minimum zero and an allocation that
clears every good and gives every buyer total quantity one.
Each buyer $i$ spends at most $1+\eps$ and receives utility at
least $V_i(p)-\eps$, where $V_i(p)$ is her maximum utility over
unit bundles costing at most one at prices $p$. This notion
allows expenditure on zero-utility goods and does not impose
minimum-expenditure tie-breaking; see Definition~\ref{def:approx}.

\begin{theorem}\label{thm:main}
For every fixed rational $a\in(0,1)$, there is a positive rational
constant $\eps_*(a)$ such that computing a normalized additive
$\eps_*(a)/N$-equilibrium of an $N$-buyer HZ market is PPAD-complete
when all utilities belong to $\{0,a,1\}$.
\end{theorem}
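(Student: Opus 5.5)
Membership in PPAD follows the standard route for HZ markets. Existence of an exact HZ equilibrium is proved via a fixed point of a Kakutani/Brouwer-type correspondence (as in~\cite{hz}). Approximate versions reduce to an approximate Brouwer instance with polynomially bounded Lipschitz constants, as shown in~\cite{ccpy}, and that argument applies verbatim to any rational utilities. The substance is therefore hardness, which I would establish by reducing from a PPAD-complete problem with constant-error gadgets. The natural source is the generalized circuit problem $\eps$-GCircuit: it is PPAD-hard for some constant $\eps$ (Rubinstein), and it suffices to use sparse circuits built from a few gate types such as $G_{+}$, $G_{\times\zeta}$, $G_{=}$, $G_{1-}$ and comparison/$G_{<}$. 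The scaling $\eps_*(a)/N$ is exactly what we get if each gate is realized by a gadget of $O(1)$ buyers and goods, with circuit variables encoded as prices (or allocation amounts) of order $1$ in a normalized sense. Per-buyer additive errors of $\eps/N$ must then translate into constant gate errors after rescaling, since an $N$-buyer market normalizes totals by $N$.

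\textbf{Step 1: a price-to-quantity primitive with utilities $\{0,a,1\}$.} Consider a buyer with utility $1$ on an expensive good $g_1$ and $a$ on a cheap good $g_2$, priced $p_1>1>p_2$. Her optimal bundle mixes the two so that both the budget and the unit constraints bind. This gives $x_{g_1}=(1-p_2)/(p_1-p_2)$, a rational function of the prices. This works provided $a$ is below the threshold at which she would prefer spending on a zero-price fallback; I would pin prices of designated ``anchor'' goods, e.g.\ a good at price $0$ that all buyers like at value $0$ or $a$. Zero-utility edges and the third value $a$ are precisely what let us break the bi-valued structure. Buyers of this kind, with the mixing fraction depending on two prices, are the basic nonlinear element.

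\textbf{Step 2: gadgets.} I would use these mixing buyers to construct:
\begin{enumerate}
\item a \emph{copy/threshold} gadget, in which a good's price is forced into $[0,\cdot]$ and its residual demand is transferred to another good;
\item \emph{addition}, by having several buyers' fractional demands land on one good, so that market clearing sums them;
\item \emph{scaling by constants} depending on $a$.
\end{enumerate}
Comparison gates come from the discontinuity at the point where a buyer switches between the value-$1$ good and the value-$a$ good. Each gadget must be robust: under $\eps$-approximate optimality (utility within $\eps$, spending at most $1+\eps$), the induced relation must hold up to $O(\eps/\mathrm{poly}(a))$. I would prove this by a local sensitivity argument. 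If a buyer's bundle loses at most $\eps$ utility, then her mass on strictly dominated options is $O(\eps/\Delta)$, where $\Delta$ is a utility or price gap bounded below by a constant depending only on $a$. Because the paper's additive normalization divides by $N$, I would rescale each gadget's $O(1)$ buyers to carry weight proportional to $1$. The global error per gate then becomes $O(N\cdot\eps_*/N)=O(\eps_*)$, and choosing $\eps_*(a)$ small enough yields the GCircuit tolerance.

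\textbf{Step 3: global consistency and the obstacle.} The main difficulty is that HZ markets have the unit-quantity and unit-budget constraints on every buyer and perfect clearing of every good. Every gadget must therefore be ``balanced'', with no leftover supply or demand, and prices must stay in a bounded window so that the normalization (minimum price zero) is enforced. I would add slack buyer/good pairs with utility $1$ on their private good to absorb imbalances. I would also show that in any approximate equilibrium the anchor good has price near $0$ and the variable goods have prices in $[1+\Omega(1),O(1)]$. This last claim, that approximate equilibria cannot drift into degenerate price regimes where all gadgets collapse (e.g.\ every buyer spending on zero-price goods), is where I expect the real work to lie. I would handle it by a counting argument on total money versus total supply, combined with the fixed values $\{0,a,1\}$, which yield constant gaps uniformly in $N$.
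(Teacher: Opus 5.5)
Your hardness argument is a template, not a reduction. Step~2 names the gadgets you want (copy, addition, scaling, comparison) but constructs none of them. With a single value set $\{0,a,1\}$ shared by every buyer, that construction is the entire difficulty.

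The mixing buyer of Step~1 couples its two goods in both directions:
\begin{itemize}
\item It mixes only if $p_2<ap_1$, not merely $p_2<1<p_1$. If $p_2>ap_1$ it buys only $1/p_1$ units of $g_1$ plus filler, so the price \emph{ratio} itself must be forced by the market.
\item Its spending $p_1(1-p_2)/(p_1-p_2)$ on $g_1$ depends on the receiving price. The rest of its budget lands on $g_2$. So ``letting several fractional demands land on one good'' also perturbs the revenue of every good being read.
\end{itemize}
\cite{ccpy} cancel such dependences by choosing each buyer's four values separately, which a common value set forbids. The paper overcomes this with ideas absent from your plan:
\begin{itemize}
\item It reduces from bounded-degree threshold games, which need only a single gate type, instead of GCircuit.
\item Anchored reference groups confine each encoded high price $p_v$ to a window of width $O_a(1/K)$ just below $a^{-2}$.
\item Two successive $a$-comparisons (lower links on $L_v,M_v$; upper buyers on $M_v,H_v$) force $q_v=am_v=a^2p_v<1$. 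The $K$ wire buyers of $u\to v$ therefore see ratio $q_u/p_v\approx a^2<a$, must mix, and spend $I_{uv}=3-2t_u+O_a(1/K)$ on $H_v$.
\item Degree-dependent supplies make the revenue required at $v$ contain exactly the term $2o_vt_v$ that outgoing wires spend on $L_v$. This term cancels, leaving $\Phi_v=2\sum_{u\to v}t_u-1+O_a(1/K)$, whose sign pins $p_v$ to the floor or upper reference.
\end{itemize}

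Your robustness argument also misses why the tolerance is $\eps_*(a)/N$. Approximate buyers may spend on zero-utility goods anywhere in the market. That expenditure is controlled only by $2\eps/\beta_i$, not by a quantity bound $O(\eps/\Delta)$. Bounding the budget multiplier $\beta_i$ from below needs a uniform utility gap $V_i(p)\le1-g_a$ together with a global revenue bound on the prices of each buyer's utility-one goods. The outcome is total unapproved expenditure $\mathcal W=O_a(N\eps)$, which can concentrate in a single gadget; taking $\eps=\eps_*(a)/N$ is what makes it a small constant. ``Rescaling a gadget's buyers to carry weight one'' has no meaning when every buyer has unit budget and unit demand.

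Moreover, the decisive slacks in a working construction are of order $\kappa w=\Theta_a(\kappa/K)$, not constants depending only on $a$. So one also needs $K^2\eps$ small, which holds because $N=\Omega_a(K^2)$.

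Finally, membership in PPAD comes from \cite[Section~7]{vy}, after dropping their approximate minimum-expenditure condition, not from \cite{ccpy}.
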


Membership in PPAD follows from~\cite[Section~7]{vy}.
Their approximation notion imposes all the conditions in
Definition~\ref{def:approx}, together with an approximate
minimum-expenditure condition. Dropping that additional condition
gives a solution of our problem at the same error parameter.
Their upper bound applies to rational utilities in $[0,1]$ and
positive rational accuracy, including $\eps_*(a)/N$.

The theorem answers the question of~\cite{vy} for
$\{0,1/2,1\}$ and, more generally, for every prescribed middle
value $a$. The same three values are used by all buyers; this is stronger than allowing
each buyer her own three values. In contrast, the hardness of \cite{ccpy} does not assume 
four common values for all buyers. 
Furthermore, every row and column of our constructed utility matrix
has only a bounded number of positive entries, with the bound
depending on $a$. Since $1/\eps=N/\eps_*(a)$ is polynomial
in the market size, Theorem~\ref{thm:main} rules out a fully
polynomial-time approximation scheme unless
$\mathrm{PPAD}\subseteq\mathrm{P}$.

\paragraph{Constant approximation and PCP for PPAD.}
The PCP-for-PPAD conjecture of Babichenko, Papadimitriou, and
Rubinstein~\cite{bpr} concerns generalized circuits in which a
small constant fraction of gates may fail. We use the formulation
of~\cite{pcpmarkets}: for some positive constant gate accuracy
and failure fraction, finding such an assignment is PPAD-hard.
The original conjecture asks for a quasilinear reduction. OpenAI
has announced a proof of that stronger formulation~\cite{openaipcp};
we state our constant-approximation result conditionally on PCP
for PPAD.

\begin{theorem}[Constant-error hardness under PCP for PPAD]\label{thm:pcp-intro}
Assuming the PCP-for-PPAD conjecture, for every fixed rational
$a\in(0,1)$ there is a positive rational constant
$\eps_0=\eps_0(a)$ such that computing a normalized additive
$\eps_0$-equilibrium of an HZ market with utilities in
$\{0,a,1\}$ is PPAD-hard.
\end{theorem}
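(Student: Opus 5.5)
The plan is to reuse the gadget reduction behind Theorem~\ref{thm:main} unchanged, and to replace its source problem with the PCP-for-PPAD version of generalized circuits. In the form of~\cite{pcpmarkets}, there are constants $\delta>0$ (gate accuracy) and $\eta>0$ (failure fraction) such that the following is PPAD-hard: given a generalized circuit, find an assignment in which all but an $\eta$ fraction of gates are $\delta$-satisfied. We may assume the circuit has bounded fan-in and fan-out. After the standard reductions we may also assume each wire is read by $O(1)$ gates. This matters because Theorem~\ref{thm:main} already yields a market with a bounded number of positive entries per row and column, depending only on $a$, so each gate corresponds to $O(1)$ buyers and goods.

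\textbf{Key steps.}
\begin{enumerate}
\item Map the circuit to an HZ market with utilities in $\{0,a,1\}$ using the gate gadgets (value encoding, copy, addition, scaling, comparison) from the proof of Theorem~\ref{thm:main}. I assume those gadgets are analysed \emph{locally}: if every buyer of a gadget is $\eps$-optimal and within budget $1+\eps$, then the gadget's gate is $c(a)\eps$-accurate. Here $c(a)$ is a constant; the $1/N$ in Theorem~\ref{thm:main} should come only from a global accounting, such as the normalisation or balancing of total money against total price.
\item Take a normalized additive $\eps_0$-equilibrium. Buyers whose violation is large are exactly where a constant error can hurt. Since $\eps_0$ is a constant, every buyer may be $\eps_0$-suboptimal, so we cannot simply count bad buyers.
\item Find some averaged quantity that the equilibrium constraints force to be small. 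The natural candidate is total slack: the sum over buyers of utility loss plus budget excess, measured against a potential that clearing bounds, namely total spending versus total price mass. I then apply Markov's inequality: all but an $\eta$ fraction of gadgets have per-gadget error at most $\delta/c(a)$.
\item Each gadget has $O(1)$ buyers and each wire is shared by $O(1)$ gadgets, so a bad gadget corrupts only $O(1)$ gates. Choosing $\eps_0 \le \eta\delta/(C\,c(a))$ for a suitable constant $C$ then gives an assignment that violates at most an $\eta$ fraction of gates.
\end{enumerate}

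\textbf{Main obstacle.} The hard part is the locality in step 1. In the proof of Theorem~\ref{thm:main} the price levels probably have to be pinned globally: for example, a reference good with price near zero, and prices of goods in different gadgets tied together through clearing. Errors in that pinning could then accumulate along chains of gadgets, which is presumably why $\eps_*(a)/N$ appears at all.

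I would first try to make the pinning robust. Each gadget would get its own local reference structure, consisting of constantly many cheap "anchor" goods shared among $O(1)$ buyers. A constant error in one place would then shift only nearby prices, and gadget correctness would depend only on local price ratios. If the global constraint (total money $N$ equals total price) is genuinely needed, I would use the averaging trick: the aggregate deviation is bounded by $N\eps_0$, so all but $O(\sqrt{\eps_0})N$ gadgets see local deviation $O(\sqrt{\eps_0})$. Fixing $\eps_0$ so that $O(\sqrt{\eps_0})\le\min(\eta,\delta/c(a))$ completes the argument, with $\eps_0(a)$ chosen rational.
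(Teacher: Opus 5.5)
There is a genuine gap. Your instinct is right that the $1/N$ in Theorem~\ref{thm:main} comes from a global accounting, and that a Markov-plus-bounded-degree count should replace it. But the locality statement in Step~1 and the quantity you average in Step~3 are not the right ones.

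Step~1 as stated cannot hold. At an $\eps_0$-equilibrium \emph{every} buyer is $\eps_0$-optimal and within budget $1+\eps_0$. Step~1 would then make every gadget $c(a)\eps_0$-accurate simultaneously, so Steps~2--4 and the PCP hypothesis would be superfluous, and you would obtain unconditional constant-error hardness. For the same reason, Markov's inequality applied to total buyer slack is vacuous, since each buyer's slack is already at most $2\eps_0$.

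The obstacle you name, global price pinning, is not the issue either. Each vertex already has its own anchored reference groups $\mathcal R_{Fv},\mathcal R_{Uv}$, which is essentially your proposed fix. The real non-locality is expenditure on zero-utility goods. Any buyer anywhere in the market may put part of her budget on goods of gadget $v$ that she values at zero. That revenue enters the clearing identities at $v$ (the reference bounds, the link arguments, and the residual $\Phi_v$) regardless of how well the local buyers optimize. Nor is per-buyer waste automatically $O(\eps)$: a buyer whose budget multiplier $\beta_i$ is small can lose only $\eps$ utility while spending nearly her whole budget on a tiny quantity of an expensive unapproved good.

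The paper's argument controls exactly this quantity, in four steps.
\begin{enumerate}
\item It averages the total unapproved expenditure $\mathcal W=\sum_i\sum_{j:u_{ij}=0}p_jx_{ij}$ and proves $\mathcal W\le C_aN\eps$ (Lemma~\ref{lem:pcp-waste}).
\item That bound needs a uniform utility gap: every group valued at one costs more than $1+d_a$, so $V_i(p)\le1-g_a$. The dual then gives $\beta_i(p_{f(i)}-1)\ge g_a$, hence per-buyer waste at most $(2\eps/g_a)p_{f(i)}$. A revenue bound on $\sum_ip_{f(i)}$ finishes the estimate, but it contains a $K\eps\mathcal W$ term that must be absorbed.
\item Local decoding (Lemma~\ref{lem:pcp-local}) holds at $v$ whenever $\mathcal W_u\le\zeta$ for all $u\in B_2(v)$ and $\eps\le c_a\kappa^2/K^2$.
\item At most $\mathcal W/\zeta$ vertices receive more than $\zeta$, and each spoils at most $21$ vertices. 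With $N\le M_aK^2n$ and $K$ fixed, the failure fraction is $O_a(K^2\eps)$, which gives the constant $\eps_0=s_a\rho/K^2$.
\end{enumerate}

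Two further parts of your plan do not match what is available. First, the construction contains no generalized-circuit gadgets (copy, addition, comparison), only vertex and wire gadgets for threshold games. The paper therefore starts from input-once weak Pure-Circuit~\cite[Theorem~2.2]{pcpmarkets} and applies the threshold gadgets of~\cite{purecircuit}, losing only a factor $6$ in the failure fraction. Second, bounded fan-out is not a routine ``standard reduction'' in the weak setting, because a single failed node in a fan-out tree corrupts all of its readers.
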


\paragraph{Two common marginal slopes.}
For the concave extension, buyer $i$ has separable utility
$U_i(x_i)=\sum_j f_{ij}(x_{ij})$, where
\begin{equation}\label{eq:two-slopes-intro}
 f_{ij}(t)=c_i^{(2)}t+(c_i^{(1)}-c_i^{(2)})\min\{t,\tau_{ij}\},
 \qquad 0\le t\le1.
\end{equation}
The slopes $c_i^{(1)}\ge c_i^{(2)}\ge0$ are common across
goods for buyer $i$, while $\tau_{ij}\in[0,1]$ is the
breakpoint for good $j$. Budgets, supplies, and required bundle
quantities are still one.

\begin{theorem}[Two marginal slopes]\label{thm:two-slopes}
For rational input in~\eqref{eq:two-slopes-intro}, an exact
normalized HZ equilibrium can be computed in strongly polynomial time. Each buyer receives a minimum-expenditure
bundle among her utility-maximizing affordable bundles.
\end{theorem}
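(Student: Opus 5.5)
The plan is to reduce the two-slope concave market to a bi-valued linear HZ market with a richer good structure, and then adapt the strongly polynomial algorithm of~\cite{vy} rather than invoke it as a black box.

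\textbf{Step 1: the reduction.} Split each good $j$ into a ``first-segment'' part and a ``second-segment'' part as seen by buyer $i$. Buyer $i$ values the first $\tau_{ij}$ units of $j$ at marginal rate $c_i^{(1)}$ and any further units at $c_i^{(2)}$. Prices are per good and common to all buyers. At prices $p$, buyer $i$'s demand problem is therefore a linear program in variables $y_{ij}\in[0,\tau_{ij}]$ and $z_{ij}\ge 0$ with $x_{ij}=y_{ij}+z_{ij}$. Its constraints are $\sum_j x_{ij}=1$ and $\sum_j p_j x_{ij}\le 1$, and its objective is $c_i^{(1)}\sum_j y_{ij}+c_i^{(2)}\sum_j z_{ij}$. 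After normalizing by an affine rescaling, which preserves optimality under the unit-quantity constraint, each buyer has two values. The only new feature is the per-pair capacity $\tau_{ij}$ on the high-value segment.

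\textbf{Step 2: structure of optimal bundles.} I would characterize optimal bundles exactly as in the bi-valued analysis. Sort goods by price. An optimal minimum-expenditure bundle fills cheapest-first high-segment capacity up to the caps $\tau_{ij}$, and fills the remainder with the cheapest goods overall. If the budget binds, it mixes a high-segment purchase with a cheap (possibly zero-price) low-segment purchase, at a trade-off ratio determined by $c_i^{(1)}-c_i^{(2)}$ and the price differences. The key claim is that at equilibrium only a bounded number of price levels matter per buyer: the zero-price goods and the goods whose price equals buyer $i$'s ``bang-per-buck'' threshold. This is the analogue of the fact in~\cite{vy} that bi-valued buyers see at most two relevant price classes.

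\textbf{Step 3: the algorithm.} Follow the combinatorial scheme of~\cite{vy}. Maintain a partition of goods into price classes. At each class, check feasibility by a parametric max-flow in a bipartite network. In this network, buyer-to-good edges on the high segment carry capacity $\tau_{ij}$, and low-segment edges are uncapacitated up to one. Raise the prices of a tight (overdemanded) set uniformly until a new edge enters some buyer's optimal set or a budget constraint becomes binding. Both events are determined by solving linear equations with rational coefficients from the input, so each price stays rational. The number of events is polynomial because each event either merges or splits classes monotonically. The breakpoints enter only as arc capacities, so the flow computations remain strongly polynomial via standard max-flow and parametric flow (Gallo--Grigoriadis--Tarjan style) algorithms. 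Minimum expenditure is enforced by always routing low-segment demand to zero-price goods first and breaking ties toward cheaper goods. Uniqueness of the relevant max-flow cut structure then guarantees that each buyer's bundle is minimum-expenditure among her optima.

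\textbf{Main obstacle.} The hardest step will be proving termination in strongly polynomial time once the capacities $\tau_{ij}$ are present. Fractional breakpoints can make a buyer's optimal bundle change shape in the middle of a price class. This could produce interleaved events that the clean two-class monotonicity argument of~\cite{vy} does not rule out. My first attempt is a potential-function argument: every event either permanently saturates a high-segment edge $(i,j)$ or strictly enlarges the set of zero-price goods. That would bound the number of events by $O(N^2)$. If this monotonicity fails, the fallback is to guess the combinatorial support, at most polynomially many candidates by Step 2, and solve a linear system for each. Strong polynomiality would then need a separate argument via the bounded-size, totally unimodular structure of the flow constraints.
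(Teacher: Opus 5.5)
Your normalization in Step~1 matches the paper's: on unit bundles the utility is $c_i^{(2)}+(c_i^{(1)}-c_i^{(2)})\sum_j\min\{x_{ij},\tau_{ij}\}$, and indifferent buyers get zero thresholds. Your description of cheapest-first filling up to the caps is also right. But the ``key claim'' of Step~2 contradicts that description and is false. After normalization, a buyer's demand is a fractional knapsack: maximize $\sum_j y_j$ subject to $0\le y_j\le\tau_{ij}$, $\sum_j y_j\le1$, and $\sum_j p_jy_j\le1$. Second-segment units are pure filler, bought at price zero. So she takes her full threshold at every price below some cutoff $P$. For example, with $\tau_{ij}=1/3$ on three goods priced $0,1,2$ and zero elsewhere, her unique optimum takes $1/3$ of each. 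There is no bounded set of relevant price levels and no linear-style trade-off ratio between segments. The paper's algorithm is built precisely around buyers who continue through several price levels.

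Your fallback of enumerating ``polynomially many'' supports rests on this false claim. Step~3 is not yet an argument either. The overdemanded-set scheme is unspecified, and your potential is inconsistent with it. Raising prices can only shrink the zero-price set, and nothing you state prevents a saturated edge $(i,j)$ from becoming unsaturated once the buyer's budget binds. ``Uniqueness of the max-flow cut structure'' does not establish minimum expenditure. Rationality of prices does not give strong polynomiality, because each price is computed from budgets left over from earlier stages, so you must control how denominators compound.

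The missing idea is to assign prices in increasing order while keeping the invariant that every still-active buyer has received her full threshold on every good priced earlier. The paper first computes a maximum flow in the expenditure network at price one and gives price zero to the goods closed in its residual network. Each later stage makes one GGT parametric-flow call in the quantity network, with parameter $\vartheta=1/\lambda$. This finds the largest common price at which all remaining goods can be sold as useful quantity within the remaining budgets $B_i$ and quantities $R_i$. The stage then commits only the goods closed in the residual expenditure network. The closed-goods lemma shows each buyer either receives her full thresholds on the committed goods or exhausts her budget or quantity, which preserves the invariant. Restricting the old flow shows the next stage is feasible at a strictly larger price, so there are at most $N$ stages. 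Filler comes last, from unused free supply. For a buyer leaving at price $P$, optimality and minimum expenditure follow from $p\cdot y_i-p\cdot x_i\ge P\bigl(U_i(y_i)-U_i(x_i)\bigr)$. Finally, a tight-cut formula for each stage price shows that budget denominators grow per stage by a factor of at most $N$ times the product of the threshold denominators. This gives polynomial encoding lengths.
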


With three common slopes, zero-length segments can represent each
of the linear utilities $0$, $at$, and $t$. Thus the linear
hardness theorem immediately gives the corresponding concave
hardness result.

\begin{corollary}[Three common marginal slopes]\label{cor:three-slopes}
For every fixed rational $a\in(0,1)$, computing a normalized
additive $\eps_*(a)/N$-equilibrium is PPAD-hard for separable
concave utilities with the same ordered marginal slopes $1,a,0$
for every buyer--good pair, allowing segments of zero length.
\end{corollary}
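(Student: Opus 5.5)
The plan is a direct reduction from Theorem~\ref{thm:main}. Take any $N$-buyer linear HZ market with utilities $u_{ij}\in\{0,a,1\}$. Map it to a concave market with the same buyers and goods. Every buyer--good pair gets the ordered slopes $1,a,0$, and we choose the segment lengths so that $f_{ij}$ is exactly linear on $[0,1]$:
\begin{itemize}
\item if $u_{ij}=1$, the slope-$1$ segment has length $1$ and the other two have length $0$, so $f_{ij}(t)=t$;
\item if $u_{ij}=a$, the slope-$1$ segment has length $0$, the slope-$a$ segment has length $1$, and the slope-$0$ segment has length $0$, so $f_{ij}(t)=at$;
\item if $u_{ij}=0$, the first two segments have length $0$ and the last has length $1$, so $f_{ij}\equiv0$.
\end{itemize}
In each case $f_{ij}$ is concave, piecewise linear, with marginal slopes drawn from the common ordered list $1,a,0$. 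Zero-length segments are allowed, so the representation is legal. The construction is clearly polynomial: each pair needs three breakpoints in $\{0,1\}$.

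Next we check that the solution concepts coincide. Allocations live in $[0,1]$, since supplies are one and each buyer's bundle has total quantity one. On that domain $U_i(x_i)=\sum_j f_{ij}(x_{ij})=\sum_j u_{ij}x_{ij}$ identically. The remaining ingredients of Definition~\ref{def:approx} depend on utilities only through the function $U_i$:
\begin{itemize}
\item normalized prices with minimum zero;
\item market clearing and unit-quantity bundles;
\item the budget bound $1+\eps$;
\item the utility bound $U_i(x_i)\ge V_i(p)-\eps$, where $V_i(p)$ is the maximum of $U_i$ over unit bundles of cost at most one.
\end{itemize}
Hence a pair $(p,x)$ is a normalized additive $\eps$-equilibrium of the concave market if and only if it is one for the linear market, at the same $\eps=\eps_*(a)/N$ and the same $N$. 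A solution of the concave instance is therefore, verbatim, a solution of the linear instance. PPAD-hardness transfers from Theorem~\ref{thm:main}.

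The only point needing care is definitional rather than technical. The approximate-equilibrium notion for the concave model must be the natural extension of Definition~\ref{def:approx}, with $V_i(p)$ defined using the concave $U_i$, and the slope data must be counted as the same ordered list $1,a,0$ for every pair, with degenerate segments permitted. Once these conventions are fixed, the identity $f_{ij}(t)=u_{ij}t$ on $[0,1]$ finishes the argument. Note that we claim only hardness, not membership, so no further work is needed.
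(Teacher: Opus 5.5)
Your proposal is correct and matches the paper's proof. The paper writes $f_{ij}(t)=(1-a)\min\{t,\tau^{(1)}_{ij}\}+a\min\{t,\tau^{(2)}_{ij}\}$ and uses $(\tau^{(1)}_{ij},\tau^{(2)}_{ij})=(0,0),(0,1),(1,1)$, which are exactly your segment lengths, to get $f_{ij}(t)=u_{ij}t$ on $[0,1]$; it then transfers Theorem~\ref{thm:main} with the identity decoder, and it also notes that Lemma~\ref{lem:rational-witness} (existence and polynomial-length witnesses) carries over, which follows from your observation that the two instances have identical solution sets.
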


Table~\ref{tab:frontier} summarizes these results.

\begin{table}[!htbp]
\centering
\caption{Complexity under restrictions on linear utility values and
common marginal slopes, for fixed rational $a\in(0,1)$.
In the concave model, slopes are common across goods for each buyer;
the three-slope hardness uses the same slopes for all buyers.
Zero-length segments are permitted.}
\label{tab:frontier}
\renewcommand{\arraystretch}{1.2}
\begin{tabularx}{\textwidth}{|>{\centering\arraybackslash}p{.31\textwidth}|>{\centering\arraybackslash}X|>{\centering\arraybackslash}p{.20\textwidth}|}
\hline
\rowcolor{blue!10}
\textbf{Utility restriction} & \textbf{Complexity and accuracy} & \textbf{Reference}\\
\hline
Linear, two values per buyer & Strongly polynomial, exact & \cite{vy}\\
\hline
Linear, common values $\{0,a,1\}$ & PPAD-complete, error $\eps_*(a)/N$ & Theorem~\ref{thm:main}\\
\hline
Linear, common values $\{0,a,1\}$ & PPAD-hard, constant error under PCP for PPAD & Theorem~\ref{thm:pcp-intro}\\
\hline
Linear, four values per buyer & PPAD-hard, inverse polynomial error & \cite{ccpy}\\
\hline
Concave, two common slopes & Strongly polynomial, exact & Theorem~\ref{thm:two-slopes}\\
\hline
Concave, three common slopes & PPAD-hard, error $\eps_*(a)/N$ & Corollary~\ref{cor:three-slopes}\\
\hline
\end{tabularx}
\end{table}

\FloatBarrier

\subsection{Technical overview}

We reduce from the bounded-degree threshold game~\cite{thresholdgames,purecircuit},
with a fixed small tolerance $\kappa>0$. Each vertex has a value
$z_v\in[0,1]$, which must be at most
$\kappa$ when the incoming sum exceeds $1/2+\kappa$, and at
least $1-\kappa$ when that sum is below $1/2-\kappa$.
Inside this tolerance interval, any value is allowed.
Both indegree and outdegree are at most two, and every vertex has
at least one outgoing edge. For fixed rational $a\in(0,1)$, we
choose a small constant $\tau$ and a large admissible integer $K$
before receiving the graph. All vertex and wire gadgets then have
constant size, with vertex sizes depending only on the bounded
degree pair.

Chen, Chen, Peng, and Yannakakis~\cite[Section~3, full version]{ccpy}
also reduce from threshold games, using at most four utility values
per buyer, which may differ across buyers.
Their edge gadgets use this freedom to cancel unwanted price
dependencies. With the same prescribed set $\{0,a,1\}$ for every
buyer, we cannot tune these comparisons separately. Our construction
uses two successive comparisons to obtain a price ratio $a^2$.
Its separation from the utility ratio $a$ lets wire buyers transmit
an approximately affine signal. Their spending also changes at the
source; degree-dependent supplies cancel that second change.
We first explain the reduction at an exact equilibrium.

\paragraph{Encoding values in prices.}
Each vertex $v$ has three target groups of goods $L_v,M_v,H_v$,
with prices $q_v,m_v,p_v$. Goods within a group have identical
utility vectors, allowing price equalization
(Lemma~\ref{lem:copies}). Lower-link buyers compare utilities $a$
and one on $L_v,M_v$; upper buyers make the same comparison on
$M_v,H_v$. Two reference groups of goods, supported by their own
anchor buyers, bound the target prices. Floor buyers can buy the
low and middle targets or their reference; upper buyers can buy
the middle and high targets or a second reference.

Suitable group sizes force
\[
 q_v=am_v=a^2p_v,\qquad q_v<1<m_v<p_v.
\]
Thus the two comparisons produce the ratio $a^2$, while a unit
of the low target is affordable and a unit of either other target
is not. The reference groups confine every high price to the same
interval $[L,U+\eta]$, of width $O_a(1/K)$, just below $a^{-2}$.
Writing $w=U-L$ and $\eta=\tau w$, we decode
$z_v=\clip(t_v)$, where $t_v=(p_v-L)/w$ and $\clip$
truncates to $[0,1]$. Then $0\le t_v\le1+\tau$ and
$|t_v-z_v|\le\tau$.
Figure~\ref{fig:vertex} shows the gadget;
Sections~\ref{sec:construction} and~\ref{sec:exact} give the group
sizes and prove the price relations.

\paragraph{Directed signals and cancellation.}
For each edge $u\to v$, introduce $K$ wire buyers valuing
$L_u$ at $a$ and $H_v$ at one. Since both high prices lie in
the same narrow interval, $q_u/p_v=a^2+O_a(1/K)<a$.
The low good offers better utility per dollar, whereas the high
good offers better utility per unit. Each wire buyer therefore
mixes the two, exhausting both budget and quantity.
With our calibration of the price interval, their total spending
on $H_v$ is
\[
 I_{uv}=\frac{Kp_v(1-a^2p_u)}{p_v-a^2p_u}
       =3-2t_u+O_a(1/K).
\]
The error is proportional to $p_u-p_v$, with a factor bounded
in terms of $a$ alone, so the bound holds for every edge;
see Lemma~\ref{lem:demand} and~\eqref{eq:wire-expansion}.
Spending on $H_v$ therefore records the value encoded at $u$.

The same buyers spend their remaining budget
$K-I_{uv}=K-3+2t_u+O_a(1/K)$ on $L_u$.
Thus outgoing wires contribute a term $2o_vt_v$ to revenue at
$v$, where $o_v$ is its outdegree. We choose the target supplies
so that the revenue required to clear them contains exactly the
same term. Subtracting wire spending cancels this extra dependence
on $t_v$. The numbers of link buyers adjust the constant terms
according to the indegree and outdegree, placing the threshold at
$1/2$. The remaining market-clearing identity is
\[
 \Phi_v=2\sum_{u\to v}t_u-1+O_a(1/K),
\]
where $\Phi_v$ is floor spending on the targets minus upper
spending on the upper reference goods.

If $\Phi_v>0$, floor buyers must spend on targets. This is
optimal only near the low endpoint, forcing $z_v\le\tau$.
If $\Phi_v<0$, upper buyers must spend on their reference,
forcing the high price to be at least $U$ and hence $z_v=1$.
Small $\tau$ and large $K$ therefore turn a large incoming sum
into a small output, and a small incoming sum into a large output,
whenever the sum lies outside the tolerance interval.
Proposition~\ref{prop:exact} formalizes this decoding argument.

\paragraph{Robustness and inverse-linear accuracy.}
Approximate buyers may spend on goods they value at zero, supplying
revenue absent from the exact argument. Even a small quantity can
matter if its price is high, so we must control expenditure.
We do this before assuming approximate price relations.
Our demand estimates bound purchases from inferior options while
accounting for both the budget and the unit-bundle constraint.
The construction also keeps each buyer's optimal utility a fixed
amount below one. Together with revenue bounds, these estimates
give total unapproved expenditure $\mathcal W=O_a(N\eps)$ for
sufficiently small $K\eps$.

When $\mathcal W$ and $K^2\eps$ are small, approximate price
relations follow, and the additional error in the revenue identity
is $O_a(K^2\eps+\mathcal W)$. Away from the required endpoint,
nearly optimal floor or upper buyers cannot supply the spending
needed to explain the residual. This preserves the threshold
response. At $\eps=\eps_*(a)/N$, both error terms are small
for sufficiently small $\eps_*(a)$, since $N=\Omega_a(K^2)$.
Appendix~\ref{app:robustness} gives the quantitative estimates.

\paragraph{Constant error and local decoding.}
At constant accuracy, total unapproved expenditure can grow with
the graph. We instead bound the amount received by each vertex
gadget. A vertex decodes correctly whenever this expenditure is
small throughout its radius-two neighborhood: price comparisons
use incoming neighbors, and wire accounting uses both endpoints.
Bounded degree limits the number of vertices affected by any one
gadget receiving too much expenditure.
For errors below a suitable constant multiple of $1/K^2$,
at most $O_a(N\eps)$ constraints fail. For an $n$-vertex source
graph, $N=\Theta_a(K^2n)$, so choosing a sufficiently small
constant accuracy makes the failure fraction arbitrarily small.
Under PCP for PPAD, the required weak threshold problem is hard:
start from weak Pure-Circuit instances in which every node is
used as an input exactly once~\cite{pcpmarkets}, then apply the
threshold gadgets of~\cite{purecircuit}. This preserves bounded
degree and transfers failures with only a constant loss.

\paragraph{The two-slope algorithm.}
On unit bundles, the lower-slope contribution is constant.
Subtracting it and scaling reduces each nonindifferent buyer's
utility to
$U_i(x_i)=\sum_j\min\{x_{ij},\tau_{ij}\}$, preserving demand
and minimum-expenditure tie-breaking; indifferent buyers have
zero thresholds. All quantities below a threshold now have the
same marginal value. Fractional thresholds introduce a difficulty
absent from the bi-valued case: a buyer can exhaust a cheap
good's useful quantity while retaining both money and room in
her bundle.

We extend the increasing-price method of~\cite{vy}, tracking
each buyer's remaining budget and quantity separately.
An initial quantity flow identifies goods that can be assigned
price zero. On these goods, each buyer receives either a full
bundle of useful quantities or her entire threshold quantity
of every identified good, so any unused supply can later fill
bundle deficits without changing utility or expenditure.
At each subsequent stage, we use reciprocal price as a parameter
and apply the Gallo--Grigoriadis--Tarjan parametric maximum-flow
algorithm~\cite{ggt} to find the largest feasible common price
for the remaining goods. A residual-network certificate in the
expenditure network identifies which goods can be committed
at that price and establishes the key invariant:
\emph{every continuing buyer has filled every threshold at
earlier prices}. This ensures that buyers receive cheaper useful
quantities first. Buyers stop when their budget or unit-quantity
constraint binds, or when all their useful quantities have been
allocated. Unused free goods complete their bundles without
changing utility or expenditure. These properties establish
demand optimality and minimum expenditure. Each stage commits
at least one good, giving at most $N$ stages; together with
polynomial bounds on intermediate encoding lengths, the
parametric-flow implementation yields a strongly polynomial
algorithm. Section~\ref{sec:two-slopes} gives the implementation
and proof. With three common slopes, linear utilities $0$,
$at$, and $t$ are already representable by allowing zero-length
segments, so the hardness result follows directly.

\subsection{Related work}

Exact HZ equilibria need not be rational. Vazirani and
Yannakakis~\cite{vy} established this fact and placed their
computation in FIXP. Alaei, Jalaly Khalilabadi, and
Tardos~\cite{ajt} gave exact polynomial-time algorithms, using
algebraic representations, when the number of buyer preference
types or good types is fixed. Garg, Tao, and V\'egh~\cite{gtv}
studied constrained piecewise linear concave utilities, including
matching markets, and obtained approximation schemes when the
number of buyers or goods is fixed. Their existence theorem also
provides normalized equilibria with minimum-expenditure
tie-breaking.

For approximation hardness, Braverman, Liu, Xue, and
Zhou~\cite{blxz} proved a constant-error lower bound under PCP
for PPAD and an unconditional result when buyers cannot spend on
zero-utility goods. Our results permit such expenditure and
restrict every utility to $\{0,a,1\}$. Deligkas, Fearnley,
Hollender, and Melissourgos~\cite{pcpmarkets} proved a related
PCP equivalence for Fisher markets with approximately optimal
bundles. We use their equivalence between weak circuit problems
to obtain the input-once weak Pure-Circuit source for our
reduction~\cite[Theorem~2.2]{pcpmarkets}.

Other work considers allocation rules that obtain fairness and
efficiency through Nash bargaining. Hosseini and
Vazirani~\cite{HV} formulated convex programs for one-sided and
two-sided matching markets with linear and concave utilities.
Tr\"obst and Vazirani~\cite{TV} proved that finding an exactly
envy-free and Pareto-optimal lottery is PPAD-complete, and that
the one-sided Nash-bargaining solution is envy-free within a
factor of two. Combined with the algorithms of Panageas,
Tr\"obst, and Vazirani~\cite{ptv}, this gives polynomial-time
computation of a $(2+\eps)$-approximately envy-free and exactly
Pareto-optimal lottery. These guarantees approximate envy; our
equilibrium notion instead bounds budget violation and utility
loss at the reported prices.

Extensions of HZ also accommodate priorities and allocation
constraints. He, Miralles, Pycia, and Yan~\cite{hmpy} use
priority-dependent prices and establish asymptotic incentive
compatibility. Echenique, Miralles, and Zhang~\cite{emz} allow
allocation constraints arising in settings such as school choice
and course scheduling. The concave model studied here keeps the
unit budgets, supplies, and bundle requirements of HZ. Its
buyer--good breakpoints become capacities in our extension of the
increasing-price method of~\cite{vy}.

\paragraph{Organization and notation.}
Section~\ref{sec:prelim} gives the preliminaries.
Sections~\ref{sec:construction} and~\ref{sec:exact} give the
market construction and prove inverse-linear approximation
hardness. Section~\ref{sec:pcp-equivalence} proves constant-error
hardness under PCP for PPAD. Section~\ref{sec:two-slopes} gives
the two-slope algorithm and the three-slope hardness corollary.
Parameter calculations and quantitative robustness estimates
appear in Appendices~\ref{sec:parameter-checks}
and~\ref{app:robustness}, respectively.
Appendix~\ref{app:rational-witness-proof} contains the
rational-witness proof.

Throughout, $N$ is the number of buyers and unit-supply goods in
a market, and $n=|V|$ is the number of vertices in a threshold
graph. We fix a rational middle utility $a\in(0,1)$ and the
threshold tolerance $\kappa=1/200$. We write
$\clip(t)=\min\{1,\max\{0,t\}\}$ and $\lceil t\rceil$
for the ceiling of $t$. Constants in $O_a(\cdot)$ and
$\Theta_a(\cdot)$ depend only on $a$. Every construction
parameter is fixed before the input graph is given.
\section{Preliminaries}\label{sec:prelim}

\subsection{The source problem: threshold games}

The threshold game of Papadimitriou and Peng~\cite{thresholdgames}
is specified by a directed graph $G=(V,E)$. Write $u\to v$
for $(u,v)\in E$, and let $h_v$ and $o_v$ denote the
indegree and outdegree of $v$.

\begin{definition}[Threshold solution]\label{def:source}
For $0<\kappa<1/2$, a $\kappa$-threshold solution of $G$
is a vector $z\in[0,1]^V$ satisfying, at every vertex $v$,
\[
\begin{aligned}
 \sum_{u:u\to v}z_u>\frac12+\kappa
 &\quad\Longrightarrow\quad z_v\le\kappa,\\
 \sum_{u:u\to v}z_u<\frac12-\kappa
 &\quad\Longrightarrow\quad z_v\ge1-\kappa.
\end{aligned}
\]
If the incoming sum belongs to $[1/2-\kappa,1/2+\kappa]$,
any value $z_v\in[0,1]$ is allowed.
\end{definition}

Such a solution always exists. Finding one is PPAD-complete for
every fixed $0<\kappa<1/6$, even when both degrees are at most
two~\cite[Theorem~4.23]{purecircuit}. We fix $\kappa=1/200$.

We also require positive outdegree. This restriction preserves
hardness: for each sink $v$, introduce fresh vertices $s,t$
and edges $v\to s$, $s\to t$, and $t\to s$. All degrees
remain at most two. Since no original incoming neighborhood
changes, every solution of the augmented graph restricts to a
solution of the original graph. Conversely, every original
solution extends by setting $z_s=0$ and $z_t=1$. Hence we
may assume throughout that
\begin{equation}\label{eq:degrees}
 h_v\in\{0,1,2\},\qquad o_v\in\{1,2\}.
\end{equation}

\subsection{HZ equilibrium and additive approximation}

An HZ market has $N\ge1$ buyers and $N$ divisible goods, with unit
budgets and unit supplies. Every buyer must receive a bundle of
total quantity one. Buyer $i$'s utility for a bundle
$x_i=(x_{ij})_{j=1}^N$ is
\[
 U_i(x_i)=\sum_{j=1}^N u_{ij}x_{ij},
\]
where the nonnegative utilities $u_{ij}$ are rational numbers
given in binary. An allocation $x\ge0$ is \emph{feasible} if
\[
 \sum_jx_{ij}=1\quad\text{for every buyer }i,\qquad
 \sum_ix_{ij}=1\quad\text{for every good }j.
\]
A nonnegative price vector $p$ is \emph{normalized} if
$\min_jp_j=0$. Write $p\cdot x_i=\sum_jp_jx_{ij}$ for
expenditure. At normalized prices, the budget-one optimal utility
is the value of
\begin{equation}\label{eq:value}
\begin{aligned}
 V_i(p)=\max_y\; &\sum_j u_{ij}y_j\\
 \text{subject to}\; &\sum_j y_j=1,\\
 &\sum_j p_jy_j\le1,\\
 &y_j\ge0 &&\text{for every good }j.
\end{aligned}
\tag{OptBundle}
\end{equation}
A normalized exact equilibrium is a pair $(p,x)$ with normalized
prices and feasible allocation such that $p\cdot x_i\le1$
and $U_i(x_i)=V_i(p)$ for every buyer $i$.

For approximation, assume $u_{ij}\in[0,1]$ and use the following
notion from~\cite[Definition~2.3, full version]{ccpy}.

\begin{definition}[Approximate equilibrium]\label{def:approx}
For $\eps\ge0$, a normalized additive $\eps$-equilibrium is
a pair $p,x\ge0$ with $\min_jp_j=0$ and feasible $x$ such
that, for every buyer $i$,
\[
 p\cdot x_i\le1+\eps,\qquad U_i(x_i)\ge V_i(p)-\eps.
\]
The case $\eps=0$ is an exact equilibrium.
\end{definition}

The benchmark $V_i(p)$ uses budget one, including for approximate
equilibria.
An exact equilibrium has \emph{minimum-expenditure tie-breaking}
if every buyer receives a cheapest utility-maximizing affordable
unit bundle. A normalized equilibrium with this property exists
for our linear markets~\cite[Theorem~5.1, conference version]{gtv}.
The hardness results use only Definition~\ref{def:approx}; the
two-slope algorithm also satisfies minimum-expenditure tie-breaking.

\subsection{Groups of goods and optimal demand}

We use \emph{groups} of distinct unit-supply goods with identical
utility vectors: for any two goods $j,k$ in a group,
$u_{ij}=u_{ik}$ for every buyer $i$. Assigning utility $c$ to
a group means assigning utility $c$ to each good in it. All groups
in the construction are disjoint. A buyer \emph{approves} a good
if her utility for it is positive. Her quantity and expenditure
on a group are the sums over its goods.

\begin{lemma}[Equalizing prices within a group]\label{lem:copies}
Replacing all prices in a group by their minimum, while keeping
the allocation fixed, preserves a normalized additive
$\eps$-equilibrium with the same error and the same group minima.
\end{lemma}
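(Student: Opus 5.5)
Let $G$ be a group, let $\mu=\min_{j\in G}p_j$, and let $p'$ agree with $p$ outside $G$ and equal $\mu$ on $G$. The allocation $x$ is unchanged. We check the conditions of Definition~\ref{def:approx} one at a time.

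\textbf{Unchanged quantities.} Feasibility of $x$ does not involve prices, so it still holds. Each buyer's utility $U_i(x_i)$ also does not involve prices, so it is unchanged.

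\textbf{Normalization and group minima.} Since $p'\le p$ coordinatewise and $p'\ge0$, we have $\min_jp'_j\le\min_jp_j=0$, hence $\min_jp'_j=0$. Every good keeps its price or takes the value $\mu$, which was already attained within $G$. So the overall minimum and the minimum of every group are preserved, and the procedure may be applied group by group. Groups are disjoint, so processing one group does not change another group's prices.

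\textbf{Budget.} Expenditure can only decrease:
\[
p'\cdot x_i\le p\cdot x_i\le1+\eps .
\]

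\textbf{Utility benchmark.} The key step is to show $V_i(p')=V_i(p)$. Then $U_i(x_i)\ge V_i(p)-\eps=V_i(p')-\eps$ follows immediately.

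Lowering prices enlarges the feasible region of \eqref{eq:value}, so $V_i(p')\ge V_i(p)$.

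For the reverse inequality, take any $y$ feasible at $p'$. Build $\tilde y$ by moving all of $y$'s quantity on $G$ onto a single good $j^*\in G$ with $p_{j^*}=\mu$, and leaving $y$ unchanged off $G$. Then:
\begin{itemize}
\item the total quantity of $\tilde y$ is unchanged, so it is still one;
\item the utility of $\tilde y$ is unchanged, because all goods in $G$ carry the same $u_{ij}$;
\item the cost satisfies $p\cdot\tilde y=p'\cdot y\le1$, because under $p$ the good $j^*$ costs exactly $\mu$.
\end{itemize}
Hence $\tilde y$ is feasible at $p$ with the same utility, so $V_i(p)\ge V_i(p')$.

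This concentration argument is the only nonroutine step, and it uses the identical-utility property of groups in an essential way. Since all conditions hold at the same error $\eps$, this proves Lemma~\ref{lem:copies}.
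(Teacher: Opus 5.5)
Your proposal is correct and follows essentially the same argument as the paper. Expenditures weakly decrease. The benchmark $V_i$ is unchanged because lowering prices only enlarges the feasible set, and any bundle affordable at the new prices can be made affordable at the old ones, at equal cost and utility, by moving its group quantity onto an originally cheapest good of the group.
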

\begin{proof}
Allocated bundles become weakly cheaper. A bundle affordable at
the new prices can be purchased at the old prices by moving its
quantity from the group to an originally cheapest good in that
group. The utility and cost are unchanged, and individual demand
has no supply constraints. Conversely, every bundle affordable
at the old prices remains affordable. Each buyer's optimal value
is therefore unchanged, as are allocation feasibility and price
normalization.
\end{proof}

The lemma allows us to use one price per group. To describe
individual demand, we also use \emph{free filler}: an auxiliary
good of price and utility zero that supplies any unused bundle
quantity. It has no supply constraint in the individual problem.

\begin{lemma}[Demand dual and filler]\label{lem:dual}
Let prices be nonnegative with at least one zero-price good,
and let a buyer's utilities lie in $[0,1]$, with
$\max_j u_{ij}=1$. Replacing her unapproved goods by free
filler preserves her optimal utility. The dual of~\eqref{eq:value}
is
\begin{equation}\label{eq:dual}
\begin{aligned}
 V_i(p)=\min_{\beta_i,\mu_i}\; &\beta_i+\mu_i\\
 \text{subject to}\; &\beta_i p_j+\mu_i\ge u_{ij}
                         &&\text{for every }j,\\
 &\beta_i\ge0,\quad\mu_i\in\mathbb R.
\end{aligned}
\tag{Dual}
\end{equation}
Every optimal dual satisfies $0\le\beta_i,\mu_i\le1$.
Every optimal bundle has zero expenditure on unapproved goods.
If every good valued at one costs more than one, every optimal
bundle exhausts the budget.
\end{lemma}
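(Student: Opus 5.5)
Several claims need proof: that filler preserves the optimum, the dual formulation, the bounds $0\le\beta_i,\mu_i\le1$, zero expenditure on unapproved goods, and budget exhaustion. I would handle them in that order, using only LP duality and elementary exchange arguments on~\eqref{eq:value}.

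\emph{Filler and unapproved expenditure.} Let $j_0$ be a good with $p_{j_0}=0$. Given any feasible $y$, move all quantity on unapproved goods (utility $0$) to $j_0$, or equivalently to free filler. The bundle quantity is unchanged and the cost weakly decreases. The utility is unchanged, or weakly increases if $u_{ij_0}>0$. Conversely, any bundle using filler is realized by placing that quantity on $j_0$, whose utility is $\ge0$. So the optimal value is preserved.

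For the expenditure claim, suppose an optimal $y$ spends a positive amount on an unapproved good $k$ with $p_k>0$. Moving that quantity to $j_0$ frees budget $p_ky_k>0$. If some approved good has positive price-weighted room to improve, this slack can be used. The cleaner route is via complementary slackness below. An optimal $y$ has $y_k>0$ only if $\beta_ip_k+\mu_i=u_{ik}=0$. Together with $\beta_ip_{j_0}+\mu_i\ge u_{ij_0}\ge0$, this gives $\mu_i\ge0$. Then $\beta_ip_k\le0$, so $\beta_i=0$ or $p_k=0$. If $\beta_i=0$, the constraint at a good with utility one gives $\mu_i\ge1$, hence $\mu_i=1$, which contradicts $\beta_ip_k+\mu_i=0$. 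So $p_k=0$, and the expenditure $p_ky_k$ is zero.

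\emph{Dual.} I would write the standard LP dual of~\eqref{eq:value}: multiplier $\mu_i$ (free) for the equality $\sum_jy_j=1$, and multiplier $\beta_i\ge0$ for the budget constraint. The primal is feasible (put $y_{j_0}=1$) and bounded (utilities $\le1$), so strong duality gives the stated min. For bounds on an optimal dual, the constraint at $j_0$ gives $\mu_i\ge u_{ij_0}\ge0$. The feasible dual point $(\beta_i,\mu_i)=(0,1)$, valid because all $u_{ij}\le1$, shows $V_i(p)\le1$. Hence $\beta_i+\mu_i\le1$, and since both are nonnegative, each is at most $1$.

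\emph{Budget exhaustion.} Suppose every good with $u_{ij}=1$ has $p_j>1$. If $\beta_i=0$, dual feasibility at a utility-one good forces $\mu_i\ge1$, so $V_i(p)=1$. A unit bundle achieving utility $1$ would have to place all its mass on utility-one goods, costing more than $1$, which is infeasible. So $V_i(p)<1$, a contradiction by compactness, since the max is attained. Hence every optimal dual has $\beta_i>0$. Complementary slackness then forces the budget constraint to be tight in every optimal primal solution, so every optimal bundle exhausts the budget.

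The main subtlety is in this last step and in the unapproved-expenditure claim: one must argue via \emph{some} optimal dual for \emph{every} optimal primal. This is legitimate because complementary slackness holds between any optimal primal–dual pair. That is the step I would check carefully.
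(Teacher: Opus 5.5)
Your proof is correct and follows essentially the same route as the paper: the filler exchange through a zero-price good; LP duality, where the zero-price constraint gives $\mu_i\ge0$ and strong duality gives $\beta_i+\mu_i=V_i(p)\le1$; complementary slackness, using that $\beta_i$ and $\mu_i$ cannot both vanish when some utility is one; and $V_i(p)<1$ to rule out $\beta_i=0$ for the budget claim. One small point: your step ``hence $\mu_i=1$'' uses the upper bound you only prove later, but that step is unnecessary, since $\mu_i\ge1$ already contradicts $\mu_i=0$.
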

\begin{proof}
Replacing unapproved quantities by filler preserves utility and
weakly reduces cost. Conversely, replacing filler by an actual
zero-price good weakly increases utility at the same cost. Hence
the two problems have the same optimal value.

Linear-programming duality gives~\eqref{eq:dual}, with budget
multiplier $\beta_i$ and quantity multiplier $\mu_i$.
The latter is unrestricted, but the
constraint for a zero-price good implies $\mu_i\ge0$. Strong
duality gives $\beta_i+\mu_i=V_i(p)\le1$, so both multipliers
belong to $[0,1]$. They cannot both be zero because some utility
is one. Consequently, an unapproved good of positive price has
strict dual slack $\beta_i p_j+\mu_i>0$ and cannot occur in
an optimal bundle.

If every good valued at one costs more than one, then
$V_i(p)<1$. A feasible dual with $\beta_i=0$ would require
$\mu_i\ge1$ and could not be optimal. Thus $\beta_i>0$
at every optimum, and complementary slackness forces expenditure
one.
\end{proof}

If all approved goods cost more than one, spending the budget on
goods of maximum utility per dollar uses less than one unit; filler
completes an optimal bundle. A good priced below one can instead
make both constraints bind. We will repeatedly use the following
two-group calculation.

\begin{lemma}[Demand for two groups]\label{lem:demand}
Suppose a buyer values two groups at $a$ and one and all other
goods at zero. Let $q\ge0$ and $p>1$ be their respective
prices. With $x,y$ denoting her quantities from these groups,
the demand problem is
\[
 \max\{ax+y:x,y\ge0,\ x+y\le1,\ qx+py\le1\}.
\]
The optimal pairs are as follows:
\begin{enumerate}
\item If $q>ap$, the unique pair is $(x,y)=(0,1/p)$.
\item If $1\le q<ap$, the unique pair is $(x,y)=(1/q,0)$.
\item If $q<1$ and $q<ap$, both constraints bind and the
      unique pair is
\begin{equation}\label{eq:demand}
 x=\frac{p-1}{p-q},\qquad y=\frac{1-q}{p-q}.
\end{equation}
\item If $q=ap$, every nonnegative pair with $qx+py=1$ and
      $x+y\le1$ is optimal, and these are all the optimal pairs.
\end{enumerate}
In each case, filler supplies quantity $1-x-y$.
\end{lemma}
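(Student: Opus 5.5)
This is a two-variable linear program, so the plan is to read it geometrically and then certify each case with the dual from Lemma~\ref{lem:dual}. The feasible region in the $(x,y)$-plane is bounded by $x,y\ge0$, the quantity line $x+y=1$, and the budget line $qx+py=1$. Because $p>1$, the point $(0,1)$ is infeasible, and the budget line meets the $y$-axis at $(0,1/p)$ with $1/p<1$. The objective $ax+y$ has slope ratio $a$. The budget line has ratio $q/p$, so one unit of the first group gives utility per dollar $a/q$, while the second gives $1/p$.

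The cases then follow by comparing $q$ with $ap$ and with $1$.
\begin{enumerate}
\item If $q>ap$, the second group has strictly higher utility per dollar. I certify $(0,1/p)$ with dual multipliers $\beta=1/p$ and $\mu=0$. Then $\beta p=1$, which is tight on the second group. On the first group, $\beta q=q/p>a$, which gives strict slack. On filler the constraint is $0\ge0$. The dual value $1/p$ equals the primal value. Uniqueness follows because strict slack on the first group forces $x=0$, and then $y\le1/p<1$ is maximized only at $1/p$.
\item If $1\le q<ap$, the budget line meets the $x$-axis at $1/q\le1$, so $(1/q,0)$ is feasible. I certify it with $\beta=a/q$ and $\mu=0$. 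The first group is tight. On the second, $\beta p=ap/q>1$, which gives strict slack and hence $y=0$. Then $x$ is maximized at $1/q$.
\item If $q<1$ and $q<ap$, the budget intercept $1/q$ exceeds $1$, so the optimal vertex is the intersection of the two lines. Solving $x+y=1$ and $qx+py=1$ gives \eqref{eq:demand}, and both coordinates are positive since $p>1>q$. For the dual, I take both constraints tight: $\beta q+\mu=a$ and $\beta p+\mu=1$. This gives $\beta=(1-a)/(p-q)>0$ and $\mu=a-\beta q$. I need $\mu\ge0$ for the filler constraint, which reduces to $a(p-q)\ge(1-a)q$, that is, $ap\ge q$. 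That holds here. Uniqueness holds because any optimum must make both constraints bind, by complementary slackness with $\beta>0$ and, when $\mu>0$, with $\mu>0$. The two lines are not parallel since $q\ne p$, so they meet in a single point. In the boundary subcase $\mu=0$ (when $ap=q$) this argument would need care, but that subcase is excluded by $q<ap$.
\item If $q=ap$, the objective is parallel to the budget line. Every bundle spending $c$ yields utility $c/p$, so the value is at most $1/p$. It is attained exactly by the feasible points with $qx+py=1$. The dual certificate is $\beta=1/p$, $\mu=0$.
\end{enumerate}

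The filler statement is immediate: filler supplies quantity $1-x-y$ at zero price and zero utility, so it affects neither the objective nor the budget.

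The main obstacle is the uniqueness claims and the boundary conditions of case~3, in particular verifying $\mu\ge0$ and the feasibility condition $1/q>1$. Once dual feasibility and strict slack are checked as above, each uniqueness claim follows from complementary slackness.
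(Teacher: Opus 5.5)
Your proposal is correct and follows essentially the same route as the paper. You use the same dual certificates $(\beta,\mu)=(1/p,0)$, $(a/q,0)$, and $((1-a)/(p-q),(ap-q)/(p-q))$, with strict dual slack or complementary slackness giving uniqueness, and for $q=ap$ the same observation that utility equals expenditure divided by $p$.
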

\begin{proof}
If $q>ap$, the dual $(\beta,\mu)=(1/p,0)$ bounds utility
by $1/p$, attained at $(0,1/p)$. The first group has strict
dual slack, so every optimum has $x=0$. If $1\le q<ap$,
the dual $(a/q,0)$ similarly forces $y=0$ and gives $x=1/q$.

If $q<1$ and $q<ap$, use
\[
 \beta=\frac{1-a}{p-q}>0,\qquad
 \mu=\frac{ap-q}{p-q}>0.
\]
Both group constraints are tight in the dual. Complementary
slackness requires $x+y=1$ and $qx+py=1$. Their solution
is~\eqref{eq:demand}; it is nonnegative and attains the dual
value. Finally, when $q=ap$, utility is $(qx+py)/p$ and is
maximized precisely by feasible pairs that spend one.
\end{proof}

\section{The market construction}\label{sec:construction}

Fix a rational $a\in(0,1)$ and a nonempty directed graph
$G=(V,E)$ satisfying~\eqref{eq:degrees}. We construct an HZ market
whose prices encode a solution of the threshold game on $G$.
Each vertex contributes five groups of goods, and each edge
contributes buyers who approve goods at its two endpoints.
The preferences use only $\{0,a,1\}$. We specify these preferences
first and choose the integer supplies and population sizes in
Section~\ref{sec:exact}.

All goods have unit supply. All buyers have budget one and require
one unit of goods in total. Groups and populations are disjoint,
and every utility not specified below is zero.

\subsection{Goods and preferences}

For each vertex $v$, introduce target groups $L_v,M_v,H_v$ with
supplies $S_L,S_M,S_H$, and reference groups
$\mathcal R_{Fv},\mathcal R_{Uv}$ with supplies
$S^R_{Fv},S^R_{Uv}$. The low and middle supplies depend on $o_v$;
we omit that argument from the notation. The high supply is the
same at every vertex.

The vertex has two populations linking consecutive targets.
The $b_v$ lower-link buyers $\mathcal B_v^{LM}$ assign utility
$a$ to $L_v$ and utility one to $M_v$. The $c_v$ upper buyers
$\mathcal B_v^{MH}$ assign utility $a$ to $M_v$ and
$\mathcal R_{Uv}$, and utility one to $H_v$.
We will choose $b_v$ as a function of $h_v$ and $c_v$ as a
function of $o_v$.

There are also $F_v$ floor buyers $\mathcal F_v$, who assign
utility $a$ to $L_v$ and utility one to $M_v$ and
$\mathcal R_{Fv}$. Each reference group has its own anchors:
the $A^R_{Fv}$ buyers $\mathcal A_{Fv}$ approve only
$\mathcal R_{Fv}$, and the $A^R_{Uv}$ buyers $\mathcal A_{Uv}$
approve only $\mathcal R_{Uv}$, in each case at utility one.
Thus the floor reference is approved only by floors and its
anchors, and the upper reference only by upper buyers and its
anchors. Figure~\ref{fig:vertex} records all vertex preferences.

The global group $\mathcal Z$ consists of goods valued at zero
by every buyer. Its size will make the total supply equal to the
number of buyers.

\begin{figure}[htbp]
\centering
\begin{tikzpicture}[x=1cm,y=1cm]
\node[hzgood] (low) at (-5,0) {$L_v$};
\node[hzgood] (mid) at (0,0) {$M_v$};
\node[hzgood] (high) at (5,0) {$H_v$};
\node[hzbuyer] (lower) at (-2.5,0) {$\mathcal B_v^{LM}$};
\node[hzbuyer] (upper) at (2.5,0) {$\mathcal B_v^{MH}$};
\draw[hzedge] (low)--node[hzcoef,above]{$a$}(lower);
\draw[hzedge] (lower)--node[hzcoef,above]{$1$}(mid);
\draw[hzedge] (mid)--node[hzcoef,above]{$a$}(upper);
\draw[hzedge] (upper)--node[hzcoef,above]{$1$}(high);
\node[hzbuyer] (floor) at (-2.5,1.6) {$\mathcal F_v$};
\node[hzgood] (rf) at (0,2.8) {$\mathcal R_{Fv}$};
\node[hzbuyer] (af) at (0,4.2) {$\mathcal A_{Fv}$};
\draw[hzedge] (low)--node[hzcoef,above left]{$a$}(floor);
\draw[hzedge] (floor)--node[hzcoef,above right]{$1$}(mid);
\draw[hzedge] (floor)--node[hzcoef,above]{$1$}(rf);
\draw[hzedge] (rf)--node[hzcoef,right]{$1$}(af);
\node[hzgood] (ru) at (2.5,-2.1) {$\mathcal R_{Uv}$};
\node[hzbuyer] (au) at (-1,-2.1) {$\mathcal A_{Uv}$};
\node[hzgood,draw=black!55,fill=black!3] (zero) at (-5,-2.1) {$\mathcal Z$};
\draw[hzedge] (upper)--node[hzcoef,right]{$a$}(ru);
\draw[hzedge] (ru)--node[hzcoef,above]{$1$}(au);
\end{tikzpicture}
\caption{The vertex gadget at $v$. Solid blue ellipses denote groups of
 goods; dashed orange ellipses denote buyer populations. Each edge label
 gives every buyer's utility for each good in the adjacent group.
The balancing group $\mathcal Z$ is shared by the whole market and
has no positive-utility connections.}
\label{fig:vertex}
\end{figure}

\FloatBarrier

For each directed edge $u\to v$, introduce a population
$\mathcal E_{uv}$ of $K$ wire buyers, each assigning utility $a$
to $L_u$ and utility one to $H_v$. Consequently, $L_v$ has
$o_vK$ outgoing wire approvers and $H_v$ has $h_vK$ incoming
wire approvers. A self-loop contributes one population, approving
$L_v$ and $H_v$; it is counted only once. See
Figure~\ref{fig:edges}.

\begin{figure}[htbp]
\centering
\begin{tikzpicture}
\node[hzgood] (src) at (-4,0) {$L_u$};
\node[hzbuyer] (wire) at (0,0) {$\mathcal E_{uv}$};
\node[hzgood] (dst) at (4,0) {$H_v$};
\draw[hzedge] (src)--node[hzcoef,above]{$a$}(wire);
\draw[hzedge] (wire)--node[hzcoef,above]{$1$}(dst);
\end{tikzpicture}
\caption{The wire gadget for $u\to v$: $K$ buyers connect $L_u$ to $H_v$.
For a self-loop $u=v$, this is one population connecting $L_v$ and $H_v$.}
\label{fig:edges}
\end{figure}

\FloatBarrier

Let $N$ be the total number of buyers and $G_{\rm listed}$ the
number of goods in the five groups per vertex. Add
$N-G_{\rm listed}$ goods to $\mathcal Z$.
Lemma~\ref{lem:balancing} proves that this number is positive for
our parameter choice. The resulting market has $N$ buyers and
$N$ goods. Table~\ref{tab:vertex-populations} lists all groups
and populations.

\begin{table}[htbp]
\centering
\caption{The groups of goods and buyer populations. Each listed utility
applies to every good in the indicated group. All unlisted utilities
are zero.}
\label{tab:vertex-populations}
\renewcommand{\arraystretch}{1.2}
\begin{tabularx}{\textwidth}{|>{\centering\arraybackslash}p{.30\textwidth}|>{\centering\arraybackslash}p{.18\textwidth}|>{\centering\arraybackslash}X|}
\hline
\rowcolor{blue!10}
Good group & Number & Role\\\hline
$L_v$ & $S_L$ & Low target\\\hline
$M_v$ & $S_M$ & Middle target\\\hline
$H_v$ & $S_H$ & High target\\\hline
$\mathcal R_{Fv}$ & $S^R_{Fv}$ & Floor reference\\\hline
$\mathcal R_{Uv}$ & $S^R_{Uv}$ & Upper reference\\\hline
$\mathcal Z$ & $N-G_{\rm listed}$ & Balancing goods; utility zero for every buyer\\\hline
\rowcolor{blue!10}
Buyer population & Number & Nonzero utilities\\\hline
Lower links $\mathcal B_v^{LM}$ & $b_v$ & $a$ on $L_v$; $1$ on $M_v$\\\hline
Upper buyers $\mathcal B_v^{MH}$ & $c_v$ & $a$ on $M_v,\mathcal R_{Uv}$; $1$ on $H_v$\\\hline
Floors $\mathcal F_v$ & $F_v$ & $a$ on $L_v$; $1$ on $M_v,\mathcal R_{Fv}$\\\hline
Floor anchors $\mathcal A_{Fv}$ & $A^R_{Fv}$ & $1$ on $\mathcal R_{Fv}$\\\hline
Upper anchors $\mathcal A_{Uv}$ & $A^R_{Uv}$ & $1$ on $\mathcal R_{Uv}$\\\hline
Wires $\mathcal E_{uv}$, $u\to v$ & $K$ & $a$ on $L_u$; $1$ on $H_v$\\\hline
\end{tabularx}
\end{table}

\FloatBarrier

\subsection{Prices and the decoded variables}

For a price vector, write
\[
 \begin{aligned}
 q_v&=\min_{j\in L_v}p_j,&
 m_v&=\min_{j\in M_v}p_j,&
 p_v&=\min_{j\in H_v}p_j,\\
 r_{Fv}&=\min_{j\in\mathcal R_{Fv}}p_j,&
 r_{Uv}&=\min_{j\in\mathcal R_{Uv}}p_j.
 \end{aligned}
\]
Lemma~\ref{lem:copies} allows us to replace every price in a
group by its minimum, preserving the equilibrium error. We use
these equalized prices in the analysis; the decoder depends only
on the group minima and is therefore unchanged.

The high price $p_v$ represents the variable at $v$.
For endpoints $L<U$, to be chosen in~\eqref{eq:interval}, put
$w=U-L$ and define
\begin{equation}\label{eq:decoder}
 t_v=\frac{p_v-L}{w},\qquad z_v=\clip(t_v).
\end{equation}
A positive rational parameter $\tau\le\kappa/100$ determines
$\eta=\tau w$. We choose the two reference groups so that every
exact equilibrium satisfies
\[
 aL\le r_{Fv}\le a(L+\eta),\qquad
 aU\le r_{Uv}\le a(U+\eta).
\]
These bounds will imply $L\le p_v\le U+\eta$.

Denote the completed market by $\mathcal M(G;a,\tau,K)$, with
all supplies and counts given in Subsection~\ref{sec:choose-K}.
For fixed $a,\tau,K$, each count depends only on the local degrees
and is bounded independently of the graph. The hardness results use one sufficiently small rational value of
$\tau$, denoted $\tau_a$, that depends only on $a$; its existence
is established in Theorem~\ref{thm:robust-decoding}.

\section{Hardness of inverse-linear approximation}\label{sec:exact}

\begin{theorem}[Soundness of the market construction]
\label{thm:soundness}\label{thm:encoding}
For every fixed rational $a\in(0,1)$, there are positive rational
constants $\tau_a,\eps_*(a)$ and an integer $K_a$, depending only
on $a$, with the following property. For every $K\ge K_a$
admissible as in Subsection~\ref{sec:choose-K} and every nonempty
graph $G$ satisfying~\eqref{eq:degrees},
every normalized additive $\eps_*(a)/N$-equilibrium of
$\mathcal M(G;a,\tau_a,K)$ decodes to values satisfying,
at every vertex,
\[
 \begin{aligned}
 \sum_{u\to v}z_u>\tfrac12+\kappa
 &\quad\Longrightarrow\quad z_v\le\kappa,\\
 \sum_{u\to v}z_u<\tfrac12-\kappa
 &\quad\Longrightarrow\quad z_v\ge1-\kappa.
 \end{aligned}
\]
Thus the decoded vector is a solution of the threshold game.
\end{theorem}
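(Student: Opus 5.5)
The plan is to argue in two layers. First I treat an exact equilibrium, and then I carry each step over to an $\eps$-equilibrium with error terms controlled by the total unapproved expenditure $\mathcal W$.

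\textbf{Exact skeleton.} By Lemma~\ref{lem:copies} I work with one price per group. The first step is to pin down the reference prices $aL\le r_{Fv}\le a(L+\eta)$ and $aU\le r_{Uv}\le a(U+\eta)$. This follows from counting the anchors against the reference supplies: the anchors value only their reference, so Lemma~\ref{lem:dual} makes each of them spend exactly one whenever the price exceeds one. Market clearing then gives revenue on the reference that equals its anchor count plus the spending of floor or upper buyers, and these two quantities sandwich the price.

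Next I apply Lemma~\ref{lem:demand} to the lower-link and upper buyers. Since they must buy $M_v$ in positive amounts, case 4 of that lemma forces $q_v=am_v$ and $m_v=ap_v$, and the same clearing counts give $q_v<1<m_v$. The floor/upper comparisons against the references then yield $L\le p_v\le U+\eta$. Hence $|t_v-z_v|\le\tau$.

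For the wires, Lemma~\ref{lem:demand}(3) gives the spending formula for $I_{uv}$. Since $|p_u-p_v|=O_a(1/K)$, I expand it as $3-2t_u+O_a(1/K)$, which is the content of~\eqref{eq:wire-expansion}. Summing revenue over $L_v,M_v,H_v$ with the chosen supplies, the outgoing-wire terms cancel the $2o_vt_v$ dependence, leaving $\Phi_v=2\sum_{u\to v}t_u-1+O_a(1/K)$.

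The case split then goes as follows. If the incoming sum exceeds $1/2+\kappa$, then $\Phi_v\ge 2\kappa-O(\tau+1/K)>0$, so floors buy targets. By Lemma~\ref{lem:demand} this is optimal only when $p_v$ is near $L$, giving $z_v\le\tau\le\kappa$. Symmetrically, if $\Phi_v<0$, upper buyers spend on $\mathcal R_{Uv}$, which forces $ap_v\ge r_{Uv}\ge aU$, so $z_v=1$.

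\textbf{Robust version.} For an $\eps$-equilibrium, the first step is to bound $\mathcal W=O_a(N\eps)$. I do this with the dual of Lemma~\ref{lem:dual}. The construction keeps every buyer's optimal value at most $1-\Omega_a(1)$, and every buyer has $\beta_i$ bounded below. Consequently, each unit of expenditure on inferior or unapproved goods costs a buyer $\Omega_a(1)$ utility per unit of slack, except for the expenditure that the $1+\eps$ budget allowance permits.

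With $\mathcal W$ and $K^2\eps$ small, I then redo each price relation with perturbations. The key tool is a stability version of Lemma~\ref{lem:demand}: an $\eps$-optimal pair lies within $O_a(\eps/\text{gap})$ of the exact demand once $q$ is away from $ap$. This gives $q_v\approx am_v\approx a^2p_v$ and reference bounds up to $O_a(K^2\eps+\mathcal W)$. The revenue identity then acquires an additive error of this size, and taking $\eps=\eps_*(a)/N$ with $N=\Omega_a(K^2)$ makes it smaller than $\kappa/10$.

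The main obstacle is the robust threshold step. I must show that, away from the required endpoint, nearly optimal floor or upper buyers cannot produce the residual $\Phi_v$: each unit they spend there would cost them a utility gap that is linear in the distance from the endpoint. I would first derive this from the explicit dual certificates in Lemma~\ref{lem:demand} and let that gap absorb the $O_a(K^2\eps+\mathcal W)$ error. The last step is to fix $\tau_a$ small, then $K_a$ large, then $\eps_*(a)$ small, in that order.
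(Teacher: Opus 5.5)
\textbf{Gap in the robust layer.} Your bound $\mathcal W=O_a(N\eps)$ rests on the claim that every buyer's budget multiplier $\beta_i$ is bounded below. The utility gap does not give this. For a group $f(i)$ that buyer $i$ values at one, dual feasibility and $\beta_i+\mu_i=V_i(p)\le1-g_a$ yield only $\beta_i\ge g_a/(p_{f(i)}-1)$. Before unapproved spending is controlled, nothing bounds $p_{f(i)}$. A high group has constant supply $S_H$, so stray money can make $p_v$ large. The wire buyers into $v$ then have (when $q_u<1$) mixed dual $\beta=(1-a)/(p_v-q_u)$, and each may spend of order $\eps p_v$ on unapproved goods. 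The lower bounds on $\beta$ that do hold later, such as $\beta_U\ge a/r_{Uv}$ or $\beta\approx1/r_{Fv}$ for floors, come from reference and target price bounds. Those bounds already require small unapproved spending, so using them here is circular.

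The paper keeps the price in the estimate instead. Each buyer's unapproved spending is at most $(2\eps/g_a)p_{f(i)}$, and $\sum_ip_{f(i)}$ is bounded through revenue. Middle and reference groups are selected by $O_a(1)$ buyers per unit of supply, but high groups by $O_a(K)$ buyers per unit. Hence $\sum_ip_{f(i)}=O_a(N+KR_H)$ with $R_H\le O_a(Kn)+\mathcal W$. Together with $N=\Omega_a(K^2n)$ this gives $\mathcal W\le O_a(\eps)(N+K\mathcal W)$, and absorbing the last term is why $K\eps$ must be small.

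Your error scales are also too coarse. Target prices are pinned only to within $O_a(K\eps)$, which must be compared with $\kappa w=\Theta_a(\kappa/K)$. Unapproved spending $\zeta$ on a reference shifts its price only by $\zeta/S^R$, where $S^R=\Theta_a(K^2/\tau)$. Price errors of order $K^2\eps+\mathcal W$ do not shrink as $K$ grows when $\eps=\eps_*(a)/N$, so they are not small against $w$ uniformly in $K\ge K_a$. The term $\mathcal W$ should enter only the revenue identity.

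\textbf{Invalid step in the exact skeleton.} Positive middle purchases by lower links do not place them in case~4 of Lemma~\ref{lem:demand}. They buy middle goods in cases 1, 3 and 4. Case~3 ($q_v<1$, $q_v<am_v$) is exactly the mixed regime in which both goods are bought. The equality $q_v=am_v$ comes instead from clearing with calibrated supplies:
\begin{itemize}
\item If $q_v>am_v$, only the $o_vK$ outgoing wires buy low goods, and they cannot clear $S_L>o_vK$.
\item If $q_v<am_v$, the floor population gives $q_v\ge ar_{Fv}\ge a^2L$, so $1-q_v\le3(1-a^2)/K$. Each lower link then spends less than $3(1+a)/K$ on middle goods. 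Floors spend nothing there and upper buyers at most $c_v$, which cannot cover $m_vS_M\ge aLS_M>c_v+10$.
\end{itemize}
Likewise, before concluding $m_v=ap_v$ you must show that upper buyers buy high goods at all. That needs the bound below $9$ on incoming wire spending on $H_v$, which again uses $q_u\ge a^2L$ and $p_v\ge r_{Fv}/a\ge L$.

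So the floor and reference lower bounds on target prices must come before the link equalities, not only in your final case split. These counting arguments are also exactly what the robust layer perturbs.
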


The exact analysis has two parts. First, market clearing forces
$q_v=a^2p_v$ and $m_v=ap_v$. Second, the resulting wire payments
relate the incoming sum to the expenditure of floor and upper
buyers. When the incoming sum is outside the threshold interval,
this forces $p_v$ to the corresponding lower or upper endpoint.
We derive the necessary size inequalities
in this order. All supplies and buyer counts are positive integers,
and each requirement remains in force after it is introduced.
Subsection~\ref{sec:choose-K} gives a simultaneous choice; the
arithmetic verification is in Appendix~\ref{sec:parameter-checks}.
The subsequent robustness argument proves the theorem from the
same expenditure identities.

\subsection{Reference prices and lower bounds on target prices}
\label{sec:design}\label{sec:prices}

Fix a normalized exact equilibrium and equalize its group prices
using Lemma~\ref{lem:copies}. By Lemma~\ref{lem:dual}, no buyer
spends on goods she values at zero. The following observation
controls each reference price independently of the other groups.

\begin{lemma}[Reference price bounds]\label{lem:generic-reference}
Consider a group of supply $S$ with $A>S$ anchors, each assigning
utility one to this group and zero to every other good. Suppose
at most $T$ other buyers approve the group, and let $E$ be their
total expenditure on it. At an exact equilibrium, its price $r$
satisfies
\[
 rS=A+E,\qquad 0\le E\le T,
 \qquad\text{and hence}\qquad
 \frac AS\le r\le\frac{A+T}{S}.
\]
\end{lemma}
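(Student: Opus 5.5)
We prove the identity $rS=A+E$ by market clearing, and then bound $E$ using the budget constraint of the other approving buyers. Throughout, we work at the equalized prices given by Lemma~\ref{lem:copies}. Every good in the group then has price $r$, and since the group's total supply $S$ is fully allocated, its revenue is exactly $rS$. By Lemma~\ref{lem:dual}, no buyer spends at an exact equilibrium on goods she values at zero. Hence the revenue of the group comes only from the anchors and from the at most $T$ other approvers.

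The first step shows that $r>1$, so that every anchor spends exactly one on the group. Each anchor values only this group, and its total quantity is $A$ while supply is $S<A$. So some anchor must receive quantity strictly below one from the group and fill the rest with goods it values at zero. If $r\le1$, that anchor could afford a full unit of the group and would gain utility. Its bundle would then not be optimal, a contradiction. With $r>1$, the only goods an anchor values at one cost more than one, so Lemma~\ref{lem:dual} says every optimal bundle exhausts the budget. Since anchors spend nothing on zero-utility goods, each anchor spends exactly one on the group, and the anchors contribute $A$ to revenue.

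The remaining revenue is $E$, the total spending of the other approving buyers. It is nonnegative, and each such buyer has budget at most one, so $E\le T$. Market clearing gives $rS=A+E$. Dividing by $S$ and using $0\le E\le T$ yields $A/S\le r\le (A+T)/S$.

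The main subtlety is the step $r>1$, because the budget-exhaustion clause of Lemma~\ref{lem:dual} needs the good valued at one to cost more than one. The pigeonhole argument via $A>S$ handles this. One must also check that an anchor holding less than one unit of the group is genuinely suboptimal when $r\le1$: a full unit of the group costs $r\le1$, so it is affordable, and it gives utility one, which is strictly more than the anchor's current utility.
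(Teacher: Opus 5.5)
Your proof is correct and follows essentially the same route as the paper. You rule out $r\le1$ because every anchor would then need a full unit of the group while $A>S$, use budget exhaustion at $r>1$ so that anchors contribute exactly $A$, bound the other approvers' contribution by their unit budgets, and use Lemma~\ref{lem:dual} to exclude spending by non-approvers, which the paper states just before the lemma.
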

\begin{proof}
If $r\le1$, every anchor has optimal utility one and must receive
one unit from the group. This is impossible because $A>S$.
Thus $r>1$. Each anchor buys $1/r$ units and spends one, while
all other approvers together spend between zero and $T$.
Market clearing gives the identity.
\end{proof}

For $\lambda>1$ and $\eta>0$, the choices
\[
 S=\left\lceil\frac{T+1}{a\eta}\right\rceil,
 \qquad A=\lceil\lambda S\rceil
\]
give $\lambda\le r\le\lambda+(T+1)/S\le\lambda+a\eta$.
Apply this construction with $(\lambda,T)=(aL,F_v)$ and
$(aU,c_v)$, respectively:
\begin{equation}\label{eq:reference-counts}
 \begin{aligned}
 S^R_{Fv}&=\left\lceil\frac{F_v+1}{a\eta}\right\rceil,
 &A^R_{Fv}&=\left\lceil aL S^R_{Fv}\right\rceil,\\
 S^R_{Uv}&=\left\lceil\frac{c_v+1}{a\eta}\right\rceil,
 &A^R_{Uv}&=\left\lceil aU S^R_{Uv}\right\rceil.
 \end{aligned}
\end{equation}
We impose
\begin{equation}\label{eq:reference-requirements}
 aL\ge\frac{1+a}{2a},\qquad
 a(U+\eta)<\frac1a,\qquad L+\eta<U.
\end{equation}
The two reference prices then satisfy
\[
 aL\le r_{Fv}\le a(L+\eta)<aU\le r_{Uv}\le a(U+\eta).
\]
In particular, both references cost more than one, and a floor
buyer can obtain utility $1/r_{Fv}>a$ from her reference alone.

Choose the floor population large enough that the targets cannot
supply a fixed positive quantity to every floor buyer:
\begin{equation}\label{eq:floor-count}
 \theta=\min\left\{a,\frac{1-a}{2}\right\},\qquad
 F_v=\left\lceil\frac{S_L+S_M}{\theta}\right\rceil+1.
\end{equation}

\begin{lemma}[Reference intervals and target floors]
\label{lem:references}\label{lem:floors}
Under~\eqref{eq:reference-counts}, \eqref{eq:reference-requirements},
and~\eqref{eq:floor-count}, the
reference prices lie in the intervals above and
\[
 q_v\ge ar_{Fv},\qquad m_v\ge r_{Fv}.
\]
A floor buyer spends on low goods only if $q_v=ar_{Fv}$, and on
middle goods only if $m_v=r_{Fv}$.
\end{lemma}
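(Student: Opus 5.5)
The proof has two parts. First, the reference intervals follow from Lemma~\ref{lem:generic-reference}. Second, the target floors come from a counting argument on floor buyers combined with the demand dual of Lemma~\ref{lem:dual}.

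For the references, apply Lemma~\ref{lem:generic-reference} to $\mathcal R_{Fv}$ with $A=A^R_{Fv}$, $S=S^R_{Fv}$ and $T=F_v$, since only floor buyers approve it besides its anchors. Then apply it to $\mathcal R_{Uv}$ with $T=c_v$. We need $A>S$, which holds because $A=\lceil\lambda S\rceil\ge\lambda S>S$ when $\lambda=aL>1$ or $\lambda=aU>1$; both follow from \eqref{eq:reference-requirements}, since $aL\ge(1+a)/(2a)>1$. The upper bound $\lambda+(T+1)/S\le\lambda+a\eta$ follows from the ceiling in \eqref{eq:reference-counts}. The lower bound is $A/S\ge\lambda$. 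The chain of inequalities stated before the lemma then follows from $L+\eta<U$.

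For the target floors, consider the dual of a floor buyer. Her approved goods are $L_v$ at utility $a$, and $M_v,\mathcal R_{Fv}$ at utility one. Since $r_{Fv}>1$, buying $1/r_{Fv}$ units of her reference gives $V\ge 1/r_{Fv}>a$; the last inequality holds because $r_{Fv}\le a(U+\eta)<1/a$. Suppose, for contradiction, that $q_v<ar_{Fv}$ or $m_v<r_{Fv}$. In either case the low or middle targets give strictly better utility per dollar than the reference. I would then show, case by case via Lemma~\ref{lem:demand} or a direct dual argument, that every floor buyer's optimal bundle has zero quantity from $\mathcal R_{Fv}$. Moreover, each floor buyer must take at least $\theta$ units from $L_v\cup M_v$.
\begin{itemize}
\item If $m_v<r_{Fv}$ and $m_v\le1$, she buys a full unit of $M_v$.
\item If $m_v<r_{Fv}$ and $m_v>1$, the reference is dominated by $M_v$, so all her value comes from targets. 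Since $V>a$ and utilities are at most one, her target quantity is at least $a$.
\item If $q_v<ar_{Fv}$ and $m_v\ge r_{Fv}$, take the dual solution $(\beta,\mu)$. The reference constraint $\beta r_{Fv}+\mu\ge 1$ would have to be tight for the reference to be bought, while tightness at $L_v$ requires $\beta q_v+\mu=a$. I would check that these give a contradiction, or else that complementary slackness forces positive $L_v$ quantity bounded below. This is exactly where $\theta\le(1-a)/2$ enters: with $V>a$ and utility at most one on any reference or middle quantity, a bundle with low quantity $x<\theta$ cannot beat a bundle tilted to $L_v$.
\end{itemize}
Summing over all $F_v$ floor buyers, they would demand more than $\theta F_v>S_L+S_M$ units from targets, contradicting supply by \eqref{eq:floor-count}.

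Finally, the spending statements follow from the dual. A floor buyer's multipliers satisfy $\beta r_{Fv}+\mu\ge1$, $\beta q_v+\mu\ge a$ and $\beta m_v+\mu\ge1$, with $\beta>0$ by Lemma~\ref{lem:dual} because every good she values at one costs more than one (now $m_v\ge r_{Fv}>1$). Given $q_v\ge ar_{Fv}$ and $m_v\ge r_{Fv}$, I would argue that the constraint at $\mathcal R_{Fv}$ is tight at the optimum; otherwise she spends nothing on the reference and the same counting contradiction applies. Tightness at $L_v$ then gives $\beta(q_v-ar_{Fv})=\mu(a-1)+a\mu-\ldots$; more simply, buying the low target is optimal only if it is no worse per dollar and per unit, which forces $q_v=ar_{Fv}$. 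The middle case is analogous and forces $m_v=r_{Fv}$.

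I expect the main obstacle to be the uniform lower bound $\theta$ on target quantity in the mixed case $q_v<ar_{Fv}$, $q_v$ possibly below one. There the buyer mixes low and reference goods under both constraints, and I would first try Lemma~\ref{lem:demand}'s closed form, treating $\mathcal R_{Fv}$ as the utility-one group.
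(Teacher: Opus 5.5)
Your reference intervals match the paper. Your final dual step also works once cleaned up: tightness at $\mathcal R_{Fv}$ and $L_v$ gives $\beta(q_v-ar_{Fv})=-(1-a)\mu$. The two sides have opposite signs, which forces $q_v=ar_{Fv}$; likewise $\beta m_v+\mu=1=\beta r_{Fv}+\mu$ forces $m_v=r_{Fv}$.

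The gap is the central step: that every optimal floor bundle contains at least $\theta$ target units whenever $q_v<ar_{Fv}$ or $m_v<r_{Fv}$. You leave your third case as ``a contradiction, or else \ldots'', and the contradiction route cannot work. If $q_v<ar_{Fv}$ and $m_v>r_{Fv}$, the middle group is dominated. Lemma~\ref{lem:demand} then applies to $L_v$ and $\mathcal R_{Fv}$ with $q_v<ar_{Fv}<1$ and gives the unique optimum, whose reference quantity $(1-q_v)/(r_{Fv}-q_v)$ is positive. So your intermediate claim that optimal bundles contain no reference goods is false. Your first case has the same flaw. If $m_v<1$, the bundle with $(r_{Fv}-1)/(r_{Fv}-m_v)$ middle units and $(1-m_v)/(r_{Fv}-m_v)$ reference units has utility one and cost one. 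So the buyer need not take a full unit of $M_v$, and her target quantity can be close to $1-1/r_{Fv}$.

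The missing ingredient is the first inequality in \eqref{eq:reference-requirements}, which you use only to get $aL>1$. Since $r_{Fv}\ge aL\ge(1+a)/(2a)$, the budget caps reference quantity at $1/r_{Fv}\le 2a/(1+a)$. Hence any bundle with no zero-utility filler has target quantity at least $1-1/r_{Fv}\ge(1-a)/(1+a)\ge\theta$. This budget cap, not a comparison with a ``tilted'' bundle, is where $\theta\le(1-a)/2$ enters.

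The paper then avoids your case split. If some target is free or strictly better per dollar than the reference, an optimal bundle cannot contain both reference goods and filler. Shifting a little reference spending to that target and reducing filler raises utility, and a free target can simply replace filler. So either the bundle has no reference goods, and its target quantity exceeds $a$ because its utility is at least $1/r_{Fv}>a$, or it has no filler, and the budget bound above applies. Your counting contradiction then goes through. The closed form $x=(r_{Fv}-1)/(r_{Fv}-q_v)\ge 1-1/r_{Fv}$ you planned to try would also settle the mixed case, but only together with this lower bound on $r_{Fv}$.
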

\begin{proof}
The reference bounds follow from Lemma~\ref{lem:generic-reference}.
Write $r=r_{Fv}$ and suppose $q_v<ar$ or $m_v<r$.
One target is then free or has a strictly greater utility-per-dollar
ratio than the reference. An optimal floor bundle cannot contain
both reference goods and zero-utility filler. Indeed, transferring
a small amount of reference expenditure to the better target and
adjusting the filler improves utility; a free target can directly
replace filler.

If the bundle contains no reference goods, its target quantity
exceeds $a$, since its utility is at least $1/r>a$ and every
unit has utility at most one. If it contains no filler, its
reference quantity is at most $1/r$, and its target quantity is
at least
\[
 1-\frac1r\ge\frac{1-a}{1+a}\ge\frac{1-a}{2}.
\]
Thus every floor buyer receives at least $\theta$ target units,
contrary to $F_v\theta>S_L+S_M$.

It follows that the reference has a weakly greater
utility-per-dollar ratio than either target. The bundle consisting
of $1/r$ reference units and free filler attains utility $1/r$.
Any positive target expenditure in an optimal bundle therefore
requires equality of that target's ratio with the reference ratio.
\end{proof}

\subsection{The target-price equalities}

To prove $m_v=ap_v$, we first show that upper buyers must spend
on high goods. This requires a bound on incoming wire expenditure
that does not assume either target-price equality.
Set $\gamma=a^2/(1-a^2)$ and choose
\begin{equation}\label{eq:interval}
 w=U-L=\frac{2}{\gamma K},\qquad
 L=\frac1{a^2}-\frac{3}{\gamma K},\qquad
 U=\frac1{a^2}-\frac{1}{\gamma K},
\end{equation}
\begin{equation}\label{pcp:reference-tolerance}
 \eta=\tau w,\qquad 0<\tau\le\kappa/100.
\end{equation}
Thus $1-a^2L=3(1-a^2)/K$. Together with the price bounds below,
this identity bounds each incoming wire buyer's high expenditure
by $O_a(1/K)$.
Assume $K\ge48$ and require
\begin{equation*}
 (L+\eta)S_H<c_v.
\end{equation*}
For a constant $C>0$, set
\begin{equation}\label{eq:low-supply-choice}
 S_L=o_vK+C,
\end{equation}
with $C<b_v$ and
\begin{equation*}
 \frac{o_vK+b_v}{S_L}<aL.
\end{equation*}
The strict excess $S_L>o_vK$ will also be needed for the lower link.

\begin{lemma}[Preliminary price and wire bounds]
\label{lem:high-floor}\label{lem:coarse}
Under these requirements,
\[
 p_v\ge r_{Fv}/a\ge L,\qquad
 a^2L\le q_v\le\frac{o_vK+b_v}{S_L}<aL.
\]
All incoming wires together spend less than $9$ on $H_v$.
\end{lemma}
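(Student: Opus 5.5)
The three claims will be proved in order: the lower bounds first, then the upper bound on $q_v$, then the wire estimate. Throughout, fix a normalized exact equilibrium with equalized group prices. By Lemma~\ref{lem:dual}, no buyer spends on unapproved goods, so each group's revenue (price times supply, when its price is positive) equals the total expenditure of its approvers.

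\emph{Lower bounds.} The bound $q_v\ge a r_{Fv}\ge a^2L$ is immediate from Lemma~\ref{lem:references} and $r_{Fv}\ge aL$. The same lemma gives $m_v\ge r_{Fv}\ge aL\ge(1+a)/(2a)>1$.

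For $p_v\ge r_{Fv}/a$, I argue by contradiction and assume $p_v<r_{Fv}/a$. Every upper buyer then strictly prefers $H_v$ per dollar to both $M_v$ and $\mathcal R_{Uv}$, because $a/m_v\le a/r_{Fv}<1/p_v$ and $r_{Uv}\ge aU>r_{Fv}$.
\begin{itemize}
\item If $p_v\le1$, every upper buyer can afford and therefore demands a full unit of $H_v$. This needs $c_v$ units, but $c_v>(L+\eta)S_H>S_H$.
\item If $p_v>1$, case~1 of Lemma~\ref{lem:demand} applies to each pair ($M_v$ or $\mathcal R_{Uv}$ against $H_v$), since both of those prices exceed one and exceed $ap_v$. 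So every upper buyer spends her whole budget on $H_v$. Then
\[
 p_vS_H\ \ge\ c_v\ >\ (L+\eta)S_H\ \ge\ p_vS_H,
\]
a contradiction, using $p_v<r_{Fv}/a\le L+\eta$.
\end{itemize}
Hence $p_v\ge r_{Fv}/a\ge L$.

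\emph{Upper bound on $q_v$.} Since $q_v>0$, the group $L_v$ is fully sold, so $q_vS_L$ equals the spending of its approvers. These are the $o_vK$ outgoing wire buyers, the $b_v$ lower links, and the floors. Wires and links each spend at most one, contributing at most $o_vK+b_v$. It therefore suffices to show that floors spend nothing on $L_v$ whenever $q_v>(o_vK+b_v)/S_L$.

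By Lemma~\ref{lem:references}, floor spending on $L_v$ forces $q_v=ar_{Fv}\le a^2(L+\eta)<1$. In that regime I plan a quantity count rather than a revenue count. Every wire and link buyer has $q_v<1$ and $q_v<a\cdot(\text{price of her unit-utility good})$, so case~3 of Lemma~\ref{lem:demand} gives each of them the explicit quantity $x=(P-1)/(P-q_v)$ of $L_v$. Floors at the tie buy a positive amount as well. The aim is to show that the total quantity demanded of $L_v$ is then incompatible with supply $S_L=o_vK+C$ at such a low price, using $C<b_v$.

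This step is the one I expect to be the main obstacle. The fallback is to reuse the $\theta$-counting argument from the proof of Lemma~\ref{lem:references}. Once the floor bound is in hand, the conclusion $q_v<aL$ is simply the imposed inequality $(o_vK+b_v)/S_L<aL$.

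\emph{Wire bound.} Fix an incoming edge $u\to v$. From the bounds above, $q_u<aL\le ap_v$ and $q_u<1$, while $p_v\ge L>1$. So case~3 of Lemma~\ref{lem:demand} applies, and each wire buyer spends
\[
 \frac{p_v(1-q_u)}{p_v-q_u}\ \le\ \frac{L(1-q_u)}{L-q_u}
\]
on $H_v$. The inequality holds because the left side is decreasing in $p_v$ and $p_v\ge L$.

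Now substitute $1-q_u\le1-a^2L=3(1-a^2)/K$ and bound $L-q_u$ from below using the upper bound on $q_u$ proved above. This gives a per-buyer bound of $O_a(1/K)$. There are at most $2K$ incoming wire buyers, and the aim is a total below $9$.

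A crude denominator bound $L-q_u\ge(1-a)L$ only yields a total below $12$. So the final constant must come from the sharper estimate on $q_u$, namely $q_u\le(o_uK+b_u)/S_L$ together with the chosen $C$, and from $K\ge48$. That arithmetic is routine once the parameters are fixed.
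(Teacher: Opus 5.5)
\textbf{Gap: the upper bound on $q_v$.} Your arguments for $p_v\ge r_{Fv}/a\ge L$ and for $q_v\ge ar_{Fv}\ge a^2L$ are the paper's. However, the upper bound on $q_v$ is left open, and the quantity count you propose for it would not go through.

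The missing step is one line, and you already hold both pieces. Floor spending on $L_v$ forces $q_v=ar_{Fv}\le a^2(L+\eta)<1$. Meanwhile $C<b_v$ gives $S_L=o_vK+C<o_vK+b_v$, that is, $(o_vK+b_v)/S_L>1$. So if floors buy low goods, $q_v<1<(o_vK+b_v)/S_L$. If they do not, your revenue count gives $q_vS_L\le o_vK+b_v$. This two-case split is the paper's whole proof of the bound.

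Your quantity count instead applies case~3 of Lemma~\ref{lem:demand} to the lower links, which requires $q_v<am_v$. You only have $q_v=ar_{Fv}\le am_v$. In fact Lemma~\ref{lem:lower-link} later shows $q_v=am_v$ at every exact equilibrium, so case~4 applies and a link's division between $L_v$ and $M_v$ is not determined.

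Moreover, nothing in the construction excludes floors buying $L_v$ at the tie $q_v=ar_{Fv}$. When $p_v=r_{Fv}/a$, they are indifferent among $L_v$, $M_v$, and $\mathcal R_{Fv}$, and this is the $\Phi_v>0$ branch of the decoding. So that regime gives no supply contradiction by itself. The only available contradiction is with the assumption $q_v>(o_vK+b_v)/S_L$, and it comes from prices, not quantities. The $\theta$-counting fallback cannot help either, since it only rules out the strict inequalities $q_v<ar_{Fv}$ and $m_v<r_{Fv}$.

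\textbf{Wire bound.} Here $q_u<1$ does not follow from the bounds above, because the upper bound $(o_uK+b_u)/S_L$ you proved exceeds one. Split cases as the paper does.
\begin{itemize}
\item If $q_u\ge1$, then $1\le q_u<ap_v$, and case~2 of Lemma~\ref{lem:demand} gives zero high expenditure.
\item If $q_u<1$, case~3 applies. The denominator needs nothing beyond $q_u<1$: from $p_v-q_u>p_v-1$ and $p_v/(p_v-1)\le L/(L-1)$, the total is at most $2KL(1-a^2L)/(L-1)=6a^2LK/(K-3)<6K/(K-3)\le 6.4$ for $K\ge48$.
\end{itemize}
Alternatively, $L(1-q)/(L-q)$ is decreasing in $q$ because $L>1$, so $q_u\ge a^2L$ gives at most $(1-a^2L)/(1-a^2)=3/K$ per buyer. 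Bounding $L-q_u$ via $q_u\le(o_uK+b_u)/S_L$ is weaker than $L-q_u>L-1$, not sharper. So your diagnosis of where the constant comes from is backwards, although the rest of your wire argument is sound.
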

\begin{proof}
Suppose $p_v<r_{Fv}/a$. The floor bounds and reference intervals
give $m_v>ap_v$ and $r_{Uv}>ap_v$. Both alternatives cost
more than one. Each upper buyer therefore demands
$1/\max\{1,p_v\}$ high units. Since $p_v<L+\eta$ and
$L+\eta>1$, their total demand exceeds
$c_v/(L+\eta)>S_H$, a contradiction.

The floor bound gives $q_v\ge ar_{Fv}\ge a^2L$.
If floors spend on low goods, then $q_v=ar_{Fv}<1$.
Otherwise the entire low revenue comes from lower links and
outgoing wires, so $q_vS_L\le b_v+o_vK$. Because $C<b_v$,
$(o_vK+b_v)/S_L>1$, and the stated upper bound holds in both cases.

An incoming wire faces a low price $q<aL\le ap$ and a high
price $p\ge L$. If $q\ge1$, it buys no high goods.
If $q<1$, Lemma~\ref{lem:demand} gives high expenditure
$p(1-q)/(p-q)$. Using $q\ge a^2L$, $q<1$, and $p\ge L$,
\[
 \frac{p(1-q)}{p-q}
 \le\frac{L}{L-1}(1-a^2L)
 =\frac{3a^2wL}{2(L-1)}
 <\frac{2a^2wL}{L-1}.
\]
There are at most $2K$ incoming buyers, whose total high
expenditure is therefore at most
\[
 \frac{4Ka^2wL}{L-1}
 =8+\frac{24a^2}{K-3}<9.
\]
\end{proof}

We now impose the remaining inequalities used to prove the links:
\begin{equation}\label{eq:link-requirements}
 LS_H>10,\qquad aLS_M>c_v+10,\qquad c_v>b_v,\qquad
 \frac{3b_v(1+a)}K<1.
\end{equation}
The first requires more high revenue than incoming wires can
provide. The second requires more middle revenue than upper
buyers can provide alone. These two facts rule out the strict
price comparisons in the next two lemmas.

\begin{lemma}[The upper link]\label{lem:upper-link}
The middle and high prices satisfy
\[
 m_v=ap_v,\qquad p_v\le r_{Uv}/a.
\]
\end{lemma}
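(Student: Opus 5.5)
The plan is a contradiction argument in each direction. Throughout, use market clearing in the form "revenue of a group $=$ price $\times$ supply". Exact feasibility sells every unit, so $H_v$ earns $p_vS_H$ and $M_v$ earns $m_vS_M$. By Lemma~\ref{lem:dual}, no buyer spends on goods she values at zero, so the only approvers are those listed in Table~\ref{tab:vertex-populations}. The price facts used are the following. By Lemma~\ref{lem:high-floor}, $p_v\ge r_{Fv}/a\ge L>1$. By Lemma~\ref{lem:floors}, $m_v\ge r_{Fv}\ge aL>1$. By the reference intervals, $r_{Uv}\ge aU>1$.

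\emph{Step 1: $p_v\le r_{Uv}/a$.} Suppose instead that $r_{Uv}<ap_v$. Each upper buyer then has an alternative valued at $a$, namely $\mathcal R_{Uv}$, that costs more than one and whose ratio $a/r_{Uv}$ strictly exceeds $1/p_v$.

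Since every approved good of an upper buyer costs more than one, her optimal bundle is budget-limited and uses filler. Only goods of maximal utility per dollar can appear, as in cases~1--2 of Lemma~\ref{lem:demand} or directly from the dual of Lemma~\ref{lem:dual} with $\mu_i=0$. So $H_v$ carries strict dual slack, and upper buyers spend nothing on $H_v$.

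The high revenue then comes only from incoming wires, which spend less than $9$ by Lemma~\ref{lem:high-floor}. But $p_vS_H\ge LS_H>10$ by~\eqref{eq:link-requirements}, a contradiction.

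\emph{Step 2: $m_v\ge ap_v$.} If $m_v<ap_v$, the identical argument applies with $M_v$ in place of $\mathcal R_{Uv}$, since $m_v>1$. Upper buyers again buy no high goods, and $LS_H>10>9$ gives the same contradiction.

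\emph{Step 3: $m_v\le ap_v$.} Suppose $m_v>ap_v$. Then $a/m_v<1/p_v$, so upper buyers spend nothing on $M_v$.

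Floor buyers spend on $M_v$ only if $m_v=r_{Fv}$ (Lemma~\ref{lem:floors}). However, $m_v>ap_v\ge r_{Fv}$, so floors spend nothing on $M_v$ either. The only remaining approvers are the $b_v$ lower-link buyers, who spend at most $b_v$ in total.

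Hence $m_vS_M\le b_v<c_v$. Yet $m_vS_M\ge aLS_M>c_v+10$ by~\eqref{eq:link-requirements}, a contradiction.

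Combining Steps~2 and~3 gives $m_v=ap_v$. I expect no step to be a real obstacle; the only care needed is in the claim used in Steps~1 and~2. That claim says an upper buyer facing three approved groups, all priced above one, buys only those of maximal utility per dollar. It must be justified for three groups rather than the two of Lemma~\ref{lem:demand}. I would justify it via the dual certificate $(\beta,\mu)=(\max_j u_{ij}/p_j,\,0)$, which is feasible because all approved prices exceed one. This gives strict slack on every non-maximizing group, and the conclusion follows by complementary slackness.
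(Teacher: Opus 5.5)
Your proof is correct and follows the paper's argument. The paper likewise compares $LS_H>10$ with the incoming-wire bound of $9$ to force some upper buyer to buy high goods, which yields $m_v\ge ap_v$ and $r_{Uv}\ge ap_v$ at once (you split this into two contradictions), and its proof of $m_v\le ap_v$ is your Step~3 via floors and $aLS_M>c_v>b_v$. Your explicit dual certificate $(\max_j u_{ij}/p_j,\,0)$ also justifies a claim the paper only asserts: an upper buyer facing approved prices all above one spends only on groups of maximal utility per dollar.
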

\begin{proof}
High revenue is at least $LS_H>10$, and incoming wires contribute
less than $9$. Hence some upper buyer purchases high goods.
All her approved goods cost more than one, so she spends only
on groups maximizing utility per dollar. Her positive high
expenditure implies $m_v\ge ap_v$ and $r_{Uv}\ge ap_v$.

If $m_v>ap_v$, upper buyers purchase no middle goods.
Floor buyers purchase none either, because
$m_v>ap_v\ge r_{Fv}$. Lower links would then have to provide
all middle revenue. Their total budget $b_v$ is smaller than
$m_vS_M\ge aLS_M>c_v>b_v$, a contradiction.
Thus $m_v=ap_v$.
\end{proof}

\begin{lemma}[The lower link]\label{lem:lower-link}
The low price satisfies $q_v=am_v=a^2p_v<1$.
\end{lemma}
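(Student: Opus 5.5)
Our strategy mirrors the proof of Lemma~\ref{lem:upper-link}: we locate a buyer population whose positive spending certifies one inequality, and we then rule out the opposite strict inequality by a revenue count. The population used here is the lower links $\mathcal B_v^{LM}$, comparing $L_v$ at utility $a$ with $M_v$ at utility one. By Lemma~\ref{lem:upper-link} we already know $m_v=ap_v\ge ar_{Fv}/a=r_{Fv}\ge aL>1$. So every lower link faces a middle price above one, and Lemma~\ref{lem:demand} applies to her with $(q,p)=(q_v,m_v)$.

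\emph{Step 1: $q_v\le am_v$.} Suppose instead that $q_v>am_v$. By case~1 of Lemma~\ref{lem:demand}, each lower link then buys only middle goods and buys no low goods. Floors cannot spend on low goods either. Indeed, Lemma~\ref{lem:references} would require $q_v=ar_{Fv}\le am_v$, which is impossible. Incoming wires do not approve $L_v$. Hence all low revenue comes from the $o_vK$ outgoing wires. Each such wire faces the low price $q_v$ at utility $a$ and a high price $p_u\ge L$ at utility one. Low revenue is $q_vS_L\ge a^2L\,S_L$, and $S_L=o_vK+C$. I plan to use the identity $1-a^2L=3(1-a^2)/K$ to show that a wire buyer spends on $L_v$ at most $1$, and in fact at most $1-\Omega_a(1/K)$ when $q_v<1$. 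Then the $o_vK$ wires provide at most $o_vK$, whereas $a^2L(o_vK+C)=o_vK+a^2LC-3(1-a^2)o_v$ exceeds $o_vK$ as soon as $C$ is large relative to $6(1-a^2)/(a^2L)$. This is a size requirement on $C$ which I expect to be part of the admissible choice. Together these give a contradiction. If instead $q_v\ge1$, each wire spends at most one, and $q_vS_L>o_vK$ contradicts $S_L>o_vK$ directly. This is the ``strict excess'' mentioned after~\eqref{eq:low-supply-choice}.

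\emph{Step 2: $q_v\ge am_v$.} Suppose $q_v<am_v=a^2p_v$. First, $q_v<1$: this follows from the bound $q_v<aL$ of Lemma~\ref{lem:high-floor} whenever $aL\le1$. If that bound is not sufficient, I would argue as follows. If $q_v\ge1$, then case~2 of Lemma~\ref{lem:demand} shows that lower links buy no middle goods, and I would obtain a contradiction from middle revenue as below. So assume $q_v<1$. By case~3, each lower link spends $m_v(1-q_v)/(m_v-q_v)$ on $M_v$. The floors buy middle goods only if $m_v=r_{Fv}$. The upper buyers spend at most $c_v$ on $M_v$. Middle revenue is $m_vS_M\ge aLS_M>c_v+10$. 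This is where I expect the main obstacle to lie. It is not enough to show that lower links cannot fill the gap; we must also exclude floors buying middle goods. The way out is that $q_v<am_v$ together with $m_v=r_{Fv}$ gives $q_v<ar_{Fv}$, contradicting Lemma~\ref{lem:references}. Hence floors spend nothing on $M_v$. Lower links then supply at most $b_v<c_v$, and upper buyers at most $c_v$. The required revenue exceeds $c_v+10$, while the available revenue is at most $b_v+c_v$. To close this gap, I would rather use the bound~\eqref{eq:link-requirements}, $3b_v(1+a)/K<1$, which makes lower-link middle spending small. Alternatively, I would compare revenue more carefully by showing that the upper buyers' middle spending is forced, through $H_v$ clearing, to be at most $c_v-(LS_H-9)$. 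This gives total middle spending at most $c_v-1+b_v$, which is below $aLS_M$. The cleanest route is probably the counting argument via $H_v$ clearing.

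\emph{Step 3: conclude $q_v<1$.} With $q_v=am_v=a^2p_v$, Lemma~\ref{lem:high-floor} gives $q_v\le(o_vK+b_v)/S_L$. Combined with $q_v=a^2p_v$, $p_v\le r_{Uv}/a\le U+\eta<1/a^2$ by~\eqref{eq:reference-requirements} and~\eqref{eq:interval}. Hence $q_v<1$.
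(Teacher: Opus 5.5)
Your Steps 1 and 3 are sound. In Step 1, the paper counts quantity rather than revenue: each of the $o_vK$ outgoing wires receives at most one unit, so they cannot clear $S_L>o_vK$ goods. That version needs neither $q_v\ge a^2L$ nor a size condition on $C$, but your revenue version also works, since $a^2LC>6(1-a^2)$ for $C=16D^2$.

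There is a genuine gap in Step 2. You never show that the lower links spend little on $M_v$, and the route you call cleanest cannot work. The supplies give $aLS_M=a^2L(c_v+C)<c_v+C$, whereas $b_v\ge b-5=4C-5$. Hence $c_v-1+b_v$ lies far above $aLS_M$. Even the sharper bound $c_v-(LS_H-9)+b_v\approx c_v+2C$ still exceeds it. This is structural, not a matter of constants. At an actual equilibrium, middle revenue is about $c_v+C$. The upper buyers must cover about $2C$ of high revenue, so they can supply only about $c_v-2C$ to $M_v$. The lower links therefore spend roughly $3C$ there. No argument that bounds their middle spending only by their budget $b_v$ can exclude $q_v<am_v$.

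The missing step is the per-buyer estimate that uses the strict inequality. Since $q_v<am_v$, you have $m_v-q_v>(1-a)m_v$. Lemma~\ref{lem:high-floor} gives $q_v\ge a^2L$, so $1-q_v\le1-a^2L=3(1-a^2)/K$. Case~3 of Lemma~\ref{lem:demand} then bounds each lower link's middle expenditure:
\[
 \frac{m_v(1-q_v)}{m_v-q_v}<\frac{1-q_v}{1-a}\le\frac{3(1+a)}{K}.
\]
Only now does \eqref{eq:link-requirements} make total lower-link middle spending less than one. Floors are excluded by your observation that $q_v\ge ar_{Fv}$ forces $m_v>r_{Fv}$, and upper buyers contribute at most $c_v$. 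Middle revenue is then below $c_v+1<aLS_M$, which is the contradiction. You cite \eqref{eq:link-requirements} but never derive this bound, and you set it aside in favour of the failing $H_v$ route.

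Finally, your detour to get $q_v<1$ in Step 2 is unnecessary. Here $aL>1$, so the bound $q_v<aL$ never helps. Instead, $q_v<am_v=a^2p_v\le a^2(U+\eta)<1$ follows at once from Lemma~\ref{lem:upper-link}. This makes your separate case-2 treatment of $q_v\ge1$ valid but superfluous.
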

\begin{proof}
Suppose first that $q_v>am_v$. Lower links then purchase no
low goods. Since $m_v\ge r_{Fv}$, floor buyers also purchase
none. The $o_vK$ outgoing wire buyers cannot clear
$S_L>o_vK$ low goods because each receives only one unit in total.
Consequently,
\[
 q_v\le am_v=a^2p_v\le a^2(U+\eta)<1.
\]

Suppose instead that $q_v<am_v$. Since $q_v\ge ar_{Fv}$,
we have $m_v>r_{Fv}$, and floor buyers purchase no middle
goods. Lemma~\ref{lem:demand} bounds each lower buyer's middle
expenditure by
\[
 \frac{m_v(1-q_v)}{m_v-q_v}
 <\frac{1-q_v}{1-a}
 \le\frac{1-a^2L}{1-a}
 =\frac{3(1+a)}K.
\]
By~\eqref{eq:link-requirements}, lower links contribute less
than one in total, and upper buyers contribute at most $c_v$.
This is insufficient for the middle revenue
$m_vS_M\ge aLS_M>c_v+10$. Hence $q_v=am_v$.
\end{proof}

\begin{proposition}[The price chain]\label{prop:chain}
Every normalized exact equilibrium has equal prices within each
group and satisfies
\[
 q_v=am_v=a^2p_v,\qquad q_v<1<m_v<p_v,\qquad
 \frac{r_{Fv}}a\le p_v\le\frac{r_{Uv}}a.
\]
Every buyer spends one. Lower links spend only on the targets.
Upper buyers spend only on the targets when $p_v<r_{Uv}/a$;
floor buyers spend on targets only when $p_v=r_{Fv}/a$.
\end{proposition}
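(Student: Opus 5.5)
The plan is to assemble the proposition from the lemmas of this section, with only the spending claims needing new arguments.

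\emph{Prices.} By Lemma~\ref{lem:copies} we may equalize group prices without loss, so each group has a single price. Lemma~\ref{lem:lower-link} gives $q_v=am_v=a^2p_v<1$, and Lemma~\ref{lem:upper-link} gives $m_v=ap_v$ and $p_v\le r_{Uv}/a$. Lemma~\ref{lem:high-floor} gives $p_v\ge r_{Fv}/a\ge L$. Then $m_v=ap_v\ge r_{Fv}\ge aL>1$ by \eqref{eq:reference-requirements}, and $p_v=m_v/a>m_v$, so $q_v<1<m_v<p_v$.

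\emph{Every buyer spends one.} I will go population by population and use Lemma~\ref{lem:dual}: a buyer all of whose utility-one goods cost more than one exhausts her budget.

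Anchors value only a reference group priced above one, so each spends one. Upper buyers value $H_v$ at one, and $p_v>1$, so they spend one. Floor buyers value $M_v$ and $\mathcal R_{Fv}$ at one, both priced above one, so they spend one. Lower links value $M_v$ at one with $m_v>1$, and wires value $H_v$ at one with $p_v>1$, so both spend one. Zero-utility goods receive no spending by Lemma~\ref{lem:dual}. This gives the first spending claim, and also that lower links spend only on $L_v,M_v$.

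\emph{Upper buyers.} All their approved goods cost more than one, so they spend only on groups maximizing utility per dollar. When $p_v<r_{Uv}/a$, the reference ratio $a/r_{Uv}$ is strictly below the high ratio $1/p_v$. Hence they buy no reference goods and spend only on the targets $M_v,H_v$; the middle and high ratios tie since $m_v=ap_v$.

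\emph{Floor buyers.} Lemma~\ref{lem:floors} says floors spend on low goods only if $q_v=ar_{Fv}$, and on middle goods only if $m_v=r_{Fv}$. By the price chain, each of these conditions is equivalent to $p_v=r_{Fv}/a$. So floors spend on targets only when $p_v=r_{Fv}/a$.

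I expect no serious obstacle, since the substance lies in the earlier lemmas. The main care point is verifying that each population's utility-one goods indeed cost more than one, so that Lemma~\ref{lem:dual} applies; this in turn relies on $m_v>1$ coming from $m_v\ge r_{Fv}\ge aL>1$.
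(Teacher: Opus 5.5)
\textbf{Gap: the claim that every exact equilibrium has equal prices within each group is never proved.}
Your derivation of the price relations and of the spending claims is correct and matches the paper's argument, but it only works at the equalized prices. Lemma~\ref{lem:copies} says only that replacing each group's prices by their minimum, with $x$ fixed, gives another exact equilibrium $(p',x)$. It does not say $p=p'$. So ``we may equalize group prices without loss, so each group has a single price'' does not prove this part of the statement. The price relations survive, because they concern group minima and equalization leaves the minima unchanged. The expenditure claims, however, are established only for $(p',x)$, not for the original pair. These are the claims that every buyer spends one and where each population spends.

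The missing step is a revenue comparison, available only after you know every buyer spends one at $p'$. All goods have unit supply and $x$ is feasible, so $\sum_j p'_j=\sum_i p'\cdot x_i=N$. At the original prices, the budget constraints give $\sum_j p_j=\sum_i p\cdot x_i\le N$. Also $p_j\ge p'_j$ for every good, since $p'_j$ is a group minimum. Hence
\[
N\ge\sum_j p_j\ge\sum_j p'_j=N,
\]
which forces $p_j=p'_j$ for every $j$. So no price decreased under equalization, and the original prices were already equal within each group. Everything you proved at $p'$ then holds for the given equilibrium. With this step added, your proof coincides with the paper's.
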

\begin{proof}
The preceding lemmas give all the price relations for the
equalized equilibrium. Every group assigned utility one has price
above one. Lemma~\ref{lem:dual} therefore implies that every
buyer spends her full budget.

Lower links approve only targets. If $p_v<r_{Uv}/a$, the upper
reference has a strictly smaller utility-per-dollar ratio than
either upper target, so upper buyers spend only on the targets.
At $p_v=r_{Uv}/a$, all their approved groups have the same
ratio; any division of the unit budget among them is optimal and
can be completed with free filler. For floor buyers, the target
spending equalities in Lemma~\ref{lem:floors} are both equivalent
to $p_v=r_{Fv}/a$.

Finally, the equalized prices have total revenue $N$. The original
prices are no smaller, while their total revenue is at most $N$.
Thus no price decreased under equalization, and the original
prices were already equal within each group.
\end{proof}

\subsection{Expenditure identities and exact decoding}\label{sec:clearing}

The price chain gives an explicit expression for each wire payment.
For $u\to v$, we have $q_u=a^2p_u<1<p_v$ and $q_u<ap_v$.
By Lemma~\ref{lem:demand}, every wire buyer receives the following
quantities from $L_u$ and $H_v$, respectively:
\begin{equation}\label{eq:wire-quantities}
 x=\frac{p_v-1}{p_v-q_u},\qquad
 y=\frac{1-q_u}{p_v-q_u}.
\end{equation}
The $K$ buyers together spend $I_{uv}$ on $H_v$ and $K-I_{uv}$
on $L_u$, where
\begin{equation}\label{eq:wire-dollars}
 I_{uv}=\frac{Kp_v(1-a^2p_u)}{p_v-a^2p_u}.
\end{equation}
Recall $t_u=(p_u-L)/w$ and $z_u=\clip(t_u)$.
The reference bounds imply $0\le t_u\le1+\tau$ and
$0\le t_u-z_u\le\tau$. Substituting $p_u=L+wt_u$ into
\eqref{eq:wire-dollars}, and using
$K(1-a^2p_u)=(1-a^2)(3-2t_u)$, gives
\begin{equation}\label{eq:wire-expansion}
 I_{uv}=3-2t_u+e_{uv},\qquad
 e_{uv}=\frac{a^2(3-2t_u)(p_u-p_v)}{p_v-a^2p_u}.
\end{equation}
Since $p_u,p_v\in[L,U+\eta]$ and $p_v-a^2p_u\ge L-1$,
\begin{equation*}
 |e_{uv}|\le\frac{3a^2(w+\eta)}{L-1}=O_a(1/K).
\end{equation*}
For a self-loop, $e_{vv}=0$.

At vertex $v$, let $R_v$ be the total revenue of its three
target groups, $E_v$ their receipts from wires, and $W_v,V_v$
the expenditures of floor and upper buyers on these targets.
Define $\Phi_v$ to be target revenue minus wire receipts and the
full budgets of the two link populations. Explicitly,
\[
 \begin{aligned}
 R_v&=q_vS_L+m_vS_M+p_vS_H
      =p_v(a^2S_L+aS_M+S_H),\\
 W_v&\ge0,\qquad 0\le V_v\le c_v,
 \end{aligned}
\]
\begin{equation}\label{eq:E}
 E_v=\sum_{v\to t}(K-I_{vt})+\sum_{u\to v}I_{uv},
\end{equation}
\begin{equation}\label{eq:F}
 \Phi_v:=R_v-E_v-b_v-c_v=W_v+(V_v-c_v).
\end{equation}
The equality in~\eqref{eq:F} is market clearing: the lower links
spend exactly $b_v$ on targets, and the remaining target receipts
are $E_v,W_v,V_v$.

Equation~\eqref{eq:F} relates the sign of $\Phi_v$ to the
location of $p_v$. If $\Phi_v>0$, then $W_v>0$, so floors
purchase targets. If $\Phi_v<0$, then $V_v<c_v$, so upper
buyers spend on their reference. Proposition~\ref{prop:chain}
therefore implies
\[
 \Phi_v>0\quad\Longrightarrow\quad p_v=r_{Fv}/a\le L+\eta,
 \qquad
 \Phi_v<0\quad\Longrightarrow\quad p_v=r_{Uv}/a\ge U.
\]
It remains to express $\Phi_v$ in terms of the incoming variables.
Writing $e_v=\sum_{u\to v}e_{uv}-\sum_{v\to t}e_{vt}$,
\eqref{eq:wire-expansion} yields
\[
 E_v=o_v(K-3+2t_v)+3h_v-2\sum_{u\to v}t_u+e_v.
\]
We require the target revenue and link budgets to satisfy
\[
 a^2S_L+aS_M+S_H=o_v\gamma K,\qquad
 b_v+c_v=o_v\gamma K+1-3h_v.
\]
Because $\gamma Kw=2$ and $\gamma K(L-1)=K-3$, these identities
cancel the term $2o_vt_v$ from $\Phi_v$ and leave the threshold
constant $1/2$ after division by two. Both identities and
\eqref{eq:low-supply-choice} hold if, for a constant $b>2C$, we set
\begin{equation}\label{eq:target-size-choice}
 \begin{aligned}
 b_v&=b+1-3h_v,& c_v&=o_v\gamma K-b,\\
 S_L&=o_vK+C,& S_M&=a(c_v+C),& S_H&=a^2(b-2C).
 \end{aligned}
\end{equation}
Indeed, $R_v=o_v\gamma Kp_v=o_v\gamma K+o_v(K-3+2t_v)$,
and subtraction gives
\begin{equation}\label{eq:canceled}
 \Phi_v=2\sum_{u\to v}t_u-1-e_v.
\end{equation}
A self-loop contributes $t_v$ to this incoming sum, as required
by the threshold game.

There are at most four incident wire errors and two incoming
clipping errors. Thus
\begin{equation}\label{eq:threshold-bound}
 \left|\frac{\Phi_v}{2}
       -\left(\sum_{u\to v}z_u-\frac12\right)\right|
 \le\frac{6a^2w(1+\tau)}{L-1}+2\tau.
\end{equation}
The parameter choice in the next subsection makes the right-hand
side less than $\kappa/10$; see~\eqref{eq:final-checks}.

\begin{proposition}[Exact decoding]\label{prop:exact}
For that parameter choice, every exact equilibrium decodes to
values satisfying
\begin{align*}
 \sum_{u\to v}z_u\ge\frac12+\frac\kappa2
 &\quad\Longrightarrow\quad z_v\le\frac\kappa4,\\
 \sum_{u\to v}z_u\le\frac12-\frac\kappa2
 &\quad\Longrightarrow\quad z_v\ge1-\frac\kappa4.
\end{align*}
\end{proposition}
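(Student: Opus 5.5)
We combine the sign analysis of $\Phi_v$ from \eqref{eq:F} with the approximation bound \eqref{eq:threshold-bound}. Fix an exact equilibrium. By Proposition~\ref{prop:chain}, prices are equal within groups and the price chain holds, so the wire formulas \eqref{eq:wire-quantities}--\eqref{eq:wire-expansion} apply at every edge. The identity \eqref{eq:canceled} then expresses $\Phi_v$ in terms of the incoming $t_u$ and the wire errors. By the parameter choice cited at \eqref{eq:final-checks}, the right-hand side of \eqref{eq:threshold-bound} is below $\kappa/10$.

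\emph{Large incoming sum.} Suppose $\sum_{u\to v}z_u\ge\frac12+\frac\kappa2$. By \eqref{eq:threshold-bound},
\[
 \Phi_v/2\ge\kappa/2-\kappa/10>0 .
\]
Thus $W_v=\Phi_v+(c_v-V_v)\ge\Phi_v>0$, since $V_v\le c_v$, and floor buyers spend on targets. Proposition~\ref{prop:chain} then gives $p_v=r_{Fv}/a$. The reference interval bounds $r_{Fv}\le a(L+\eta)$, so $p_v\le L+\eta$. Hence $t_v=(p_v-L)/w\le\eta/w=\tau$, and $z_v=\clip(t_v)\le\tau\le\kappa/100\le\kappa/4$.

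\emph{Small incoming sum.} Suppose $\sum_{u\to v}z_u\le\frac12-\frac\kappa2$. Symmetrically, $\Phi_v<0$, so $V_v-c_v\le\Phi_v<0$ because $W_v\ge0$. Upper buyers therefore spend less than their full budget on targets. Every buyer spends one (Proposition~\ref{prop:chain}), so upper buyers spend on $\mathcal R_{Uv}$. By Proposition~\ref{prop:chain}, upper buyers spend only on targets when $p_v<r_{Uv}/a$, so $p_v\ge r_{Uv}/a$. Combined with $p_v\le r_{Uv}/a$, this gives $p_v=r_{Uv}/a\ge U$. Hence $t_v\ge1$ and $z_v=1\ge1-\kappa/4$.

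\emph{Main obstacle.} The only nontrivial point is the numerical verification behind \eqref{eq:threshold-bound}: that $6a^2w(1+\tau)/(L-1)+2\tau<\kappa/10$ for $\tau\le\kappa/100$ and $K$ large. With $w=O_a(1/K)$ and $L-1$ bounded below by a positive constant depending on $a$ (since $L\to a^{-2}>1$), the first term is $O_a(1/K)$. It can be made below $\kappa/20$ by the choice of $K$, while $2\tau\le\kappa/50$. This is exactly what the deferred check \eqref{eq:final-checks} supplies, so we invoke it rather than redo it.

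The derivation of \eqref{eq:threshold-bound} from \eqref{eq:canceled} also needs care, using $|t_u-z_u|\le\tau$ and at most four incident wire errors; the self-loop case has $e_{vv}=0$. I would double-check this step, but it follows the bookkeeping already laid out before the statement.
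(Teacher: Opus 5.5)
Your proof is correct and follows the paper's argument: \eqref{eq:threshold-bound} together with \eqref{eq:final-checks} fixes the sign of $\Phi_v$, and the clearing identity \eqref{eq:F} with Proposition~\ref{prop:chain} then forces $p_v=r_{Fv}/a\le L+\eta$ or $p_v=r_{Uv}/a\ge U$. You derive explicitly the sign-to-price implications that the paper states just before the proposition, but the route is the same.
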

\begin{proof}
In the first case, \eqref{eq:threshold-bound} gives $\Phi_v>0$.
Hence $p_v\le L+\eta$ and $z_v\le\eta/w=\tau<\kappa/4$.
In the second case, $\Phi_v<0$, so $p_v\ge U$ and $z_v=1$.
\end{proof}

\subsection{Choosing the integer multiplicities}\label{sec:choose-K}

We now choose the remaining constants to satisfy the inequalities
used in the exact analysis. Write $a=A/D$ in lowest terms and put
\[
 C=16D^2,\qquad b=4C.
\]
For any rational $0<\tau\le\kappa/100$, choose a positive
multiple $K$ of $D^2(D^2-A^2)$ such that
\begin{equation}\label{eq:check-basic}
 K\ge\frac{8b+200/\kappa}{a^2(1-a)}.
\end{equation}
We call these values of $K$ \emph{admissible}.
Evaluate the remaining parameters in the following order:
first $L,U,w,\eta$ from~\eqref{eq:interval} and
\eqref{pcp:reference-tolerance}; then $b_v,c_v,S_L,S_M,S_H$
from~\eqref{eq:target-size-choice}; then $F_v$
from~\eqref{eq:floor-count}; and finally the reference supplies
and anchor counts from~\eqref{eq:reference-counts}.

Lemmas~\ref{lem:parameters} and~\ref{lem:clearing-margins} prove
that all counts and supplies are positive integers and that every
inequality used above holds. They also establish the error bound
in~\eqref{eq:final-checks}. Lemma~\ref{lem:balancing} proves
$N>G_{\rm listed}$, completing the construction.

For approximate equilibria, we will choose a fixed value
$\tau_a\le\kappa/100$ and an additional cutoff $K_a$, both
depending only on $a$. Theorem~\ref{thm:robust-decoding} provides
these constants and applies to every admissible $K\ge K_a$.
This uniformity is used again in Section~\ref{sec:pcp-equivalence}.

\subsection{Robustness to approximation}\label{sec:approximation}
\label{sec:local-robustness}\label{sec:robustness}

For inverse-linear accuracy, we make the total expenditure on
unapproved goods sufficiently small and prove correct decoding at
every vertex. Section~\ref{sec:pcp-threshold-to-hz} will use a
local version of this argument when some decoding failures are
permitted.

Put $n=|V|$ and equalize group prices using Lemma~\ref{lem:copies}.
This preserves the equilibrium error and the minimum-price decoder.
Let $\mathcal G_v$ be the union of the five groups at $v$, and put
\[
 \mathcal W=\sum_i\sum_{j:u_{ij}=0}p_jx_{ij},\qquad
 \mathcal W_v=\sum_i\sum_{j\in\mathcal G_v:\,u_{ij}=0}p_jx_{ij}.
\]
Thus $\mathcal W$ is total unapproved expenditure and
$\sum_v\mathcal W_v\le\mathcal W$. The quantity $\mathcal W_v$
is distinct from the floor buyers' target expenditure $W_v$.
All implicit constants below depend only on $a$, uniformly over
the graph, admissible $K$, and $0<\tau\le\kappa/100$.

\begin{lemma}[Unapproved expenditure]\label{lem:pcp-waste}
There are positive rational constants $d_a,g_a,c_a,C_a$, depending
only on $a$, such that for every sufficiently large admissible $K$,
every normalized additive $\eps$-equilibrium with
$\eps\le\min\{c_a,1/(C_aK)\}$ satisfies
\[
 \mathcal W\le C_aN\eps.
\]
Every group valued at one has price greater than $1+d_a$, and
every buyer has optimal utility at most $1-g_a$.
\end{lemma}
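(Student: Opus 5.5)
The plan is to establish the two structural claims first, since they do not use the equilibrium at all, and then derive the bound on $\mathcal W$ from them.

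\textbf{Step 1: prices of groups valued at one.} Take an arbitrary normalized additive $\eps$-equilibrium and equalize group prices by Lemma~\ref{lem:copies}. The groups valued at one are $M_v$, $H_v$, $\mathcal R_{Fv}$ and $\mathcal R_{Uv}$. For the references I would use an approximate version of Lemma~\ref{lem:generic-reference}. If $r\le 1+d_a$, each anchor's benchmark is at least $1/(1+d_a)$. Since the anchor approves only the reference, she must hold at least $1/(1+d_a)-\eps$ units of it. With $A$ anchors this exceeds the supply $S$, because $A\ge aL\,S$ and $aL\ge(1+a)/(2a)>1$ by~\eqref{eq:reference-requirements}. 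For $M_v$ and $H_v$ I would repeat the arguments of Lemmas~\ref{lem:high-floor} and~\ref{lem:upper-link} at the level of quantities. If $p_v\le 1+d_a$, every upper buyer's benchmark is close to $1/\max\{1,p_v\}$, and this can only be achieved by taking high goods. The $c_v$ upper buyers would then need more than $S_H$ high units, because $c_v>(L+\eta)S_H$ with slack. The middle group is handled the same way using floor or lower-link buyers. I take $d_a$ to be a small constant fraction of $\min\{aL-1,\,1/a^2-1\}$; this is $\Theta_a(1)$ for large $K$.

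\textbf{Step 2: optimal utility at most $1-g_a$.} Once every utility-one good costs more than $1+d_a$, any bundle with budget one gets at most $1/(1+d_a)$ units of value-one goods. The remaining quantity has value at most $a$. So $V_i(p)\le \max_y\{y+a(1-y): y\le 1/(1+d_a)\}\le 1-(1-a)d_a/(1+d_a)$, and I set $g_a$ to this margin.

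\textbf{Step 3: bounding $\mathcal W$.} For each buyer, compare her allocation with the dual certificate of Lemma~\ref{lem:dual}, with multipliers $(\beta_i,\mu_i)\in[0,1]^2$ and $\beta_i+\mu_i=V_i(p)$. Dual feasibility gives
\[
U_i(x_i)\le \sum_j(\beta_ip_j+\mu_i)x_{ij}-\sum_j s_{ij}x_{ij},
\]
where the slack $s_{ij}=\beta_ip_j+\mu_i-u_{ij}$ equals $\beta_ip_j+\mu_i$ on unapproved goods. Combining this with $p\cdot x_i\le 1+\eps$ and $U_i\ge V_i-\eps$ yields $\beta_i\sum_{u_{ij}=0}p_jx_{ij}\le 2\eps$.

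The crux is a uniform lower bound $\beta_i\ge\beta_a>0$. From Step 2, $\mu_i\le V_i(p)\le 1-g_a$. The dual constraint at a value-one good of price $p_j$ then gives $\beta_i\ge (1-\mu_i)/p_j\ge g_a/p_j$. So I need some approved value-one good whose price is $O_a(1)$. For the reference groups and targets, the price is bounded by revenue over supply: the revenue of a group is at most the total budget $(1+\eps)$ times the number of buyers who can spend there. The difficulty is that buyers with zero utility for a group could also pour money into it. I would close this loop by first bounding every price crudely by total money over supply, and then refining the bound using the already established $\beta$ estimates. This circularity is exactly what the hypothesis $\eps\le 1/(C_aK)$ is meant to break: a Markov-type argument shows that only $O(K\eps)$ stray money reaches any group, which keeps its price within $O_a(1)$ of the exact-analysis range.

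Summing $\sum_{u_{ij}=0}p_jx_{ij}\le 2\eps/\beta_a$ over all buyers gives $\mathcal W\le C_aN\eps$. I expect the uniform price bound behind $\beta_a$ to be the main obstacle.
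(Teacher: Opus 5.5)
Steps 1 and 2 are fine and match the paper's utility-gap argument. For $M_v$ the argument must use the floor population, with $X_R\le(1+\eps)/r_{Fv}$ and $F_v\theta>S_L+S_M$; the $b_v=O_a(1)$ lower links cannot clear $S_M=\Theta_a(K)$.

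The gap is in Step 3. A uniform bound $\beta_i\ge\beta_a$ would require every buyer to have a value-one group of price $O_a(1)$. Your justification is that a Markov-type argument lets only $O(K\eps)$ stray money reach any group, and this does not hold. Markov controls how many groups receive a lot of unapproved expenditure, not how much the worst group receives. In the allowed range the total $\mathcal W$ can be of order $N\eps$, which grows with the graph. It is already $\Theta_a(n)$ at the accuracy $\eps=\Theta_a(1/K^2)$ used for constant-error hardness. The per-buyer slack bounds do not stop most of it from landing on a single $H_v$, whose supply $S_H$ is a constant. Then $p_v$ is not bounded in terms of $a$, and the buyers whose only value-one group is $H_v$ really do have small multipliers. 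For example, an incoming wire with $q_u<1$ has $\beta=(1-a)/(p_v-q_u)$, so its own bound $2\eps/\beta$ on unapproved expenditure is of order $\eps p_v$. No constant $\beta_a$ exists, and the estimate cannot be closed buyer by buyer.

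The missing idea is to keep the non-uniform per-buyer estimate and sum it. From $\beta_ip_{f(i)}+\mu_i\ge1$ and $\beta_i+\mu_i=V_i(p)\le1-g_a$ you get $\beta_i(p_{f(i)}-1)\ge g_a$. Hence $E_i^0\le(2\eps/g_a)p_{f(i)}$ for any value-one group $f(i)$ of buyer $i$. Choose $f(i)$ as follows: the middle group for lower links, the floor reference for floors, each anchor's own reference, and the high group for upper buyers and wires. Middle and reference groups are then selected by $O_a(1)$ buyers per unit of supply, and high groups by $O_a(K)$ per unit, so
\[
 \sum_ip_{f(i)}=O_a\bigl(N(1+\eps)+KR_H\bigr),\qquad
 R_H=S_H\sum_vp_v\le\sum_v(c_v+h_vK)(1+\eps)+\mathcal W=O_a(Kn)+\mathcal W.
\]
With $N=\Omega_a(K^2n)$ this yields $\mathcal W\le O_a(\eps)(N+K\mathcal W)$. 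The hypothesis $\eps\le1/(C_aK)$ is used precisely to absorb the $K\eps\mathcal W$ term into the left side. That self-referential revenue bound, not a uniform price bound, is what makes the lemma work.
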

\begin{proof}
Lemma~\ref{lem:robust-utility-gap} gives the price and utility
gaps. For each buyer $i$, choose a group $f(i)$ valued at one.
Her optimal demand dual satisfies
\[
 \beta_i(p_{f(i)}-1)\ge1-V_i(p)\ge g_a.
\]
By Lemma~\ref{lem:robust-demand-slack}, her unapproved expenditure
is at most $2\eps/\beta_i\le(2\eps/g_a)p_{f(i)}$.

Choose middle goods for lower links, the floor reference for floor
buyers, each anchor's own reference, and high goods for upper
buyers and wires. For middle and reference groups, the number of
buyers selecting a group is at most a constant times its supply;
for high groups, it is at most $O_a(K)$ times the supply. Since
total revenue is at most $N(1+\eps)$, writing
$R_H=S_H\sum_vp_v$ gives
\[
 \sum_i p_{f(i)}=O_a(N+KR_H),\qquad
 R_H\le\sum_v(c_v+h_vK)(1+\eps)+\mathcal W
       =O_a(Kn)+\mathcal W.
\]
The reference counts in~\eqref{eq:reference-counts} give
$N=\Omega_a(K^2n)$ throughout the permitted range of $\tau$.
Consequently,
\[
 \mathcal W\le O_a(\eps)(N+K\mathcal W).
\]
Taking $C_a$ sufficiently large and $c_a$ sufficiently small,
the assumption $\eps\le\min\{c_a,1/(C_aK)\}$ lets us move
the term proportional to $K\eps\mathcal W$ to the left and
obtain the claimed bound.
\end{proof}

\begin{lemma}[Global decoding]\label{lem:global-decoding}
There are positive rational constants $\tau_a\le\kappa/100$,
$\zeta_a,e_a$, and an integer $K_a\ge1$, depending only on $a$,
such that the following holds for every admissible $K\ge K_a$.
If a normalized additive $\eps$-equilibrium of
$\mathcal M(G;a,\tau_a,K)$ satisfies
\[
 \eps\le e_a/K^2,\qquad \mathcal W\le\zeta_a,
\]
then its decoded vector is a $\kappa$-threshold solution of $G$.
The constants can be chosen so that $\eps\le e_a/K^2$ also
implies the hypotheses of Lemma~\ref{lem:pcp-waste}.
\end{lemma}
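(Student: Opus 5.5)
The plan is to re-run the exact analysis of Subsections~\ref{sec:prices}--\ref{sec:clearing} with every equality relaxed to an inequality, carrying explicit error terms of order $K^2\eps+\mathcal W$. The goal is an approximate version of~\eqref{eq:F} and~\eqref{eq:canceled}, together with the approximate implications $\Phi_v\gtrsim0\Rightarrow p_v\lesssim L+\eta$ and $\Phi_v\lesssim0\Rightarrow p_v\gtrsim U$.

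First, fix $\tau_a$ small, say with $2\tau_a<\kappa/20$. This leaves room in~\eqref{eq:threshold-bound}, and $K_a$ is then chosen so that the $O_a(1/K)$ wire errors are also below $\kappa/20$. Choosing $e_a\le\min\{c_a,1/C_a\}$ makes $\eps\le e_a/K^2$ imply the hypotheses of Lemma~\ref{lem:pcp-waste}, which gives the last sentence of the statement.

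Next, I redo the reference bounds. Lemma~\ref{lem:generic-reference} becomes $rS=A+E\pm(A\eps+\mathcal W)$ with approximately optimal anchors. Each anchor spends between $1-O(\eps)$ and $1+\eps$: she has utility at least $1/r-\eps$ and a unit cost bound. Her constraint $r>1$ still follows, because at $r\le1$ the anchors would demand almost a full unit each, contradicting $A>S$ by a constant margin. Since $A=\Theta_a(K^2)$ and $S=\Theta_a(K^2)$ (the latter because $\eta=\Theta(\tau/K)$), the reference prices move by $O_a(\eps+\mathcal W/K^2)$. This is $o(\eta)$ provided $\eps\le e_a/K^2$ and $\mathcal W\le\zeta_a$.

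Then I prove robust analogues of Lemmas~\ref{lem:floors}--\ref{lem:lower-link}. In each case I turn a strict price inequality into the statement that a population spends at most $O(\eps/\text{gap})$ on the dominated option. The tool is a quantitative Lemma~\ref{lem:demand}: the loss from buying a group with dual slack $s$ is at least $s$ times its quantity. Every contradiction in the exact proofs has a constant slack, for example $LS_H>10$ against wires spending less than $9$, and $aLS_M>c_v+10$. These margins absorb total errors of size $O_a(K\eps)\cdot O_a(K)+\mathcal W_v$, so I obtain $q_v\approx am_v\approx a^2p_v$ up to $O_a(K^2\eps+\mathcal W)/K$-scale relative errors.

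The wire formula~\eqref{eq:wire-dollars} then holds with additive error $O_a(K\cdot K\eps)$: each wire buyer's high expenditure is Lipschitz in $\eps$ with constant $O(K)$, because the denominators $p_v-q_u\ge L-1$ are constants and the budget-and-quantity solution is stable. Summing these errors gives $\Phi_v=2\sum_{u\to v}t_u-1-e_v\pm O_a(K^2\eps+\mathcal W)$.

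The main obstacle is the decoding step. An exact zero-spending conclusion is no longer available; I only know that $W_v$ is large or that $c_v-V_v$ is large, by at least a constant such as $\kappa/4$ minus the errors. I need to show that floors spending a constant total on targets forces $p_v\le r_{Fv}/a+O(\text{small})$. I would argue this by contradiction. If $p_v>r_{Fv}/a+\delta$ with $\delta\approx\eta$, each floor's target purchase has dual slack of order $\delta$, so floors spend at most $O(F_v\eps/\delta)=O_a(K\eps\cdot K/\tau)$ on targets. This is smaller than $\kappa/4$ once $e_a$ is small relative to $\tau_a$. The same argument applies to the upper reference. The output is $z_v\le 2\tau_a<\kappa$ in one case and $z_v\ge1-2\tau_a$ in the other, whenever the incoming sum lies outside $[1/2-\kappa,1/2+\kappa]$. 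The constants $e_a,\zeta_a$ are chosen last so that all accumulated errors stay below $\kappa/20$.
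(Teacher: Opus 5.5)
Your outline is correct and follows essentially the paper's route. It derives robust price relations from dual-slack bounds, with the constant clearing margins absorbing $O_a(K^2\eps+\mathcal W)$ errors (the paper packages this in Lemmas~\ref{lem:robust-preliminary}--\ref{lem:robust-accounting}). It then obtains an approximate form of \eqref{eq:F} and \eqref{eq:canceled}, and uses a slack argument to show that floor or upper-reference spending cannot explain a residual of order $\kappa$ unless $p_v$ is near the corresponding endpoint. The one difference is that you decode directly with a price gap $\delta\approx\eta$, which requires $e_a$ (and, with your $\mathcal W/K$ price errors, also $\zeta_a$) to be small relative to $\kappa\tau_a$. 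The paper instead argues by contradiction from $z_v>\kappa$ or $z_v<1-\kappa$, using the larger gap $\Omega_a(\kappa w)$, so $e_a$ need only be small relative to $\kappa$; both versions work because every constant depends only on $a$.
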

\begin{proof}
Since $\mathcal W_v\le\mathcal W$ at every vertex, the price
and expenditure estimates in Appendix~\ref{app:robustness}
apply everywhere when $K$ is large and $\tau_a$, $K^2\eps$,
and $\mathcal W$ are sufficiently small. In particular,
Lemma~\ref{lem:robust-price-chain} gives
\[
 q_v=a^2p_v+O_a(K\eps),\qquad m_v=ap_v+O_a(K\eps).
\]
Evaluate $\Phi_v$ using the ideal prices $a^2p_u,ap_u,p_u$,
as in Lemma~\ref{lem:robust-accounting}. That lemma gives
\[
 \begin{aligned}
 \Phi_v&=W_v+(V_v-c_v)+O_a(K^2\eps+\mathcal W),\\
 \frac{\Phi_v}{2}
 &=\sum_{u\to v}z_u-\frac12+O_a(1/K+\tau_a+K^2\eps).
 \end{aligned}
\]

Suppose $\sum_{u\to v}z_u>1/2+\kappa$ but $z_v>\kappa$.
The second identity gives $\Phi_v>\kappa$ for sufficiently
small errors. Also $p_v>L+\kappa w$, so the approximate price
relations and reference bounds give reduced slack
$\Omega_a(\kappa w)$ on both floor targets. The demand estimate
in Lemma~\ref{lem:robust-demand-slack}, together with
$F_v=O_a(K)$ and $w=\Theta_a(1/K)$, yields
\[
 W_v=O_a(K^2\eps/\kappa),\qquad V_v-c_v\le c_v\eps.
\]
The first identity therefore gives
$\Phi_v\le O_a(K^2\eps/\kappa+\mathcal W)$, a contradiction.

If instead $\sum_{u\to v}z_u<1/2-\kappa$ but
$z_v<1-\kappa$, then $\Phi_v<-\kappa$ and
$p_v<U-\kappa w$. The upper reference has reduced slack
$\Omega_a(\kappa w)$, so upper buyers spend at most
$O_a(K^2\eps/\kappa)$ on it. Their total approved expenditure
is $c_v+O_a(K\eps)$, giving
$V_v-c_v\ge-O_a(K^2\eps/\kappa)$. Since $W_v\ge0$, the
first identity now gives
$\Phi_v\ge-O_a(K^2\eps/\kappa+\mathcal W)$, again a
contradiction.

Both contradictions hold once
\[
 H_a\left(1/K+\tau_a+K^2\eps/\kappa+\mathcal W\right)
 <\kappa/2
\]
for a sufficiently large constant $H_a$ depending only on $a$.
Choose $\tau_a,\zeta_a,e_a$ sufficiently small and $K_a$
sufficiently large to ensure this inequality and the estimates
above. Taking $e_a\le\min\{c_a,1/C_a\}$ and enlarging $K_a$
also ensures the hypotheses of Lemma~\ref{lem:pcp-waste}.
\end{proof}

For the fixed value $\tau_a$, the reference counts give
$N\ge\nu_aK^2n$ for some positive rational $\nu_a$ depending
only on $a$. Define
\[
 \eps_*(a)=\min\{1,\zeta_a/(2C_a),\nu_a e_a\}.
\]

\begin{proof}[Proof of Theorem~\ref{thm:soundness}]
Since $n\ge1$, the error $\eps=\eps_*(a)/N$ satisfies
$\eps\le e_a/K^2$. Lemma~\ref{lem:pcp-waste} gives
\[
 \mathcal W\le C_aN\eps=C_a\eps_*(a)\le\zeta_a/2.
\]
Lemma~\ref{lem:global-decoding} therefore proves correct decoding
at every vertex.
\end{proof}

\subsection{Completing the reduction}\label{sec:completion}

It remains to verify the encoding and totality requirements of
the search reduction. We first record a rounding lemma for general
HZ markets.

\begin{lemma}[Rational approximate witnesses]\label{lem:rational-witness}
\label{sec:rational-witnesses}
For rational $0<\eps\le1$, every $N$-buyer HZ market with utilities
in $[0,1]$ has a normalized additive $\eps$-equilibrium whose
coordinates have bit length $O(1+\log N+\log(1/\eps))$.
\end{lemma}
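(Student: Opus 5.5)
We would start from an exact equilibrium and round it, keeping the allocation as the object that certifies feasibility. By the existence result cited from~\cite{gtv} (or Hylland--Zeckhauser), every market with utilities in $[0,1]$ has a normalized exact equilibrium $(p^*,x^*)$. Throughout we may replace every price above $2$ (or above a suitable threshold) by a cap that does not change anything relevant, so the first step is to reduce to bounded prices.

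\textbf{Step 1: bounded prices.} Goods priced above some threshold $P$ are only bought in quantity at most $1/P$ each. We would replace prices by $\min\{p_j,P\}$ with $P$ of order $N/\eps$. The risk is that lowering a price enlarges buyers' affordable sets and raises $V_i$. We would handle this with the dual of Lemma~\ref{lem:dual}: $\beta_i,\mu_i\in[0,1]$. We then keep the exact equilibrium allocation and argue that any bundle affordable at the capped prices yields utility at most $V_i(p^*)+O(\eps)$. A bundle that uses a capped good of price at least $P$ puts at most $1/P$ quantity there, and moving that quantity to a zero-price good changes utility by at most $1/P$. Expenditure of allocated bundles only decreases, and $\min_j p_j=0$ is preserved.

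\textbf{Step 2: grid rounding.} We would round each capped price down to the grid $\delta\mathbb Z$ with $\delta=\eps/(4N)$ or similar. This produces prices with bit length $O(\log P+\log(1/\delta))=O(1+\log N+\log(1/\eps))$. Rounding down keeps normalization and keeps the old allocations affordable with no loss.

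The main obstacle is that $V_i$ may jump when prices drop, since a good may become affordable. We would control this by comparing (OptBundle) at prices $p$ and $p'$ with $p-\delta\le p'\le p$. Any bundle $y$ with $p'\cdot y\le1$ has $p\cdot y\le1+\delta$. Mixing $y$ with weight $1-\delta'$ and a zero-price good with weight $\delta'$ for $\delta'=O(\delta)$ gives a bundle costing at most $1$ at $p$, since $p\cdot y\le P$ is bounded. Here the Step 1 cap matters: the mixing weight needed is $\delta/(1+\delta)$ only relative to total cost, so it is fine. This shows $V_i(p')\le V_i(p)+O(\delta)$, and hence the utility condition holds with error $\eps$.

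\textbf{Step 3: rounding the allocation.} The allocation $x^*$ is doubly stochastic but possibly irrational. We would write it, by Birkhoff--von Neumann, as a convex combination of permutation matrices. We then choose a nearby doubly stochastic matrix with entries that are multiples of $1/M$ and satisfy $|x-x^*|$ small entrywise. One way is to round the Birkhoff coefficients to multiples of $1/M$, using at most $N^2$ permutations after Carath\'eodory, with $M=\mathrm{poly}(N/\eps)$. Utility changes by at most $N\cdot N^2/M$ and expenditure by at most $P\cdot N^3/M$, so choosing $M=\Theta(PN^3/\eps)$ keeps both within $\eps/2$. With $P=O(N/\eps)$, this gives bit length $O(\log N+\log(1/\eps))$ per coordinate.

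Combining the three steps with the error split as $\eps/2+\eps/2$ yields the claimed witness. The delicate point we expect is Step 1's cap, which must be chosen so that the Step 3 expenditure error remains logarithmic in bit length.
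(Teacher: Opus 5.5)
Your proof is correct, but it takes a somewhat different route from the paper's in two places.

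\textbf{Bounding prices.} Your Step~1 cap is unnecessary. At an exact equilibrium every good is fully sold and every buyer spends at most one, so $\sum_jp_j=\sum_ip\cdot x_i\le N$. The paper uses this revenue bound both to bound the price numerators and to control the expenditure error after rounding, $p'\cdot x'_i\le p\cdot x_i+\sum_jp_j/Q\le1+N/Q$. The ``delicate point'' you anticipate therefore never arises. Your move-to-a-free-good argument for the cap is valid for $P=\Theta(N/\eps)$ and gives $V_i(\bar p)\le V_i(p^*)+1/P$. A cap as low as $2$, mentioned in your opening, would not be valid, since it can raise $V_i$ by a constant.

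\textbf{Rounding the allocation.} The paper uses a single denominator $Q=\lceil(N+1)/\eps\rceil$, shared with the prices. It rounds $Qx$ using integrality of the transportation problem with row and column sums $Q$ and bounds $\lfloor Qx_{ij}\rfloor\le y_{ij}\le\lceil Qx_{ij}\rceil$, which gives entrywise error $1/Q$ directly. Your Birkhoff--von Neumann and Carath\'eodory rounding also works. Its row $\ell_1$ error is $\sum_k|\lambda'_k-\lambda_k|\le N^2/M$, slightly better than the bound you state, but it needs a larger, though still polynomial, denominator $M$.

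\textbf{Rounding prices.} Your price-rounding step is the same scaling argument as the paper's bound $V_i(p')\le V_i(p)+1/Q$.

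The paper's version is shorter and uses one grid; yours is more modular but carries a redundant step.

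One small technicality: take the grid spacings to be reciprocals of integers, such as $1/\lceil4N/\eps\rceil$, rather than $\eps/(4N)$ itself. Then the bit lengths depend on $\log(1/\eps)$ rather than on the encoding length of the rational $\eps$.
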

The proof is given in Appendix~\ref{app:rational-witness-proof}.

\begin{proof}[Proof of Theorem~\ref{thm:main}]
Membership in PPAD follows from~\cite[Section~7]{vy}, as explained
above. It remains to prove PPAD-hardness.
Fix $a=A/D$ and the constants of
Theorem~\ref{thm:robust-decoding}. Choose the least admissible
$K\ge K_a$ and use the market $\mathcal M(G;a,\tau_a,K)$.
All multiplicities are then constants depending only on $a$ and
the local degree pair.

Given a nonempty threshold instance, eliminate sinks as described
before~\eqref{eq:degrees}, then construct the market. Lemmas~\ref{lem:parameters}
and~\ref{lem:balancing} ensure positive integer sizes and equal
numbers of buyers and goods. Each vertex and edge contributes a
constant number of individual buyers and unit-supply goods. Thus
$N=O_a(|V|+|E|)$, and the explicit utility matrix is constructible
in polynomial time with entries in $\{0,a,1\}$.

The numbers of approvers of a good in the five module groups are
\[
 F_v+b_v+o_vK,\quad b_v+c_v+F_v,\quad c_v+h_vK,\quad
 A^R_{Fv}+F_v,\quad A^R_{Uv}+c_v.
\]
Their maximum over the six degree pairs depends only on $a$;
balancing goods have no approvers. Lemma~\ref{lem:rational-witness}
provides a normalized additive $\eps_*(a)/N$-equilibrium with
$O_a(1+\log N)$ bits per coordinate. Thus the search problem is
total and has witnesses of polynomial length.

For every returned equilibrium, Theorem~\ref{thm:soundness}
guarantees that the minimum-price decoder solves the augmented
threshold instance. Restricting to the original vertices solves
the input instance. All decoding operations are rational arithmetic
of polynomial complexity. For the empty graph, output the
one-buyer, one-good market with utility one and decode every
equilibrium to the empty assignment.
\end{proof}

\section{Constant-error hardness under PCP for PPAD}
\label{sec:pcp-equivalence}

The inverse-linear accuracy in Theorem~\ref{thm:main} can be
replaced by a positive constant under PCP for PPAD.

We use the polynomial-time formulation of the conjecture of
Babichenko, Papadimitriou, and Rubinstein~\cite{bpr}. It asserts
that, for some fixed positive error and failure tolerances,
finding a generalized-circuit assignment satisfying all but the
permitted fraction of gates is PPAD-hard~\cite{pcpmarkets}.
The original conjecture requires a quasilinear reduction.
OpenAI~\cite{openaipcp} has announced a proof of that formulation;
we retain the conjecture as a hypothesis.

The reduction passes through weak Pure-Circuit and weak
threshold games.

\subsection{Weak Pure-Circuit and threshold games}
\label{sec:pcp-forward}
\label{sec:pcp-sources}

We use NOR and PURIFY gates, as in~\cite{purecircuit}. This
formulation is equivalent to the NAND/PURIFY formulation
of~\cite{pcpmarkets}: replace NAND by NOR and complement every
Boolean value, leaving $\bot$ fixed. The transformation preserves
satisfaction of each gate and the number of times each node is
used as an input.

\begin{definition}[Pure-Circuit]\label{def:pcp-pure-circuit}
An instance consists of a finite node set $V_C$ and a nonempty
collection $\mathcal G$ of gates $(T,u,v,w)$, with pairwise distinct
$u,v,w\in V_C$ and $T\in\{\mathrm{NOR},\mathrm{PURIFY}\}$.
A NOR gate has inputs $u,v$ and output $w$; a PURIFY gate has input
$u$ and outputs $v,w$. Every node is produced by exactly one gate.

An assignment is a map $\xi:V_C\to\{0,1,\bot\}$. A NOR gate is
satisfied when
\[
 \begin{aligned}
 \xi_u=\xi_v=0&\quad\Longrightarrow\quad\xi_w=1,\\
 \xi_u=1\ \text{or}\ \xi_v=1
     &\quad\Longrightarrow\quad\xi_w=0.
 \end{aligned}
\]
Its output is unrestricted otherwise. A PURIFY gate is satisfied when
\[
 \begin{aligned}
 \xi_u=b\in\{0,1\}&\quad\Longrightarrow\quad\xi_v=\xi_w=b,\\
 \xi_u=\bot&\quad\Longrightarrow\quad
       \{\xi_v,\xi_w\}\cap\{0,1\}\ne\varnothing.
 \end{aligned}
\]
Pure-Circuit asks for an assignment satisfying every gate.
\end{definition}

\begin{definition}[Weak Pure-Circuit and the input-once restriction]
\label{def:weak-pure-circuit}
For $\delta\in[0,1]$, a weak $\delta$-Pure-Circuit solution fails
at most $\delta|\mathcal G|$ gates, counting each PURIFY gate once.
The case $\delta=0$ is ordinary Pure-Circuit. An instance is
\emph{input-once} if every node is used as an input to exactly one gate.
\end{definition}

By~\cite[Theorem 2.2]{pcpmarkets}, PCP for PPAD is equivalent
to PPAD-hardness of weak $\delta$-Pure-Circuit for some fixed
$\delta>0$, with or without the input-once restriction. We use
input-once instances so that the resulting threshold graph has
bounded degrees.

A threshold game, as in Definition~\ref{def:source}, is given by
a finite nonempty directed graph $G=(V,E)$, with
indegree $h_v\le2$ and outdegree $1\le o_v\le2$ at each vertex.
A threshold assignment is a vector $z\in[0,1]^V$. For a fixed
$\kappa>0$, the constraint at $v$ is
\[
 \begin{aligned}
 \sum_{u:u\to v}z_u>\tfrac12+\kappa
   &\quad\Longrightarrow\quad z_v\le\kappa,\\
 \sum_{u:u\to v}z_u<\tfrac12-\kappa
   &\quad\Longrightarrow\quad z_v\ge1-\kappa.
 \end{aligned}
\]
If the incoming sum lies in
$[1/2-\kappa,1/2+\kappa]$, any $z_v\in[0,1]$ is permitted.
An ordinary $\kappa$-threshold solution satisfies this constraint at
every vertex, as in Definition~\ref{def:source}.

\begin{definition}[Weak threshold solution]\label{def:weak-threshold}
For fixed $\kappa>0$ and $0\le\rho\le1$, a
$(\kappa,\rho)$-threshold solution is an assignment $z\in[0,1]^V$
whose threshold constraint fails at at most $\rho|V|$ vertices.
The weak threshold search problem asks for such an assignment given
$G$. For $\rho=0$, it is the ordinary threshold problem.
\end{definition}

Fix $\kappa=1/200$. The failure fractions $\delta$ and $\rho$
count circuit gates and threshold vertices, respectively.

\label{sec:pcp-threshold-construction}
\label{sec:pcp-constructions}
Given an input-once instance $\mathcal C=(V_C,\mathcal G)$,
construct the threshold graph using the gadgets
of~\cite[Theorem 4.23]{purecircuit}:
\begin{enumerate}
\item Create one graph vertex for each node of $V_C$, keeping its
      label. A circuit node shared by gates is represented by one vertex.
\item For a NOR gate with inputs $u,v$ and output $w$, add the edges
      $u\to w$ and $v\to w$.
\item For a PURIFY gate with input $u$ and outputs $v,w$, create four
      fresh vertices $s_1,s_2,s_3,s_4$, private to that gate. Add
      \[
      \begin{gathered}
       u\to s_1\to s_2\to s_3\to v,\qquad
       s_2\to s_4\to w,\\
       u\to s_3.
      \end{gathered}
      \]
\end{enumerate}
There are no other vertices or edges. The gadgets are shown in
Figure~\ref{fig:threshold-gadgets}.

\begin{figure}[htbp]
\centering
\begin{tikzpicture}[
  x=1cm,y=1cm,font=\normalsize,
  original/.style={circle,draw=blue!55!black,fill=blue!4,
    line width=.7pt,minimum size=.72cm,inner sep=2pt},
  fresh/.style={circle,draw=orange!65!black,fill=orange!5,dashed,
    line width=.7pt,minimum size=.72cm,inner sep=2pt},
  thresholdedge/.style={->,draw=black!70,line width=.75pt}
]
\node[font=\normalsize\bfseries] at (-5.2,1.85) {(a) NOR};
\node[original] (nor-u) at (-6.1,.7) {$u$};
\node[original] (nor-v) at (-6.1,-.7) {$v$};
\node[original] (nor-w) at (-4.3,0) {$w$};
\draw[thresholdedge] (nor-u)--(nor-w);
\draw[thresholdedge] (nor-v)--(nor-w);

\node[font=\normalsize\bfseries] at (1.7,1.85) {(b) PURIFY};
\node[original] (pur-u) at (-1.7,0) {$u$};
\node[fresh] (pur-s1) at (0,0) {$s_1$};
\node[fresh] (pur-s2) at (1.7,0) {$s_2$};
\node[fresh] (pur-s3) at (3.4,0) {$s_3$};
\node[original] (pur-v) at (5.1,0) {$v$};
\node[fresh] (pur-s4) at (3.4,-1.6) {$s_4$};
\node[original] (pur-w) at (5.1,-1.6) {$w$};
\draw[thresholdedge] (pur-u)--(pur-s1);
\draw[thresholdedge] (pur-s1)--(pur-s2);
\draw[thresholdedge] (pur-s2)--(pur-s3);
\draw[thresholdedge] (pur-s3)--(pur-v);
\draw[thresholdedge] (pur-s2)--(pur-s4);
\draw[thresholdedge] (pur-s4)--(pur-w);
\draw[thresholdedge] (pur-u.north)
  .. controls (-1.7,1.4) and (3.4,1.4) .. (pur-s3.north);
\end{tikzpicture}
\caption{The NOR and PURIFY threshold gadgets from the proof
of~\cite[Theorem 4.23]{purecircuit}, redrawn with our vertex labels.
Solid blue circles are original circuit nodes; dashed orange circles
are the four fresh vertices used for a PURIFY gate. Every arrow is a
directed edge of the threshold graph. For NOR, $u,v$ are the inputs
and $w$ is the output. For PURIFY, $u$ is the input and $v,w$ are
the outputs. The curved edge is $u\to s_3$. Only the edges contributed
by the displayed gate are shown; the remaining incidences of original
circuit nodes are omitted.}
\label{fig:threshold-gadgets}
\end{figure}

\FloatBarrier
\label{sec:pcp-threshold-decoding}
For a threshold assignment $z$, define
\begin{equation*}
 D_\kappa(s)=
 \begin{cases}
  0,&0\le s\le\kappa,\\
  1,&1-\kappa\le s\le1,\\
  \bot,&\kappa<s<1-\kappa.
 \end{cases}
\end{equation*}
and return $\xi_u=D_\kappa(z_u)$ for each original circuit node
$u\in V_C$.

\begin{lemma}[Weak threshold hardness]\label{lem:weak-threshold-source}
Assuming PCP for PPAD, there is a fixed rational $\rho>0$ for
which computing a $(\kappa,\rho)$-threshold solution is PPAD-hard,
even with $h_v\le2$ and $1\le o_v\le2$.
\end{lemma}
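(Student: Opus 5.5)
Under PCP for PPAD, \cite[Theorem~2.2]{pcpmarkets} supplies a fixed $\delta>0$ for which weak $\delta$-Pure-Circuit is PPAD-hard on input-once instances. The plan is to reduce such an instance $\mathcal C=(V_C,\mathcal G)$ to the threshold graph $G$ constructed above, and to decode any $(\kappa,\rho)$-threshold solution $z$ by $\xi_u=D_\kappa(z_u)$. The parameter $\rho$ will be set to $\delta$ divided by a constant. The argument has three parts.

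\textbf{Degrees.} Every circuit node is produced by exactly one gate and used as an input exactly once.
\begin{itemize}
\item A NOR output has indegree two. A PURIFY output $v$ has indegree one (from $s_3$), and $w$ has indegree one (from $s_4$).
\item A node's outgoing edges come only from the gate that consumes it. A NOR input contributes one edge, and a PURIFY input contributes two edges ($u\to s_1$ and $u\to s_3$). Hence $1\le o_u\le2$.
\item The fresh vertices have the degrees $s_1:(1,1)$, $s_2:(1,2)$, $s_3:(2,1)$, $s_4:(1,1)$.
\end{itemize}
So~\eqref{eq:degrees} holds without further modification. The construction is polynomial, and $|V|\le|V_C|+4|\mathcal G|\le 7|\mathcal G|$, since $|V_C|\le2|\mathcal G|$.

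\textbf{Local soundness.} This step is the main content. Call a gate \emph{good} if the threshold constraint holds at every vertex it creates or outputs to: its output $w$ for NOR, and $s_1,\dots,s_4,v,w$ for PURIFY. The claim is that every good gate is satisfied by $\xi$.

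For NOR, the incoming sum at $w$ is $z_u+z_v$, with $\kappa=1/200$.
\begin{itemize}
\item If $\xi_u=1$ or $\xi_v=1$, the sum is at least $1-\kappa>1/2+\kappa$. Then $z_w\le\kappa$, so $\xi_w=0$.
\item If both inputs are $0$, the sum is at most $2\kappa<1/2-\kappa$. Then $z_w\ge1-\kappa$, so $\xi_w=1$.
\end{itemize}

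For PURIFY, I would re-verify the case analysis of \cite[Theorem~4.23]{purecircuit}: the chain alternates values, and the edge $u\to s_3$ breaks ties. I expect this to be the delicate step, because that proof is written for exact solutions. The point to check is that it uses only the constraints at the six gadget vertices together with the values of $u$, and never any constraint elsewhere.
\begin{itemize}
\item If $z_u$ is pure, say $b$, then $s_1=1-b$, $s_2=b$, and $s_3,s_4=1-b$, each up to $\kappa$. At $s_3$ the incoming sum is $z_u+z_{s_2}$, which is about $2b$ and lies outside the band. This gives $v=w=b$.
\item If $z_u$ is $\bot$, then either $z_{s_1}$ lies outside the band, giving a pure value at $s_2$, or it is in the band. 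In the latter case, I would follow their argument that one of $s_3,s_4$ receives a definite input. This yields one pure output.
\end{itemize}

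\textbf{Counting.} Each vertex belongs to the "good" condition of at most one gate. Original nodes are tied to the gate producing them, and fresh vertices are private to their gate. Hence the number of failed gates is at most the number of failed vertices, which is at most $\rho|V|\le7\rho|\mathcal G|$. Taking $\rho=\delta/7$, rational, gives at most $\delta|\mathcal G|$ failed gates. So $\xi$ is a weak $\delta$-Pure-Circuit solution, computable in polynomial time from $z$, and this completes the reduction.
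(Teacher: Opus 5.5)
Your proposal is correct and follows essentially the same route as the paper: the input-once weak Pure-Circuit source, gadget-local correctness that uses only the constraints at each gate's output and private vertices, and an injective charging of failed gates to failed vertices in the disjoint per-gate vertex sets. Your constant $7$ is just looser than the paper's $|V|\le 6|\mathcal G|$, which your own estimate $|V_C|+4|\mathcal G|\le 6|\mathcal G|$ already gives.

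The $\bot$ case of PURIFY that you deferred closes quickly. If $z_{s_2}\notin[1/2-\kappa,1/2+\kappa]$, then $s_4$ and hence $w$ are pure. Otherwise the constraints at $s_2$ and $s_1$ force $z_u\in[1/2-\kappa,1/2+\kappa]$, so $z_u+z_{s_2}\ge 1-2\kappa>1/2+\kappa$, which makes $s_3$ and hence $v$ pure.
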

\begin{proof}
Under PCP for PPAD, fix $\delta>0$ for which weak
$\delta$-Pure-Circuit is hard on input-once
instances~\cite[Theorem 2.2]{pcpmarkets}. Let
$m=|\mathcal G|$. The local correctness argument
of~\cite[Theorem 4.23]{purecircuit} applies because
$\kappa<1/6$: a NOR gate is satisfied whenever its output
vertex satisfies its threshold constraint; a PURIFY gate is
satisfied whenever its four private vertices and two outputs
satisfy theirs. No threshold constraint at an input vertex is
needed for either conclusion.

Associate with each gate $g$ the set $T_g$ of its output
vertices and private vertices. These sets partition the graph:
each circuit node has one producing gate, and each additional
vertex belongs to one gadget. A NOR gate contributes one vertex
to its set and a PURIFY gate contributes six. Hence
\[
 |V|=\sum_{g\in\mathcal G}|T_g|\le6m.
\]
A failed decoded gate requires a failed threshold constraint
in its own set $T_g$. Since the sets are disjoint, choose a
positive rational $\rho\le\delta/6$ to obtain
\[
 \#\{\text{failed gates of }\mathcal C\}
 \le\#\{\text{failed threshold vertices}\}
 \le\rho|V|\le6\rho m\le\delta m.
\]
Thus every permitted threshold assignment decodes to a permitted
circuit assignment.

Every original node is an output of one gate and an input to
one gate. Its incoming edges belong to its producing gadget,
and its outgoing edges to its consuming gadget. Inspection of
the two gadgets gives $h_v\le2$ and $1\le o_v\le2$, including
at the private vertices. The graph and decoder have polynomial
size and are computable in polynomial time.
\end{proof}

\subsection{From weak threshold games to HZ}
\label{sec:pcp-threshold-to-hz}
\label{sec:pcp-market-construction}

Apply the construction of Section~\ref{sec:construction} to a
graph satisfying~\eqref{eq:degrees}, using the fixed reference
tolerance $\tau_a$ and a sufficiently large admissible $K$.
As before, decode by
\label{sec:pcp-decoding-goals}
\[
 p_v=\min_{j\in H_v}p_j,\qquad
 t_v=\frac{p_v-L}{w},\qquad z_v=\clip(t_v).
\]
At constant error, the bound $\mathcal W=O_a(N\eps)$ need
not make total unapproved expenditure small. We instead use the
price argument at vertices whose nearby groups receive little
unapproved expenditure, and bound the number of remaining vertices.
We first record the dependence of market size on $K$.

\begin{lemma}[Market size]\label{lem:uniform-market-size}
For sufficiently large admissible $K$ and $0<\tau\le\kappa/100$,
every nonempty $n$-vertex graph satisfying~\eqref{eq:degrees} gives
\begin{equation}\label{eq:pcp-market-size}
 N=\Theta_a(Kn/\eta)=\Theta_a(K^2n/\tau).
\end{equation}
For each fixed positive rational $\tau$ depending only on $a$,
there are positive rational $\nu_a,M_a$ such that
$\nu_aK^2n\le N\le M_aK^2n$ throughout this range.
\end{lemma}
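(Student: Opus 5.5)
The plan is to count, vertex by vertex, the buyers and goods contributed by each gadget, and to show that the two reference groups dominate. Every other count is $O_a(K)$ per vertex, while each reference group has size of order $K/\eta$. Since every vertex has $h_v\le2$ and $o_v\le2$, we have $|E|\le2n$. Each edge contributes $K$ wire buyers, so wires add $O(Kn)$ buyers in total and no goods.

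First, list the per-vertex counts from \eqref{eq:target-size-choice} and \eqref{eq:floor-count}:
\begin{itemize}
\item $b_v=b+1-3h_v$ is a constant depending only on $a$.
\item $c_v=o_v\gamma K-b=\Theta_a(K)$, since $o_v\ge1$.
\item $S_L=o_vK+C=\Theta_a(K)$.
\item $S_M=a(c_v+C)=\Theta_a(K)$.
\item $S_H=a^2(b-2C)$ is a constant.
\item $F_v=\lceil(S_L+S_M)/\theta\rceil+1=\Theta_a(K)$.
\end{itemize}
Next, \eqref{eq:reference-counts} gives $S^R_{Fv}=\lceil (F_v+1)/(a\eta)\rceil=\Theta_a(K/\eta)$ and $S^R_{Uv}=\Theta_a(K/\eta)$. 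For the anchors, $A^R_{Fv}=\lceil aLS^R_{Fv}\rceil$ and $A^R_{Uv}=\lceil aUS^R_{Uv}\rceil$. Here $L$ and $U$ are within $O_a(1/K)$ of $a^{-2}$, so both lie in $[a^{-2}/2,a^{-2}]$ once $K$ is large. Hence both anchor counts are also $\Theta_a(K/\eta)$.

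Summing these over the $n$ vertices, together with the wire buyers, gives
\[
N=\sum_v\bigl(b_v+c_v+F_v+A^R_{Fv}+A^R_{Uv}\bigr)+K|E| .
\]
This sum is at least $\sum_v A^R_{Fv}=\Omega_a(Kn/\eta)$. It is also at most $O_a(Kn)+O_a(Kn/\eta)=O_a(Kn/\eta)$, because $\eta\le\tau w\le w=O_a(1/K)<1$, so the $O_a(Kn)$ terms are absorbed. Thus $N=\Theta_a(Kn/\eta)$. Finally, $\eta=\tau w=2\tau/(\gamma K)$, which gives $N=\Theta_a(K^2n/\tau)$.

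For the second assertion, fix $\tau$ depending only on $a$. Then $1/\tau$ is a constant depending on $a$, so the implicit constants become explicit rationals $\nu_a$ and $M_a$. We can read them off from the ceiling bounds, using $x\le\lceil x\rceil\le x+1$ and $1/2\le a^2L\le a^2U\le1$. Rationality is not an issue, since every constant can be rounded down or up to a rational.

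The only delicate point is making sure the constants are uniform over the permitted range. For the lower bound, I use $\eta\le w$ and discard all terms except the floor anchors. For the upper bound, I use the explicit formula for $\eta$, which is linear in $\tau$. The phrase \emph{sufficiently large $K$} is needed only to guarantee that $L>a^{-2}/2$ and that the ceilings do not dominate.
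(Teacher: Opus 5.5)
Your proposal is correct and follows the paper's proof. Both show $F_v,c_v=\Theta_a(K)$, deduce that the reference supplies and anchor populations are $\Theta_a(K/\eta)$ and dominate the remaining $O_a(K)$ per-vertex terms of \eqref{eq:N}, and then substitute $\eta=2\tau/(\gamma K)$. One minor point: the bound $\eta\le w<1$ is needed only for the upper bound, to absorb the $O_a(Kn)$ terms, since the lower bound follows directly from $A^R_{Fv}\ge L(F_v+1)/\eta$.
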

\begin{proof}
At every vertex, $F_v,c_v=\Theta_a(K)$. Equation~\eqref{eq:reference-counts}
therefore gives reference supplies and anchor populations of size
$\Theta_a(K/\eta)$. All remaining populations contribute
$O_a(K)$ per vertex. Summing the counts in~\eqref{eq:N} and using
$\eta=2\tau/(\gamma K)$ proves~\eqref{eq:pcp-market-size}.
For fixed $\tau$, choose positive rational lower and upper
coefficients to obtain $\nu_a$ and $M_a$.
\end{proof}

Recall $\mathcal W_v$ from Subsection~\ref{sec:approximation}.
Let $B_2(v)$ be the vertices at undirected distance at most two
from $v$. The following local form of the decoding argument is
proved in Appendix~\ref{app:local-decoding-proof}.

\begin{lemma}[Local decoding]\label{lem:pcp-local}
There are positive constants $c_a,K_a$, depending only on $a$,
such that the following holds for admissible $K\ge K_a$,
$0<\tau\le\min\{\kappa/100,c_a\kappa\}$,
$\eps\le c_a\kappa^2/K^2$, and $\zeta\le c_a\kappa$.
For any normalized additive $\eps$-equilibrium, if
$\mathcal W_u\le\zeta$ for every $u\in B_2(v)$, then
\[
 \begin{aligned}
 \sum_{u\to v}z_u>\tfrac12+\kappa&\ \Longrightarrow\ z_v\le\kappa,\\
 \sum_{u\to v}z_u<\tfrac12-\kappa&\ \Longrightarrow\ z_v\ge1-\kappa.
 \end{aligned}
\]
Only the modules in $B_2(v)$ and their incident wire rows are
used; no bound is required on prices at exterior endpoints.
\end{lemma}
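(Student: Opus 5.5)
The plan is to rerun the proof of Lemma~\ref{lem:global-decoding} with every global quantity replaced by its local counterpart. Fix $v$ with $\mathcal W_u\le\zeta$ for all $u\in B_2(v)$, and equalize group prices via Lemma~\ref{lem:copies}.

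\textbf{Step 1: local price and utility gaps.} First obtain the price and utility gaps of Lemma~\ref{lem:robust-utility-gap} for each module in $B_2(v)$. These arguments are local: each compares revenue of a group with the budgets of its approvers. The only global input was $\mathcal W$, and it enters only through unapproved receipts of the groups of that module, which is $\mathcal W_u$. Concretely, the approximate analogue of Lemma~\ref{lem:generic-reference} gives each reference price within $O_a(\eps+\zeta/S^R)$ of its exact interval. Since $S^R=\Theta_a(K/\eta)$ and $\zeta\le c_a\kappa$, this perturbation is $O_a(\eta\zeta/K)$, a small multiple of $w$.

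\textbf{Step 2: local price chain.} Next redo Lemmas~\ref{lem:high-floor}--\ref{lem:lower-link} approximately at every $u\in B_1(v)$. This yields $q_u=a^2p_u+O_a(K\eps+\zeta)$ and $m_u=ap_u+O_a(K\eps+\zeta)$, as in Lemma~\ref{lem:robust-price-chain}. It also gives $p_u\in[L-o(w),U+\eta+o(w)]$.

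These estimates need the wire bound at $u$, namely that incoming wires spend less than $9$ on $H_u$. That bound uses the low prices $q_{u'}\ge a^2L-o(1/K)$ at in-neighbours $u'$ of $u$. The lower bound $q_{u'}\ge ar_{Fu'}$ (approximately) depends only on the module at $u'$, which lies in $B_2(v)$. Hence radius two suffices.

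\textbf{Step 3: accounting identity at $v$.} The wire rows incident to $v$ join $v$ to neighbours $u\in B_1(v)$. Because both endpoint modules satisfy the price chain, Lemma~\ref{lem:demand} applies to these rows with slack, giving $I_{uv}=3-2t_u+O_a(1/K+K^2\eps)$. Approximately optimal wire buyers deviate from the exact mixture by $O(K\eps)$ per buyer, and there are $K$ of them.

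Summing gives the approximate form of \eqref{eq:F}:
\[
\Phi_v=W_v+(V_v-c_v)+O_a(K^2\eps+\zeta).
\]
Substituting the expansion \eqref{eq:canceled} gives
\[
\Phi_v/2=\sum_{u\to v}z_u-\tfrac12+O_a(1/K+\tau+K^2\eps+\zeta).
\]
The key point is that only prices at $v$ and its neighbours appear, so no exterior endpoint is needed.

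\textbf{Step 4: the contradiction.} Finish exactly as in Lemma~\ref{lem:global-decoding}. If the incoming sum exceeds $\tfrac12+\kappa$ while $z_v>\kappa$, the floor targets have slack $\Omega_a(\kappa w)$. Lemma~\ref{lem:robust-demand-slack} then forces $W_v=O_a(K^2\eps/\kappa)$, contradicting $\Phi_v>\kappa$. The symmetric case uses the upper reference slack to get $V_v-c_v\ge -O_a(K^2\eps/\kappa)$. The constraints $\eps\le c_a\kappa^2/K^2$, $\zeta\le c_a\kappa$, $\tau\le c_a\kappa$, and $K\ge K_a$ make every error term below $\kappa/2$.

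\textbf{Main obstacle.} The hardest part is Step~2. Every approximate demand estimate must use only $\mathcal W_u$ at modules within distance two, and never a global revenue bound. In Lemma~\ref{lem:pcp-waste} the global revenue bound was used to control $\beta_i^{-1}$. Locally I would instead bound $p_{f(i)}$ directly by the local revenue of its group: this is at most the approvers' budgets plus $\mathcal W_u$, divided by the supply. I expect this to give $p_v=O_a(1)$ at the relevant modules without reference to the rest of the graph.
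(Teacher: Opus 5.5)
Your architecture matches the paper's proof: preliminary bounds at every module of $B_2(v)$, the price chain at $v$ and at both endpoints of its incident wires (which needs only in-neighbours of $B_1(v)$), the accounting identity, and the floor-slack and upper-reference-slack contradictions. The genuine problem is the error you record in Step~2, $q_u=a^2p_u+O_a(K\eps+\zeta)$ and $m_u=ap_u+O_a(K\eps+\zeta)$. The lemma allows $\zeta$ as large as $c_a\kappa$, a constant independent of $K$, whereas $w=2/(\gamma K)$.

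With your bound, Step~4 fails. The floor slack becomes only $q_v-ar_{Fv}\ge a^2(\kappa w-\eta)-O_a(K\eps+\zeta)$, and the $O_a(\zeta)$ term swamps $\kappa w$ once $K$ is large. Step~3 fails as well. Replacing $q_v,m_v$ by $a^2p_v,ap_v$ in the target revenue multiplies the price error by $S_L,S_M=\Theta_a(K)$. Replacing $q_u$ by $a^2p_u$ in the payments of $K$ wire buyers does the same. Both cost $O_a(K\zeta)$, not the $O_a(\zeta)$ you claim. So your steps, as written, do not prove the lemma in the stated range of $\zeta$.

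The missing point is that the price relations hold with error $O_a(K\eps)$, independent of $\zeta$. Unapproved expenditure enters them in only two ways.
\begin{enumerate}
\item It perturbs the reference prices by $\zeta/S^R=O_a(\zeta\eta/K)$, as you noted in Step~1.
\item It appears as an additive revenue or quantity term in contradiction arguments with constant margins:
\begin{enumerate}
\item high revenue $LS_H>10$ against wire spending below $9+O_a(K^2\eps)$ plus $\zeta$;
\item middle revenue above $c_v+10-O_a(K^2\eps)$ against at most $c_v(1+\eps)+2+\zeta$ of available spending;
\item low supply excess $S_L-o_vK=C\ge64$ against $O_a(\zeta)$ unapproved quantity.
\end{enumerate}
\end{enumerate}
None of these shifts the threshold $H_aK\eps$ beyond which the slack bounds of Lemma~\ref{lem:robust-demand-slack} apply. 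Consequently, $\zeta$ enters the accounting identity only additively.

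Your stated main obstacle is also misplaced. Since $\mathcal W_u\le\zeta$ is a hypothesis, no control of $\beta_i^{-1}$ through revenue is needed. The dual lower bounds come from explicit duals, for example $\beta_U\ge a/r_{Uv}$ for upper buyers and $\beta\approx1/r_{Fv}$ for floors. Moreover, your proposed fix does not give $p_v=O_a(1)$. Local revenue yields only $p_v=O_a(K)$, because $S_H=32A^2$ is constant while $c_v+h_vK=\Theta_a(K)$. The bound $p_v\le r_{Uv}/a+O_a(K\eps)$ must instead come from upper-buyer optimality. Once upper buyers are forced to spend at least $1/2$ on high goods, this gives $\beta_U-1/p_v=O_a(K\eps)$.
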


The proof uses the same price relations, expenditure identities,
and slack bounds as Lemma~\ref{lem:global-decoding}. The preliminary
price bounds at a vertex use only its own module; the price
relations also use its incoming neighbors. Applying those relations
at both endpoints of the wires incident to $v$ requires only
$B_2(v)$. Thus the argument needs small unapproved expenditure
on this neighborhood, even if prices elsewhere are unrestricted.

\begin{theorem}[Uniform robustness]\label{thm:robust-decoding}
For every fixed rational $a\in(0,1)$, there are positive rational
constants $\tau_a\le\kappa/100$, $e_a,B_a,\nu_a,M_a$, and an
integer $K_a\ge1$, depending only on $a$, such that the following
holds. For every admissible $K\ge K_a$ and every nonempty graph
satisfying~\eqref{eq:degrees}, the market
$\mathcal M(G;a,\tau_a,K)$, with $\eta=\tau_a w$, has
$\nu_aK^2n\le N\le M_aK^2n$. For every $0<\eps\le e_a/K^2$,
the minimum-price decoder applied to any normalized additive
$\eps$-equilibrium fails the $\kappa$-threshold condition at at most
\begin{equation*}
 B_aN\eps
\end{equation*}
vertices.
\end{theorem}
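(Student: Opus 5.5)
The plan is to combine the local decoding lemma (Lemma~\ref{lem:pcp-local}) with a per-vertex version of the unapproved-expenditure bound (Lemma~\ref{lem:pcp-waste}), and then to count the vertices whose radius-two neighborhood carries too much unapproved expenditure.

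First, fix the constants. Take $c_a,K_a$ from Lemma~\ref{lem:pcp-local}. Choose $\tau_a\le\min\{\kappa/100,c_a\kappa\}$, also small enough for Lemma~\ref{lem:global-decoding}. Set $\zeta=c_a\kappa$ and $e_a\le c_a\kappa^2$, with $e_a$ also small enough that $\eps\le e_a/K^2$ implies the hypotheses of Lemma~\ref{lem:pcp-waste}. Then enlarge $K_a$ as those lemmas require. With $\tau=\tau_a$ fixed, Lemma~\ref{lem:uniform-market-size} gives the constants $\nu_a,M_a$ and the size bounds $\nu_aK^2n\le N\le M_aK^2n$ directly.

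Second, bound total unapproved expenditure. For $0<\eps\le e_a/K^2$, Lemma~\ref{lem:pcp-waste} yields $\sum_v\mathcal W_v\le\mathcal W\le C_aN\eps$. Here is the point I would check carefully. The proof of Lemma~\ref{lem:pcp-waste} absorbs the term $K\eps\mathcal W$ using only $\eps\le 1/(C_aK)$ and $\eps\le c_a$. That argument is global and does not require $\mathcal W$ to be small, so it should remain valid at constant total error. I expect this to be the main obstacle: I must verify that Lemma~\ref{lem:robust-utility-gap} and Lemma~\ref{lem:robust-demand-slack} (the price gap $p>1+d_a$ on utility-one groups and the utility gap $g_a$) hold without assuming any local smallness of unapproved expenditure. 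If they need local control, the fallback is to localize the charging argument: bound $\mathcal W_v$ by $O_a(\eps)$ times the revenue of buyers approving goods in the modules at and adjacent to $v$. Summing that bound over vertices, bounded degree still gives $O_a(N\eps)$.

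Third, do the counting. Call a vertex $u$ \emph{heavy} if $\mathcal W_u>\zeta$. By Markov's inequality there are at most $C_aN\eps/\zeta$ heavy vertices. If a vertex $v$ fails the threshold condition, Lemma~\ref{lem:pcp-local} forces some heavy $u\in B_2(v)$. Since in- and outdegrees are at most two, each vertex has undirected degree at most four. Hence each $u$ lies in $B_2(v)$ for at most $1+4+16=21$ vertices $v$. The number of failing vertices is therefore at most $21C_aN\eps/(c_a\kappa)$. With $\kappa=1/200$ fixed, we may set $B_a=21C_a/(c_a\kappa)$, rounded up to a rational. This constant depends only on $a$, completing the argument.
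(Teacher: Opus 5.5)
Your proposal is correct and follows the paper's proof essentially step for step. You tune $\tau_a,\zeta,e_a$ against the constants of Lemma~\ref{lem:pcp-local}, bound $\mathcal W\le C_aN\eps$ globally by Lemma~\ref{lem:pcp-waste}, and charge each failing vertex to a heavy vertex within distance two, with at most $21$ vertices charged to each heavy vertex. Your worry about Lemma~\ref{lem:pcp-waste} is unfounded: Lemmas~\ref{lem:robust-utility-gap} and~\ref{lem:robust-demand-slack} use only $\eps\le c_a$, large $K$, and per-buyer duality, with no smallness of unapproved expenditure, so the fallback localization is unnecessary.
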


\begin{proof}
Choose the constants $\tau_a,\zeta_a,e_a$ in
Lemma~\ref{lem:global-decoding} sufficiently small, and $K_a$
sufficiently large, so that
\[
 \tau_a\le\min\{\kappa/100,c_a\kappa\},\qquad
 \zeta_a\le c_a\kappa,\qquad e_a\le c_a\kappa^2,
\]
where $c_a$ is from Lemma~\ref{lem:pcp-local}.
These choices preserve all conclusions of
Subsection~\ref{sec:approximation}. Lemma~\ref{lem:uniform-market-size}
gives the size bounds, using the same lower coefficient $\nu_a$
and a sufficiently large upper coefficient $M_a$.

Since $\sum_u\mathcal W_u\le\mathcal W$, at most
$\mathcal W/\zeta_a$ vertices have
$\mathcal W_u>\zeta_a$. The underlying undirected degree is
at most four, so at most $21$ vertices lie within distance two
of any one of them. Every vertex outside these neighborhoods
satisfies the hypothesis of Lemma~\ref{lem:pcp-local} and
decodes correctly. Lemma~\ref{lem:pcp-waste} therefore bounds
the number of failures by
\[
 \frac{21\mathcal W}{\zeta_a}
 \le\frac{21C_a}{\zeta_a}N\eps.
\]
Set $B_a=21C_a/\zeta_a$.
\end{proof}

\begin{proposition}[Forward reduction]\label{prop:pcp-forward}
There is a positive rational $s_a$, depending only on $a$, such that
for every rational $0<\rho\le1$ and every sufficiently large
admissible $K$, every normalized additive $s_a\rho/K^2$-equilibrium
of $\mathcal M(G;a,\tau_a,K)$ decodes to a
$(\kappa,\rho)$-threshold solution.
\end{proposition}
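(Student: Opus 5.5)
The plan is to apply Theorem~\ref{thm:robust-decoding} directly and convert its failure count into a fraction of vertices, using the market-size bound $N\le M_aK^2n$.

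First, we fix the constants $\tau_a,e_a,B_a,\nu_a,M_a,K_a$ supplied by Theorem~\ref{thm:robust-decoding}. We then set
\[
 s_a=\min\Bigl\{e_a,\frac{1}{B_aM_a}\Bigr\},
\]
a positive rational depending only on $a$. Take any admissible $K\ge K_a$, any rational $0<\rho\le1$, and put $\eps=s_a\rho/K^2$. Since $\rho\le1$ and $s_a\le e_a$, we have $0<\eps\le e_a/K^2$. This is exactly the hypothesis of Theorem~\ref{thm:robust-decoding}. Note that the theorem holds uniformly in every admissible $K\ge K_a$ and every graph satisfying~\eqref{eq:degrees}, which is why "sufficiently large admissible $K$" in the statement just means $K\ge K_a$.

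Second, we bound the failures. By Theorem~\ref{thm:robust-decoding}, the minimum-price decoder fails the $\kappa$-threshold condition at no more than $B_aN\eps$ vertices. Using $N\le M_aK^2n$, this is at most
\[
 B_aM_aK^2n\cdot\frac{s_a\rho}{K^2}=B_aM_as_a\,\rho n\le\rho n.
\]
So the decoded vector fails at most $\rho|V|$ vertices. By Definition~\ref{def:weak-threshold}, it is therefore a $(\kappa,\rho)$-threshold solution.

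The only point that needs care is the uniformity of constants. We must check that $s_a$ does not depend on $\rho$, $K$, or $G$. This holds because Theorem~\ref{thm:robust-decoding} fixes $\tau_a$ in advance, and the size bound $N\le M_aK^2n$ then holds with constants depending only on $a$. If the paper's $n$ counts the augmented graph after sink elimination, that is harmless: the graphs considered here already satisfy~\eqref{eq:degrees}, so no augmentation occurs and $n=|V|$. Beyond this, there is no real obstacle, since the substantive local-decoding work is already contained in Lemma~\ref{lem:pcp-local} and Theorem~\ref{thm:robust-decoding}.
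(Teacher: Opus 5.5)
Your proposal is correct and follows the paper's proof essentially verbatim: you choose $s_a\le\min\{e_a,1/(B_aM_a)\}$, note that $\eps=s_a\rho/K^2\le e_a/K^2$ so Theorem~\ref{thm:robust-decoding} applies, and use $N\le M_aK^2n$ to bound the failure fraction by $B_aM_as_a\rho\le\rho$. The paper also remarks that the market and decoder have polynomial size and complexity; this is used later in the hardness proof but is not needed for the statement as written.
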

\begin{proof}
Let $e_a,B_a,M_a$ be the constants in
Theorem~\ref{thm:robust-decoding} and
Lemma~\ref{lem:uniform-market-size}. Choose a positive rational
\[
 s_a\le\min\{e_a,1/(B_aM_a)\}.
\]
For $\eps=s_a\rho/K^2$, uniform robustness applies because
$\eps\le e_a/K^2$. Together with the market-size bound, it gives
\[
 \frac{\#\{\text{failed threshold vertices}\}}{n}
 \le B_a\frac{N}{n}\eps
 \le B_aM_aK^2\eps
 =B_aM_as_a\rho\le\rho.
\]
For fixed parameters the market has $N=O_{a,K}(n)$ buyers
and goods, so its explicit utility matrix has polynomial size.
Construction and decoding use rational arithmetic of polynomial
bit complexity.
\end{proof}

\begin{proof}[Proof of Theorem~\ref{thm:pcp-intro}]
Assume PCP for PPAD and choose a fixed rational $0<\rho\le1$
for which the weak threshold problem is PPAD-hard, as supplied
by Lemma~\ref{lem:weak-threshold-source}. Fix a sufficiently
large admissible $K$ and put $\eps_0=s_a\rho/K^2>0$.
Proposition~\ref{prop:pcp-forward} converts every
$\eps_0$-equilibrium of the constructed market into a
$(\kappa,\rho)$-threshold solution. The construction has
polynomial size, and Lemma~\ref{lem:rational-witness} gives
totality and polynomial-length rational outputs. Every utility
belongs to $\{0,a,1\}$, proving the theorem. A PTAS run at
this fixed accuracy would solve the problem in polynomial time. Hence, assuming PCP for PPAD, this problem admits
no PTAS unless $\mathrm{PPAD}\subseteq\mathrm{P}$.
\end{proof}

\section{Two common marginal slopes}\label{sec:two-slopes}

Let $\mathcal B$ be the set of buyers and $\mathcal G$ the set of
goods, with $|\mathcal B|=|\mathcal G|=N$. Budgets, supplies, and
required bundle sizes are one, as in Section~\ref{sec:prelim}.
Throughout this section, buyer indices $i$ range over $\mathcal B$
and good indices $j$ range over $\mathcal G$, unless a smaller set
is specified.

We prove Theorem~\ref{thm:two-slopes} by giving a strongly polynomial-time
algorithm for an exact normalized equilibrium with
minimum-expenditure tie-breaking. We first state the search problem
and reduce it to threshold utilities. We then describe the flow
algorithm, prove its correctness, and bound its arithmetic operation
count and intermediate encoding lengths.

\subsection{Problem definition and normalization}

\begin{definition}[Two-slope HZ equilibrium]
\label{def:two-slopes-problem}
For each buyer $i\in\mathcal B$, the input specifies rational slopes
$c_i^{(1)}\ge c_i^{(2)}\ge0$ and rational breakpoints
$\tau_{ij}\in[0,1]$ for every good $j\in\mathcal G$. The slopes are common
across goods for a fixed buyer but may differ between buyers. Utilities
are separable:
\[
 U_i(x_i)=\sum_{j\in\mathcal G} f_{ij}(x_{ij}),\qquad
 f_{ij}(t)=c_i^{(2)}t+
       \bigl(c_i^{(1)}-c_i^{(2)}\bigr)\min\{t,\tau_{ij}\},
 \quad 0\le t\le1.
\]

\begin{figure}[htbp]
\centering
\begin{tikzpicture}[
  x=4.5cm,y=4.5cm,>=stealth,
  utilityaxis/.style={->,draw=black!65,line width=.65pt},
  utilityguide/.style={draw=black!35,dashed,line width=.5pt},
  firstslope/.style={draw=blue!55!black,line width=1.2pt},
  secondslope/.style={draw=orange!65!black,line width=1.2pt}
]
\begin{scope}
  \coordinate (origin) at (0,0);
  \coordinate (knee) at (.45,.63);
  \coordinate (endpoint) at (1,.8225);
  \node at (.5,1.28) {(a) Original utility};
  \draw[utilityaxis] (-.04,0) -- (1.08,0) node[right] {$t$};
  \draw[utilityaxis] (0,-.04) -- (0,1.08)
    node[above] {$f_{ij}(t)$};
  \node[below left] at (origin) {$0$};
  \draw[draw=black!65] (1,.02) -- (1,-.02) node[below] {$1$};
  \draw[utilityguide] (knee) -- (.45,0);
  \draw[draw=black!65] (.45,.02) -- (.45,-.02)
    node[below] {$\tau_{ij}$};
  \draw[firstslope] (origin) --
    node[midway,above,sloped,text=blue!55!black] {$c_i^{(1)}$}
    (knee);
  \draw[secondslope] (knee) --
    node[midway,above,sloped,text=orange!65!black] {$c_i^{(2)}$}
    (endpoint);
  \fill[black!70] (knee) circle[radius=1.5pt];
\end{scope}
\begin{scope}[xshift=7cm]
  \coordinate (origin) at (0,0);
  \coordinate (knee) at (.45,.45);
  \coordinate (endpoint) at (1,.45);
  \node at (.5,1.28) {(b) Normalized utility};
  \draw[utilityaxis] (-.04,0) -- (1.08,0) node[right] {$t$};
  \draw[utilityaxis] (0,-.04) -- (0,1.08)
    node[above] {$\widetilde f_{ij}(t)$};
  \node[below left] at (origin) {$0$};
  \draw[draw=black!65] (1,.02) -- (1,-.02) node[below] {$1$};
  \draw[utilityguide] (knee) -- (.45,0);
  \draw[utilityguide] (knee) -- (0,.45)
    node[left] {$\tau_{ij}$};
  \draw[draw=black!65] (.45,.02) -- (.45,-.02)
    node[below] {$\tau_{ij}$};
  \draw[firstslope] (origin) --
    node[midway,above,sloped,text=blue!55!black] {$1$}
    (knee);
  \draw[secondslope] (knee) --
    node[midway,above,text=orange!65!black] {$0$}
    (endpoint);
  \fill[black!70] (knee) circle[radius=1.5pt];
\end{scope}
\end{tikzpicture}
\caption{A two-slope utility and its normalized threshold utility
for the same buyer $i$, good $j$, and breakpoint $\tau_{ij}$.
The left panel illustrates $c_i^{(1)}>c_i^{(2)}>0$.
For $c_i^{(1)}>c_i^{(2)}$, subtracting $c_i^{(2)}t$ and
dividing by $c_i^{(1)}-c_i^{(2)}$ gives
$\widetilde f_{ij}(t)=\min\{t,\tau_{ij}\}$, shown on the right
with slope one (a $45$-degree segment) followed by slope zero.
On unit bundles this induces a positive affine transformation of
the buyer's total utility.}
\label{fig:two-slope-utilities}
\end{figure}

The problem is to compute an exact normalized HZ equilibrium with
minimum-expenditure tie-breaking, as defined in Section~\ref{sec:prelim}.
Here the demand value $V_i(p)$ is given by~\eqref{eq:value} with
objective $U_i(y)$ in place of $\sum_j u_{ij}y_j$; its budget,
unit-quantity, and nonnegativity constraints are unchanged.
\end{definition}

On a unit bundle, the utility in
Definition~\ref{def:two-slopes-problem} satisfies
\[
 U_i(x_i)=c_i^{(2)}+
       \bigl(c_i^{(1)}-c_i^{(2)}\bigr)
       \sum_j\min\{x_{ij},\tau_{ij}\}.
\]
If $c_i^{(1)}>c_i^{(2)}$, a positive affine normalization therefore
preserves both utility-maximizing bundles and minimum-expenditure
tie-breaking. If the slopes are equal, the buyer is indifferent
among unit bundles; we represent her by replacing her entire
threshold row with zeros. Hence it suffices to solve the problem for
the normalized \emph{threshold utilities}
\begin{equation}\label{eq:threshold-utility}
 U_i(x_i)=\sum_j\min\{x_{ij},\tau_{ij}\}.
\end{equation}
Throughout the two-slope analysis, $U_i$ and $\tau_{ij}$ denote these normalized utilities
and thresholds.

Each of the first $\tau_{ij}$ units of good $j$ contributes one unit
of utility per unit of quantity to buyer $i$; additional units
contribute no utility. We call the former quantities \emph{useful}.
The algorithm assigns useful quantities in increasing order of
price, then fills any remaining bundle deficits with free goods.
For thresholds in $\{0,1\}$, receiving a positive threshold fills
the buyer's bundle; this is the bivalued case treated
by~\cite[Section 4]{vy}. With fractional thresholds, a buyer may
continue through several price levels, so the algorithm must track
both her remaining budget and her remaining quantity.

\subsection{The flow algorithm}

\paragraph{State and invariant.}
Let $\mathcal J\subseteq\mathcal G$ be the goods whose prices have not yet been
assigned. These goods are still unallocated; trial flows will not
change the current allocation $x$. Define each buyer's remaining
budget and quantity by
\[
 B_i=1-\sum_{j\in\mathcal G\setminus\mathcal J}p_jx_{ij},\qquad
 R_i=1-\sum_{j\in\mathcal G}x_{ij},
\]
and let
\[
 \mathcal I=\{i\in\mathcal B:B_i>0\text{ and }R_i>0\}
\]
be the active buyers. Initially, $x=0$, $B_i=R_i=1$, and
$\mathcal I=\mathcal B$, $\mathcal J=\mathcal G$. We update $\mathcal I$ after
committing each price group, using the updated budgets and quantities.

Before the final completion step, we maintain $x_{ij}\le\tau_{ij}$
and the invariant
\begin{equation}\label{eq:threshold-invariant}
 x_{ij}=\tau_{ij}
 \qquad\text{for every }i\in\mathcal I
 \text{ and every }j\in\mathcal G\setminus\mathcal J.
\end{equation}
Thus an active buyer has already received every useful unit available
to her at earlier prices. Every good assigned a positive price is
allocated in full. Any unused supply of free goods is reserved for
completion.

\paragraph{The expenditure network.}
For $\lambda>0$, form a network $\mathcal F(\lambda)$ with
source $s$, sink $t$, goods $j\in\mathcal J$, and buyers
$i\in\mathcal I$. Its capacities are
\[
 \begin{array}{ccl}
 s\to j&:&\lambda,\\
 j\to i&:&\lambda\tau_{ij},\\
 i\to t&:&\min\{B_i,\lambda R_i\}.
 \end{array}
\]

Every ordinary maximum-flow computation below starts from zero and
uses shortest augmenting paths, found by breadth-first search.

Write $w_{ij}$ for the flow on $j\to i$, keeping the buyer-first
indexing used in $x_{ij}$ and $\tau_{ij}$. It represents expenditure,
so the corresponding quantity is $w_{ij}/\lambda$. The three arc
types enforce supply, useful-quantity limits, and the buyer's budget
and bundle-size limits, respectively. We set $w_{ij}=0$ for buyers
outside $\mathcal I$ when referring to a full flow matrix.

\begin{figure}[htbp]
\centering
\begin{tikzpicture}[
  x=1cm,y=1cm,>=stealth,
  hzflow/.style={hzedge,->},
  hzotherflow/.style={hzflow,draw=black!25},
  hzterminal/.style={circle,draw=black!65,fill=black!3,
    line width=.7pt,minimum size=.85cm,inner sep=2pt}
]
\node at (-6.9,3.15) {Source};
\node at (-2.4,3.15) {Goods $\mathcal J$};
\node at (2.2,3.15) {Active buyers $\mathcal I$};
\node at (6.9,3.15) {Sink};

\node[hzterminal] (source) at (-6.9,0) {$s$};
\node[hzterminal] (sink) at (6.9,0) {$t$};

\node[hzgood] (firstgood) at (-2.4,2.2) {$1$};
\node[hzgood] (good) at (-2.4,0) {$j$};
\node[hzgood] (lastgood) at (-2.4,-2.2) {$N$};

\node[hzbuyer] (firstbuyer) at (2.2,2.2) {$1$};
\node[hzbuyer] (buyer) at (2.2,0) {$i$};
\node[hzbuyer] (lastbuyer) at (2.2,-2.2) {$N$};

\foreach \y in {1.1,-1.1} {
  \node at (-2.4,\y) {$\vdots$};
  \node at (2.2,\y) {$\vdots$};
}

\foreach \g in {firstgood,lastgood}
  \draw[hzotherflow] (source) -- (\g);

\foreach \b in {firstbuyer,lastbuyer}
  \draw[hzotherflow] (\b) -- (sink);

\foreach \g/\b in {
  firstgood/firstbuyer,firstgood/buyer,firstgood/lastbuyer,
  good/firstbuyer,good/lastbuyer,
  lastgood/firstbuyer,lastgood/buyer,lastgood/lastbuyer%
} {
  \draw[hzotherflow] (\g) -- (\b);
}

\draw[hzflow] (source) --
  node[hzcoef,above] {$\lambda$} (good);

\draw[hzflow] (good) --
  node[hzcoef,above] {$\lambda\tau_{ij}$}
  node[hzcoef,below] {flow $w_{ij}$} (buyer);

\draw[hzflow] (buyer) --
  node[hzcoef,above,pos=.55] {$\min\{B_i,\lambda R_i\}$} (sink);

\end{tikzpicture}
\caption{The expenditure network $\mathcal F(\lambda)$.
Solid blue ellipses denote unpriced goods, and dashed orange
ellipses denote active buyers. Labels above the highlighted arcs
give their capacities; lighter arcs have the corresponding
capacities for their endpoints. A flow $w_{ij}$ on $j\to i$
represents expenditure and allocates $w_{ij}/\lambda$ units of
good $j$ to buyer $i$. The original indices $1,\ldots,N$ are shown
schematically; each stage retains only goods in $\mathcal J$
and buyers in $\mathcal I$.}
\label{fig:threshold-network}
\end{figure}

In the residual network, unused capacity gives a forward arc and
positive flow gives a reverse arc. A good is \emph{closed} if it has
no residual path to $t$. Closed goods may still have individual
residual arcs. To find them, reverse all residual arcs, search from
$t$, and select the goods not reached by the search. The pseudocode
denotes this operation by $\textsc{ClosedGoods}$.

\paragraph{Selecting a positive price.}
All remaining goods can be allocated usefully at a common price
$\lambda$ if and only if the maximum flow has value
$\lambda|\mathcal J|$. For reference, the largest such price is
characterized by
\begin{equation}\label{eq:threshold-price-lp}
\begin{aligned}
 \max_{\lambda,w}\ &\lambda\\
 \text{subject to}\quad
 &\sum_{i\in\mathcal I}w_{ij}=\lambda
       &&\text{for }j\in\mathcal J,\\
 &\sum_{j\in\mathcal J}w_{ij}\le B_i
       &&\text{for }i\in\mathcal I,\\
 &\sum_{j\in\mathcal J}w_{ij}\le\lambda R_i
       &&\text{for }i\in\mathcal I,\\
 &0\le w_{ij}\le\lambda\tau_{ij}
       &&\text{for }i\in\mathcal I,\ j\in\mathcal J,\\
 &\lambda\ge0.
\end{aligned}
\end{equation}
We compute this optimum without invoking a linear-programming
algorithm, using the parametric maximum-flow algorithm of
Gallo, Grigoriadis, and Tarjan~\cite{ggt}.
We commit only the closed goods of the returned expenditure flow.
Buyers who remain
active receive their full thresholds on those goods, preserving
\eqref{eq:threshold-invariant}; the other trial allocations are
discarded. The correctness proof shows that the closed set is nonempty
and that the next price is strictly larger.

\paragraph{The price-finding network.}
For $\vartheta\ge0$, form a quantity network $\mathcal Q(\vartheta)$
with source $s$, sink $t$, two vertices $i^-,i^+$ for each
$i\in\mathcal I$, and one vertex for each $j\in\mathcal J$.
Its arcs and capacities are
\begin{equation}\label{eq:threshold-price-network}
 \begin{array}{ccl}
 s\to i^-&:&B_i\vartheta,\\
 i^-\to i^+&:&R_i,\\
 i^+\to j&:&\tau_{ij},\\
 j\to t&:&1.
 \end{array}
\end{equation}
Only the source capacities depend on $\vartheta$; the active sets,
budgets, and remaining quantities stay fixed during a price search.
The two consecutive buyer arcs enforce both affordability and the
remaining quantity limit. Figure~\ref{fig:threshold-price-network}
illustrates this auxiliary network.

\begin{figure}[htbp]
\centering
\begin{tikzpicture}[
  x=1cm,y=1cm,>=stealth,
  hzflow/.style={hzedge,->},
  hzotherflow/.style={hzflow,draw=black!25},
  hzterminal/.style={circle,draw=black!65,fill=black!3,
    line width=.7pt,minimum size=.85cm,inner sep=2pt}
]
\node at (-6.9,3.15) {Source};
\node at (-2.3,3.15) {Active buyers $\mathcal I$};
\node at (3.0,3.15) {Goods $\mathcal J$};
\node at (6.9,3.15) {Sink};

\node[hzterminal] (source) at (-6.9,0) {$s$};
\node[hzterminal] (sink) at (6.9,0) {$t$};

\node[hzbuyer] (firstminus) at (-3.8,2.2) {$1^-$};
\node[hzbuyer] (minus) at (-3.8,0) {$i^-$};
\node[hzbuyer] (lastminus) at (-3.8,-2.2) {$N^-$};

\node[hzbuyer] (firstplus) at (-0.8,2.2) {$1^+$};
\node[hzbuyer] (plus) at (-0.8,0) {$i^+$};
\node[hzbuyer] (lastplus) at (-0.8,-2.2) {$N^+$};

\node[hzgood] (firstgood) at (3.0,2.2) {$1$};
\node[hzgood] (good) at (3.0,0) {$j$};
\node[hzgood] (lastgood) at (3.0,-2.2) {$N$};

\foreach \y in {1.1,-1.1} {
  \node at (-3.8,\y) {$\vdots$};
  \node at (-0.8,\y) {$\vdots$};
  \node at (3.0,\y) {$\vdots$};
}

\foreach \bm/\bp in {firstminus/firstplus,lastminus/lastplus} {
  \draw[hzotherflow] (source) -- (\bm);
  \draw[hzotherflow] (\bm) -- (\bp);
}

\foreach \bp/\g in {
  firstplus/firstgood,firstplus/good,firstplus/lastgood,
  plus/firstgood,plus/lastgood,
  lastplus/firstgood,lastplus/good,lastplus/lastgood%
} {
  \draw[hzotherflow] (\bp) -- (\g);
}

\foreach \g in {firstgood,lastgood}
  \draw[hzotherflow] (\g) -- (sink);

\draw[hzflow] (source) --
  node[hzcoef,above] {$B_i\vartheta$} (minus);

\draw[hzflow] (minus) --
  node[hzcoef,above] {$R_i$} (plus);

\draw[hzflow] (plus) --
  node[hzcoef,above] {$\tau_{ij}$}
  node[hzcoef,below] {flow $q_{ij}$} (good);

\draw[hzflow] (good) --
  node[hzcoef,above] {$1$} (sink);

\end{tikzpicture}
\caption{The quantity network $\mathcal Q(\vartheta)$ for price
selection, with $\vartheta=1/\lambda$. Each active buyer is split
into two orange vertices; blue vertices represent unpriced goods.
Highlighted arcs are labeled with their capacities, and only
$B_i\vartheta$ varies during the search. The smallest $\vartheta$
permitting flow $|\mathcal J|$ gives the largest feasible price
$\lambda=1/\vartheta$.}
\label{fig:threshold-price-network}
\end{figure}

\begin{lemma}[Strongly polynomial price selection]
\label{lem:threshold-strong-price}
Suppose $\mathcal J\ne\varnothing$ and a positive price is feasible
in~\eqref{eq:threshold-price-lp}. Its optimum and a corresponding
expenditure flow can be computed by one call to the GGT parametric
maximum-flow algorithm and one ordinary maximum-flow computation,
using a number of arithmetic operations polynomial in $N$ and
independent of the numerical data.
\end{lemma}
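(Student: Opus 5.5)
The plan is to change variables $\vartheta=1/\lambda$ and show that feasibility of a price $\lambda$ in~\eqref{eq:threshold-price-lp} is equivalent to $\mathcal Q(\vartheta)$ having a flow of value $|\mathcal J|$. Given a feasible $(\lambda,w)$, set $q_{ij}=w_{ij}/\lambda$. Then each good receives one unit, $q_{ij}\le\tau_{ij}$, $\sum_j q_{ij}\le R_i$, and $\sum_j q_{ij}\le B_i/\lambda=B_i\vartheta$. Hence $q$, routed through $i^-\to i^+$, is a flow saturating every arc $j\to t$. Conversely, a saturating flow $q$ in $\mathcal Q(\vartheta)$ gives $w_{ij}=q_{ij}/\vartheta$, which is feasible at $\lambda=1/\vartheta$. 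Thus the optimum $\lambda^*$ equals $1/\vartheta^*$, where
\[
 \vartheta^*=\min\{\vartheta\ge0:\ \mathrm{maxflow}(\mathcal Q(\vartheta))=|\mathcal J|\}.
\]
The hypothesis that some positive price is feasible makes this set nonempty. The set is also closed, since max-flow value is continuous in $\vartheta$, so the minimum exists. It is positive because at $\vartheta=0$ the flow is $0<|\mathcal J|$.

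Next I would apply GGT. In $\mathcal Q(\vartheta)$ only the source arcs vary, and each is nondecreasing in $\vartheta$; the sink arcs are constant. This is exactly the monotone source–sink parametric setting of~\cite{ggt}. Consequently the min-cut function $\kappa(\vartheta)=\mathrm{maxflow}(\mathcal Q(\vartheta))$ is concave and piecewise linear with at most $|V(\mathcal Q)|-1=O(N)$ breakpoints. Each linear piece corresponds to a cut $(S,\bar S)$ with value
\[
 \vartheta\sum_{i:\,i^-\notin S}B_i+c_0(S).
\]
One call to GGT, in its breakpoint-finding mode, returns all breakpoints and the nested sequence of min cuts. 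It uses $O(|V||E|\log(|V|^2/|E|))$ arithmetic operations, polynomial in $N$ and independent of the data. The function $\kappa$ is nondecreasing and bounded by $|\mathcal J|$, which is the value of the cut $\{s\}\cup\ldots$ that cuts all the sink arcs. So I will scan the returned pieces in increasing $\vartheta$ and find the first piece whose line reaches $|\mathcal J|$. Solving the single linear equation on that piece gives $\vartheta^*$ with one division. If the constant value $|\mathcal J|$ is attained only at a breakpoint, that breakpoint is $\vartheta^*$.

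Finally, I will run one ordinary shortest-augmenting-path maximum flow on $\mathcal F(\lambda^*)$ with $\lambda^*=1/\vartheta^*$. By the scaling equivalence above, its value is $\lambda^*|\mathcal J|$, and it yields the required expenditure flow. Edmonds–Karp uses $O(|V||E|^2)$ augmentations and arithmetic operations, with no dependence on the capacities. Alternatively, I could take the GGT flow at $\vartheta^*$ and rescale it; the separate run is specified only so that later residual arguments apply to a canonical flow.

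I expect the main obstacle to be making the breakpoint extraction rigorous. One must check that GGT's guarantee covers capacities of the form $B_i\vartheta$, which are linear with possibly zero slope; this holds because zero-slope arcs are constants. One must also check that $\vartheta^*$ lies where $\kappa$ first attains its maximum $|\mathcal J|$, rather than at some interior point of a piece that GGT does not report. Concavity and monotonicity handle the second point. If needed, I would fall back on Megiddo-style parametric search over the $O(N)$ candidate cut lines. Each line has the form $\alpha\vartheta+\beta$ with $\alpha=\sum B_i$ over a buyer subset, so each intersection with $|\mathcal J|$ is computable, and the overall count of operations stays strongly polynomial.
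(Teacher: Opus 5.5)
Your proposal is correct and follows essentially the same route as the paper: substitute $\vartheta=1/\lambda$, run the GGT breakpoint algorithm on $\mathcal Q(\vartheta)$, take the first $\vartheta$ at which the max-flow value reaches $|\mathcal J|$, and finish with one ordinary maximum flow. The differences are minor. The paper runs that final flow in $\mathcal Q(\vartheta_*)$ and rescales it, rather than running it in $\mathcal F(\lambda^*)$. It also fixes the GGT search interval as $[0,T]$ with $T=\max_{i}R_i/B_i$, where positive-price feasibility gives $F(T)=|\mathcal J|$; your sketch leaves this finite right endpoint implicit.
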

\begin{proof}
Fix the active sets, budgets, and remaining quantities for this
stage, and put $k=|\mathcal J|$. For $\vartheta>0$, a flow of
value $k$ in $\mathcal Q(\vartheta)$ saturates every arc $j\to t$.
Writing $q_{ij}$ for its flow on $i^+\to j$, we have
\[
 \sum_i q_{ij}=1,\qquad
 \sum_j q_{ij}\le\min\{B_i\vartheta,R_i\},\qquad
 0\le q_{ij}\le\tau_{ij}.
\]
Thus $\lambda=1/\vartheta$ and $w_{ij}=\lambda q_{ij}$ give
a feasible expenditure flow at price $\lambda$. Conversely,
dividing a feasible expenditure flow by $\lambda$ gives a
value-$k$ flow in $\mathcal Q(1/\lambda)$.

Let $F(\vartheta)$ be the maximum-flow value of
$\mathcal Q(\vartheta)$, and put
$T=\max_{i\in\mathcal I}R_i/B_i$.
At $\vartheta=T$, each source capacity is at least the following
quantity capacity, so the budget arcs impose no additional
restriction. Positive-price feasibility therefore implies
$F(T)=k$, whereas $F(0)=0$.
By max-flow/min-cut, $F$ is the minimum of finitely many affine
cut-capacity functions, and hence is continuous and piecewise
linear. It is nondecreasing and bounded above by $k$.
Consequently,
\[
 \vartheta_*=
 \min\{\vartheta\in[0,T]:F(\vartheta)=k\},\qquad
 \lambda_* = 1/\vartheta_*
\]
are well-defined, and the total source capacity gives
$\vartheta_*\ge k/\sum_iB_i>0$.

The capacities leaving the source are nondecreasing affine
functions of $\vartheta$, while all other capacities are constant.
These are the hypotheses of the GGT breakpoint algorithm
(Theorem~3.1 of their paper). It computes the breakpoints and
linear pieces of $F$ on $[0,T]$ in strongly polynomial time,
and hence the first parameter $\vartheta_*$ for which $F=k$.
The network has $O(N)$ vertices and $O(N^2)$ arcs.
Finally, compute an ordinary maximum flow $q$ from zero in
$\mathcal Q(\vartheta_*)$ and return
$\lambda=1/\vartheta_*$ and $w_{ij}=\lambda q_{ij}$.
This returned $w$ is a maximum flow of value $\lambda k$ in
the original expenditure network $\mathcal F(\lambda)$.
\end{proof}

We denote this procedure by
$\textsc{Price}(\mathcal I,\mathcal J,B,R)$.
The goods to commit are determined by residual reachability in
$\mathcal F(\lambda)$ using the returned expenditure flow,
not by cuts of $\mathcal Q(\vartheta)$.

\begin{center}
\begingroup
\setlength{\fboxsep}{10pt}
\fbox{%
\begin{minipage}{\dimexpr\linewidth-2\fboxsep-2\fboxrule\relax}
\refstepcounter{algorithm}\label{alg:threshold-equilibrium}
\textbf{Algorithm \thealgorithm.} Strongly polynomial equilibrium for threshold utilities
\medskip

\noindent\textbf{Input:} Rational thresholds $\tau_{ij}\in[0,1]$
for $i\in\mathcal B$, $j\in\mathcal G$.

\smallskip
\noindent\textbf{Output:} A normalized equilibrium $(x,p)$ with
minimum-expenditure tie-breaking.

\begin{enumerate}[label=\arabic*.,leftmargin=1.8em,
                  topsep=6pt,itemsep=5pt,parsep=0pt]
\item Initialize $x\gets0$, $\mathcal I\gets\mathcal B$, and
      $\mathcal J\gets\mathcal G$.
      Set $B_i\gets1$ and $R_i\gets1$ for every buyer $i\in\mathcal B$.

\item \emph{Phase 1: allocate free goods.}
  \begin{enumerate}[label=(\alph*),leftmargin=2em,
                    topsep=3pt,itemsep=2pt,parsep=0pt]
  \item Compute a maximum flow $w$ in the initial quantity network
        $\mathcal F(1)$.
  \item Let $S_0$ be its closed goods.
  \item Assign $p_j\gets0$ and $x_{ij}\gets w_{ij}$ for
        $j\in S_0$, $i\in\mathcal B$.
  \item Update $R_i\gets R_i-\sum_{j\in S_0}x_{ij}$ for each buyer;
        budgets are unchanged.
  \item Remove $S_0$ from $\mathcal J$ and update the active buyers
        $\mathcal I$.
  \end{enumerate}

\item \emph{Phase 2: allocate goods at increasing positive prices.}
  \begin{enumerate}[label=(\alph*),leftmargin=2em,
                    topsep=3pt,itemsep=2pt,parsep=0pt]
  \item While $\mathcal J\ne\varnothing$, do:
    \begin{enumerate}[label=\roman*.,leftmargin=1.8em,
                      topsep=3pt,itemsep=2pt,parsep=0pt]
    \item Compute $(\lambda,w)\gets
          \textsc{Price}(\mathcal I,\mathcal J,B,R)$ using
          Lemma~\ref{lem:threshold-strong-price}.
    \item Let $S$ be the closed goods of this flow in
          $\mathcal F(\lambda)$.
    \item Assign $p_j\gets\lambda$ and $x_{ij}\gets w_{ij}/\lambda$
          for $j\in S$, $i\in\mathcal I$.
    \item Let $d_i=\sum_{j\in S}x_{ij}$ for each active buyer $i$.
    \item Update $(B_i,R_i)\gets(B_i-\lambda d_i,\ R_i-d_i)$
          for each active buyer.
    \item Remove $S$ from $\mathcal J$ and update the active buyers
          $\mathcal I$.
    \item Discard the trial allocations on goods outside $S$.
    \end{enumerate}
  \end{enumerate}

\item \emph{Phase 3: fill the remaining bundles.}
      Use the unused supply of $S_0$ to fill all remaining bundle
      deficits, including those of buyers whose budgets are exhausted.

\item Return $(x,p)$.
\end{enumerate}
\end{minipage}%
}
\endgroup
\end{center}

The free stage leaves all budgets equal to one. In the positive-price
loop, each update uses the active set from the start of that stage;
we recompute the active set only after committing the group. In the
completion phase, we include all buyers with a quantity deficit,
even if their budgets are exhausted. Lemma~\ref{lem:threshold-completion}
shows that free supply is sufficient and that these final quantities
change neither utility nor expenditure.

\subsection{Correctness}

We first prove feasibility and progress of the price stages. We then
show that completion preserves utility and expenditure, and establish
optimal demand with minimum-expenditure tie-breaking.

\subsubsection{Feasibility and progress}

We first record the residual-network property that determines which
allocations can be fixed at a given price.

\begin{lemma}[Closed goods]\label{lem:threshold-closed}
Let $S$ be the closed goods of a maximum flow in
$\mathcal F(\lambda)$. Every good outside $S$ has a saturated
arc from $s$. For each active buyer $i\in\mathcal I$, at least one of the following holds:
\begin{enumerate}
\item every arc $j\to i$ with $j\in S$ is saturated;
\item the arc $i\to t$ is saturated, and all positive flow into
      $i$ comes from $S$.
\end{enumerate}
\end{lemma}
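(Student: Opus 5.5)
Both claims follow from the definition of closed goods (no residual path to $t$) together with the fact that the set of vertices with a residual path to $t$ is closed under predecessors in the residual graph. Let $\mathcal T$ be the set of vertices from which $t$ is reachable in the residual network of the given maximum flow. Then $S=\mathcal J\setminus\mathcal T$. If a vertex $y$ lies in $\mathcal T$ and there is a residual arc from $x$ to $y$, then $x\in\mathcal T$. Contrapositively, every residual arc leaving a vertex outside $\mathcal T$ ends outside $\mathcal T$.

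For the first claim, take a good $j\notin S$. Then $j\in\mathcal T$, so there is a residual path from $j$ to $t$. If the arc $s\to j$ were unsaturated, prepending it would give an augmenting $s$–$t$ path, contradicting maximality. Hence $s\to j$ is saturated.

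For the second claim, fix an active buyer $i$ and suppose alternative~1 fails. Then some $j\in S$ has an unsaturated arc $j\to i$. This is a forward residual arc from $j$ to $i$. Since $j\notin\mathcal T$, we get $i\notin\mathcal T$. Now:
\begin{itemize}
\item If $i\to t$ were unsaturated, $i$ would reach $t$ directly. So $i\to t$ is saturated.
\item Suppose some good $k\notin S$ sends positive flow $w_{ik}>0$. The reverse residual arc $i\to k$ then leads from $i$ into $k\in\mathcal T$, forcing $i\in\mathcal T$, a contradiction.
\end{itemize}
So all positive flow into $i$ comes from goods in $S$, which is alternative~2.

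The only delicate point is the direction of the residual arcs. A flow-carrying arc $k\to i$ yields the reverse arc $i\to k$, whereas an unsaturated arc $j\to i$ yields the forward arc $j\to i$. The argument uses exactly these two facts, together with the closure of $\mathcal T$ under residual predecessors, which is what the $\textsc{ClosedGoods}$ search from $t$ over reversed residual arcs computes. No optimality of $\lambda$ and no property of the capacities beyond flow maximality is needed. The same proof therefore also covers the Phase~1 network $\mathcal F(1)$.
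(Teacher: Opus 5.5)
Your proof is correct and is essentially the paper's argument: an unsaturated source arc into an open good would extend that good's residual path to an augmenting path, and an unsaturated arc $j\to i$ from a closed good means $i$ cannot reach $t$, so $i\to t$ is saturated and any inflow from an open good $k$ would give a reverse arc $i\to k$ into a vertex that reaches $t$. Your predecessor-closed set $\mathcal T$ is only a tidier packaging of the same reachability argument.
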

\begin{proof}
An unsaturated arc $s\to j$ with $j\notin S$ could be followed
by a residual path from $j$ to $t$, contradicting maximality.
Now suppose some arc $j\to i$ with $j\in S$ is unsaturated.
The buyer $i$ cannot reach $t$ in the residual network, since then
$j$ could also reach $t$. In particular, $i\to t$ is saturated.
If $i$ received positive flow from $k\notin S$, the reverse arc
$i\to k$ followed by a residual path from $k$ to $t$ would give
the same contradiction. Thus all her inflow comes from $S$.
\end{proof}

\begin{lemma}[Largest feasible price]\label{lem:threshold-price}
Suppose $\mathcal J\ne\varnothing$ and
\eqref{eq:threshold-price-lp} is feasible at some positive price.
It has a finite positive optimum. Every optimal flow saturates all
source arcs and has a nonempty set of closed goods. Moreover, if a
feasible positive price admits a flow with no closed goods, then a
strictly larger price is feasible.
\end{lemma}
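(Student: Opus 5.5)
The argument splits into finiteness of the optimum, saturation of source arcs, a nonempty closed set, and the strict-improvement claim.

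\emph{Finiteness and positivity.} Summing the equality constraints of \eqref{eq:threshold-price-lp} over $j\in\mathcal J$ and using the budget constraints gives
\[
 \lambda|\mathcal J|=\sum_{i,j}w_{ij}\le\sum_{i\in\mathcal I}B_i,
\]
so $\lambda$ is bounded. The feasible region is a closed polyhedron in $(\lambda,w)$, so the optimum is attained; it is positive because a positive price is feasible by hypothesis. (Alternatively, this is exactly the $\lambda_*=1/\vartheta_*$ of Lemma~\ref{lem:threshold-strong-price}.)

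\emph{Saturation of source arcs.} At any price $\lambda$, a feasible solution of \eqref{eq:threshold-price-lp} is a flow of value $\lambda|\mathcal J|$ in $\mathcal F(\lambda)$. This equals the total source capacity, so every source arc is saturated. Conversely, every maximum flow in $\mathcal F(\lambda_*)$ has this value, so the same holds for it.

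\emph{Nonempty closed set.} I would prove the contrapositive, which is the last assertion of the lemma. Let $\lambda>0$ be feasible and suppose a maximum flow $w$ in $\mathcal F(\lambda)$, with value $\lambda k$ where $k=|\mathcal J|$, has no closed goods. Then every good $j$ has a residual path $P_j$ to $t$. I want to show $\mathcal F(\lambda')$ admits a flow of value $\lambda' k$ for some $\lambda'>\lambda$.

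Since the capacities $\lambda\tau_{ij}$ and $\lambda R_i$ only grow with $\lambda$, and $B_i$ is fixed, each residual arc of $\mathcal F(\lambda)$ keeps positive residual capacity in $\mathcal F(\lambda')$ for $\lambda'$ slightly larger than $\lambda$. Start from $w$ viewed in $\mathcal F(\lambda')$. Each source arc $s\to j$ now has slack $\lambda'-\lambda$. Push $\delta=\lambda'-\lambda$ along $s\to j$ followed by $P_j$, one good at a time.

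The key check, and the main obstacle, is that these $k$ augmentations remain jointly feasible. The paths may share arcs, and in particular they may share the same buyer arc $i\to t$, so the augmentations interfere with one another. I would handle this with a max-flow/min-cut argument at $\lambda$ rather than pushing paths individually:
\begin{enumerate}
\item Since no good is closed, every minimum cut of $\mathcal F(\lambda)$ has all goods on the sink side. Hence the only minimum cut is the source cut $\{s\}$, of capacity $\lambda k$.
\item Every other cut in $\mathcal F(\lambda)$ has capacity strictly greater than $\lambda k$.
\item Each cut capacity is an affine function of $\lambda$ (or a minimum of affine functions, because of the term $\min\{B_i,\lambda R_i\}$), hence continuous in $\lambda$. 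There are finitely many cuts, so the strict gap in step~2 persists for all $\lambda'$ in a neighborhood of $\lambda$.
\item Therefore the source cut, of capacity $\lambda' k$, remains the unique minimum cut for $\lambda'$ slightly above $\lambda$. So $\mathcal F(\lambda')$ has maximum flow value $\lambda' k$, i.e.\ $\lambda'$ is feasible.
\end{enumerate}
This contradicts maximality of $\lambda_*$ when $\lambda=\lambda_*$, so every optimal flow has a nonempty closed set. The closed set consists of goods on the source side of some minimum cut, which is what links closedness to the cut argument.
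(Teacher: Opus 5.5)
Your proposal is correct and follows essentially the same route as the paper: you bound $\lambda$ by summing the budget constraints and read off source saturation from the good equalities. You then obtain the strict price increase from finitely many cut capacities that are continuous in $\lambda$, with strict slack at every cut having a good on its source side, followed by max-flow/min-cut. The one step you leave implicit is why $\{s\}$ is the \emph{unique} minimum cut: cuts with buyers but no goods on the source side exceed $\lambda k$ only because active buyers have $\min\{B_i,\lambda R_i\}>0$. The paper avoids needing this by noting that such cuts contain all source arcs and so have capacity at least $\lambda|\mathcal J|$ at every price. Your abandoned path-pushing paragraph is not needed.
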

\begin{proof}
The budget constraints give
$\lambda|\mathcal J|\le\sum_{i\in\mathcal I}B_i\le N$. Both $\lambda$ and
all $w_{ij}$ are therefore bounded. The feasible region is closed
and nonempty, so an optimum exists and is positive. The good
equalities saturate every source arc, making the corresponding flow
a maximum flow of value $\lambda|\mathcal J|$.

For the final assertion, consider such a flow at price
$\lambda_0>0$. If every good has a residual path to $t$, then
every cut with a good on its source side has positive residual
capacity and hence capacity strictly greater than
$\lambda_0|\mathcal J|$. There are finitely many cuts, and their
capacities depend continuously on $\lambda$. These inequalities
remain strict after a sufficiently small increase in $\lambda$.
A cut with no goods on its source side contains all source arcs, so
its capacity is at least $\lambda|\mathcal J|$ at every price.
Max-flow/min-cut therefore gives a larger feasible price. At an
optimum this is impossible, proving that a closed good exists.
\end{proof}

\begin{lemma}[The free stage]\label{lem:threshold-free}
The set $S_0$ is nonempty. After the free stage, every buyer with
$R_i>0$ has received $\tau_{ij}$ units of each $j\in S_0$.
If any goods remain, the first positive-price stage is feasible at
one and has optimum strictly greater than one.
\end{lemma}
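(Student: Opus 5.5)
We split the proof into three claims: nonemptiness of $S_0$, full thresholds on $S_0$ for buyers with remaining quantity, and feasibility of the first positive-price stage with optimum above one. Throughout we work with the maximum flow $w$ in $\mathcal F(1)$ computed in Phase~1. At $\lambda=1$ and $B_i=R_i=1$, the buyer arcs have capacity $\min\{1,1\}=1$. Thus $w_{ij}$ is simultaneously a quantity and an expenditure.

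\emph{Nonemptiness of $S_0$.} The flow value is at most $\sum_i 1=N=|\mathcal G|$, so it cannot exceed the total source capacity. If the value is less than $N$, some source arc $s\to j$ is unsaturated. Then $j$ cannot reach $t$ in the residual network, for otherwise augmentation would be possible, and so $j\in S_0$. If the value equals $N$, every buyer arc $i\to t$ is saturated. Since $t$ then has no incoming forward residual arcs, only reverse arcs from $t$, no good can reach $t$. Hence $S_0=\mathcal G\neq\varnothing$. In either case $S_0$ is nonempty.

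\emph{Full thresholds on $S_0$.} We apply Lemma~\ref{lem:threshold-closed} at $\lambda=1$. For each buyer, either every arc $j\to i$ with $j\in S_0$ is saturated, giving $x_{ij}=\tau_{ij}$ for all $j\in S_0$; or the arc $i\to t$ is saturated and all of her inflow comes from $S_0$. In the second case she receives $\sum_{j\in S_0}x_{ij}=1$, so $R_i=0$ after the update. Therefore every buyer with $R_i>0$ is in the first case. Budgets are untouched, so the active buyers are exactly those with $R_i>0$, and invariant~\eqref{eq:threshold-invariant} holds.

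\emph{The first positive-price stage.} Suppose $\mathcal J=\mathcal G\setminus S_0\neq\varnothing$. We restrict the same flow $w$ to goods in $\mathcal J$ and to active buyers. By Lemma~\ref{lem:threshold-closed}, every good outside $S_0$ has a saturated source arc, so each $j\in\mathcal J$ sends exactly one unit. A good in $\mathcal J$ cannot send flow to an inactive buyer, since such a buyer's inflow comes entirely from $S_0$. For an active buyer, the flow received from $\mathcal J$ is at most $1-\sum_{j\in S_0}x_{ij}=R_i$. Budgets are still $B_i=1\ge R_i$, so the buyer arc capacity in $\mathcal F(1)$ is $\min\{B_i,R_i\}=R_i$. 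The restricted flow therefore has value $|\mathcal J|$ in the stage network $\mathcal F(1)$, and price one is feasible in~\eqref{eq:threshold-price-lp}.

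\emph{The main step: optimum strictly above one.} By the final assertion of Lemma~\ref{lem:threshold-price}, it suffices to show that this restricted flow has no closed goods in the new network. We argue by contradiction: if some $j\in\mathcal J$ were closed in the restricted network, we would lift the obstruction back to the original network and conclude that $j$ was closed there, contradicting $j\notin S_0$.

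The lifting goes as follows. In the original network, $j$ had a residual path to $t$. Take it to be a shortest such path, which is necessarily simple. Every vertex on this path is non-closed; in particular no good on it lies in $S_0$, and any buyer on it can reach $t$ residually. We compare residual arcs along the path in the two networks:
\begin{itemize}
\item Arcs between goods in $\mathcal J$ and active buyers keep the same flow and the same capacity, so their residual status is unchanged.
\item A buyer on the path is active. An inactive buyer either had $R_i=0$ through the second alternative of Lemma~\ref{lem:threshold-closed}, in which case she cannot reach $t$, or never had positive flow. In the latter case she still has $R_i>0$ and is active.
\item The final arc $i\to t$ is unsaturated in the original network, so $\sum_{j}w_{ij}<1$. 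Restricted to $\mathcal J$, her flow is $\sum_{j\in\mathcal J}w_{ij}=1-\sum_{j\in S_0}w_{ij}-(\text{slack})<R_i$, so the arc remains unsaturated in the stage network.
\end{itemize}
The same path is therefore a residual path to $t$ in the restricted network, contradicting closedness of $j$. Hence the optimum of the first positive-price stage is strictly greater than one.

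I expect the careful handling of this path lifting, in particular the bookkeeping of which buyers are active, to be the delicate point of the proof.
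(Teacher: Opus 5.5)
Your proof is correct and follows the paper's argument essentially step for step. It proves nonemptiness via an unsaturated source arc or saturated sink arcs, obtains full thresholds from Lemma~\ref{lem:threshold-closed}, restricts the Phase~1 flow to $\mathcal J$, and preserves each old residual path to $t$ so that Lemma~\ref{lem:threshold-price} applies. Two details need tidying:
\begin{enumerate}
\item The lifted path must also avoid $s$. This is immediate because $s$ cannot reach $t$ at a maximum flow, as the paper notes.
\item Your bullet on inactive buyers is clearer stated directly. Every buyer with $R_i=0$ received a full unit from $S_0$, so her sink arc is saturated and all her inflow comes from closed goods, and hence she cannot reach $t$. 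The branch ``never had positive flow'' is vacuous.
\end{enumerate}
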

\begin{proof}
The sink capacity of every buyer in $\mathcal F(1)$ is one.
Lemma~\ref{lem:threshold-closed} implies that each buyer receives
either all her thresholds on $S_0$ or one useful unit entirely from
$S_0$. In the second case $R_i=0$, so the invariant holds.

If the maximum-flow value is less than $N$, some source arc is
unsaturated, and its good belongs to $S_0$ by
Lemma~\ref{lem:threshold-closed}. If the value is $N$, all sink
arcs are saturated and no good can reach $t$. Thus $S_0$ is
nonempty in either case.

Restrict the flow to the remaining goods. Their source arcs are
saturated. A buyer receiving $d_i$ free units retains inflow at
most $1-d_i=R_i$ and budget one, so the restricted flow is feasible
at price one. Buyers with $R_i=0$ have no remaining inflow and can
be deleted.

Every remaining good also retains a residual path to $t$. Its old
path avoids closed goods and $s$, since neither could reach $t$.
It avoids deleted buyers as well: their entire inflow came from
$S_0$ and their sink arcs were saturated. For a retained buyer,
both sink capacity and sink flow decrease by $d_i$, preserving
forward residual capacity. All other arcs on the path are unchanged.
Lemma~\ref{lem:threshold-price} now gives a feasible price greater
than one.
\end{proof}

\begin{lemma}[Committing a positive-price group]
\label{lem:threshold-stage}
At each positive-price stage, $S$ is nonempty and every good in
$S$ is allocated in full. Every buyer who remains active receives
her full threshold from each good in $S$. If goods remain, the
next positive-price stage is feasible and has a strictly larger optimum.
There are at most $N$ positive-price stages.
\end{lemma}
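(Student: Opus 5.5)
We argue by induction over the positive-price stages. The inductive hypothesis is that at the start of each stage the price problem~\eqref{eq:threshold-price-lp} is feasible at some positive price, and that the invariant~\eqref{eq:threshold-invariant} holds. For the first stage this is exactly Lemma~\ref{lem:threshold-free}.

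Given a stage, Lemma~\ref{lem:threshold-strong-price} returns the optimal $\lambda$ together with a maximum flow $w$ in $\mathcal F(\lambda)$. By Lemma~\ref{lem:threshold-price}, every source arc is saturated and the closed set $S$ is nonempty. Saturation of $s\to j$ means $\sum_i w_{ij}=\lambda$, so each $j\in S$ is allocated quantity $1$ at price $\lambda$, that is, in full. For an active buyer, Lemma~\ref{lem:threshold-closed} gives two cases.
\begin{enumerate}
\item All arcs from $S$ into her are saturated. Then she receives $\tau_{ij}$ of every $j\in S$, which extends the invariant.
\item Her sink arc $i\to t$ is saturated and all her inflow comes from $S$. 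Then her committed expenditure equals $\min\{B_i,\lambda R_i\}$, so after the update either $B_i=0$ or $R_i=0$, and she leaves $\mathcal I$.
\end{enumerate}
In both cases every buyer who remains active has received her full thresholds on $S$. Also $\lambda d_i\le B_i$ and $d_i\le R_i$, so budgets and quantities stay nonnegative.

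For the next stage, I would imitate the restriction argument in the proof of Lemma~\ref{lem:threshold-free}. Restrict $w$ to the remaining goods $\mathcal J\setminus S$ and the updated active buyers. Buyers in case 2 received no flow from outside $S$, so deleting them removes no inflow from remaining goods. A retained buyer had sink capacity $\min\{B_i,\lambda R_i\}$. Its new capacity is $\min\{B_i-\lambda d_i,\lambda(R_i-d_i)\}$, which equals the old capacity minus $\lambda d_i$, and her inflow from the remaining goods is the old inflow minus $\lambda d_i$. So the restricted flow is feasible at price $\lambda$ with all source arcs saturated.

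Next, each remaining good $j\notin S$ had a residual path to $t$. That path avoids closed goods and $s$, because neither can reach $t$. It also avoids deleted buyers: they cannot reach $t$, since their sink arc is saturated and their only reverse arcs lead into $S$. Along the path, forward residual capacities on retained buyers' sink arcs are unchanged, because capacity and flow both drop by $\lambda d_i$, and all other arcs are unchanged. Hence the restricted flow has no closed goods. The final assertion of Lemma~\ref{lem:threshold-price} then gives a strictly larger feasible price, so the next optimum exceeds $\lambda$ and the induction continues.

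Each stage removes at least one good, so there are at most $N$ stages.

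The main obstacle is the sink-capacity bookkeeping in the retained-buyer step: checking that $\min\{B_i,\lambda R_i\}$ decreases by exactly $\lambda d_i$, since both arguments of the minimum decrease by $\lambda d_i$. That fact is what preserves residual paths to $t$. It also needs the observation that a retained buyer in case 1 may still carry flow from outside $S$, and that this flow is untouched.
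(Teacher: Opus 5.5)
Your proof follows the paper's argument step for step. Lemma~\ref{lem:threshold-price} gives a nonempty $S$ and saturated source arcs, and Lemma~\ref{lem:threshold-closed} gives the two-case dichotomy. The identity $\min\{B_i-\lambda d_i,\lambda(R_i-d_i)\}=\min\{B_i,\lambda R_i\}-\lambda d_i$ handles the restricted flow, and preserved residual paths, fed back into Lemma~\ref{lem:threshold-price}, give a strictly larger price. Your justification that the path avoids $s$ (a maximum flow leaves no residual $s$--$t$ path) differs slightly from the paper's (all source arcs are saturated), but both are valid.

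There is one under-justified step. You implicitly treat the deleted buyers as exactly the buyers in your case~2. A buyer in case~1, with every arc from $S$ saturated, can also leave $\mathcal I$. For example, this happens if $\lambda\sum_{j\in S}\tau_{ij}=B_i$ exactly.

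Two of your steps need the case~2 conclusion for every deleted buyer. The restriction step (``deleting them removes no inflow from remaining goods'') needs it to keep the remaining source arcs saturated. The path-avoidance step (``their sink arc is saturated and their only reverse arcs lead into $S$'') needs it as well.

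The fact is true and takes one line, and the paper states it explicitly. Suppose $B_i-\lambda d_i=0$ or $R_i-d_i=0$. Then $\lambda d_i\ge\min\{B_i,\lambda R_i\}$. On the other hand, $\lambda d_i\le\sum_{j\in\mathcal J}w_{ij}\le\min\{B_i,\lambda R_i\}$. Hence equality holds throughout, so her sink arc is saturated and all her inflow comes from $S$. With this sentence added for all deleted buyers, not just case~2 ones, your proof is complete.
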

\begin{proof}
The first stage is feasible by Lemma~\ref{lem:threshold-free}.
At any feasible stage, Lemma~\ref{lem:threshold-price} gives a
nonempty closed set and saturated source arcs. Dividing their flows
by $\lambda$ allocates each good in $S$ in full.

If a buyer does not receive all her thresholds on $S$,
Lemma~\ref{lem:threshold-closed} gives
\[
 \lambda d_i=\min\{B_i,\lambda R_i\}.
\]
Her budget or quantity is exhausted. She leaves the active set,
and every buyer who remains satisfies the invariant.

Restrict the flow to the remaining goods. At price $\lambda$,
each buyer's sink flow decreases by $\lambda d_i$, exactly the
decrease in her sink capacity:
\[
 \min\{B_i-\lambda d_i,\lambda(R_i-d_i)\}
 =\min\{B_i,\lambda R_i\}-\lambda d_i.
\]
Thus the restricted flow is feasible and saturates all remaining
source arcs. Deleted buyers have zero remaining capacity and inflow.

To obtain a strict price increase, take an old residual path from a
remaining good to $t$. It cannot enter a closed good. It also cannot
pass through $s$, whose source arcs are all saturated. For a deleted
buyer, $\lambda d_i$ equals her old sink capacity; hence all her
inflow came from $S$ and her sink arc was saturated. Such a buyer
had no residual path to $t$, so the old path avoids her. At retained
buyers, the equal decreases in sink capacity and flow preserve
forward residual capacity, and every other arc on the path is
unchanged. All remaining goods can therefore still reach $t$.
Lemma~\ref{lem:threshold-price} gives a strictly larger feasible
price, completing the induction. Each stage removes at least one
good, so there are at most $N$ stages.
\end{proof}

\subsubsection{Completion and optimal demand}

\begin{lemma}[Completing the bundles]\label{lem:threshold-completion}
The completion step clears every good and fills every unit bundle
without changing utility or expenditure. The resulting allocation
respects all budgets, and prices are normalized.
\end{lemma}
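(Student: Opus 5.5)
We need to show four things: the free supply suffices to fill every deficit; the completion quantities add neither utility nor expenditure; budgets are respected; and prices are normalized. The argument is a counting identity followed by two short observations about the invariant.

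\emph{Step 1: supply counting.} After Phase 2 terminates, $\mathcal J=\varnothing$. Every good priced positively is allocated in full by Lemma~\ref{lem:threshold-stage}. Each buyer has received at most one unit: sink capacities enforce $\sum_j x_{ij}\le 1$, since $R_i$ is decreased by exactly the quantity received and $\min\{B_i,\lambda R_i\}$ caps the inflow. Let $D=\sum_i R_i$ be the total deficit, so that $\sum_{i,j}x_{ij}=N-D$. Positively priced goods contribute their full supply to this total. Hence the unused supply of $S_0$ is
\[
\sum_{j\in S_0}\Bigl(1-\sum_i x_{ij}\Bigr)=N-\sum_{i,j}x_{ij}=D,
\]
using $|\mathcal B|=|\mathcal G|=N$. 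We can therefore fill the deficits greedily: assign free leftover quantity to buyers in any order. This produces a feasible allocation with every good cleared and every bundle of size one.

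\emph{Step 2: no change in utility or expenditure.} Expenditure is unchanged because all added goods have price zero. The main point is utility: we must show that added quantity never falls below a threshold, i.e.\ that before completion $x_{ij}=\tau_{ij}$ for every buyer $i$ with $R_i>0$ and every $j\in S_0$. Then extra units of $S_0$ goods lie beyond the threshold, and $\min\{x_{ij},\tau_{ij}\}$ is unchanged.

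The fact itself comes from Lemma~\ref{lem:threshold-free}: every buyer with $R_i>0$ after Phase 1 received $\tau_{ij}$ of each $j\in S_0$. The subtle case is a buyer who left the active set in Phase 2 because her budget was exhausted while $R_i>0$ remained. She still satisfies the property, since Phase 2 never alters allocations on $S_0$ and $R_i$ only decreases over time: if $R_i>0$ at the end, then $R_i>0$ after Phase 1. A buyer whose $R_i$ hit zero in Phase 1 receives nothing in completion. The original thresholds were respected since $x_{ij}\le\tau_{ij}$ was maintained, so adding beyond $\tau_{ij}$ gives zero marginal utility.

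\emph{Step 3: budgets and normalization.} Spending at each stage is $\lambda d_i\le B_i$ because sink capacity is at most $B_i$, so total expenditure is at most one. Normalization holds because $S_0\ne\varnothing$ by Lemma~\ref{lem:threshold-free} and those goods have price $0$, while all other prices are positive.

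I expect the only delicate point to be Step 2 for buyers who became inactive by budget exhaustion in Phase 2. The observation handling it is that $R_i>0$ at the end implies $R_i>0$ after Phase 1, and Lemma~\ref{lem:threshold-free} applies at that moment.
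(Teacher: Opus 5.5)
Your proposal is correct and follows the paper's proof essentially step for step. You use the counting identity to show that the unused supply of $S_0$ equals $\sum_i R_i$. You note that any buyer with a final deficit already had one after Phase~1, so by Lemma~\ref{lem:threshold-free} she holds her full threshold on every good of $S_0$; hence the added free quantities change neither utility nor expenditure, even for budget-exhausted buyers. Normalization then follows from $S_0\neq\varnothing$. Your explicit remark that Phase~2 never alters allocations on $S_0$ is a detail the paper leaves implicit.
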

\begin{proof}
Every allocation before completion respects the remaining budgets
and quantities. All goods outside $S_0$ are allocated in full.
Since both total supply and total required quantity are $N$, the
unused supply in $S_0$ equals $\sum_iR_i$. The greedy procedure
therefore exhausts all remaining supply and all deficits, preserving
nonnegativity and satisfying every row and column constraint.

A buyer with a remaining deficit also had a deficit after the free
stage. By Lemma~\ref{lem:threshold-free}, she has already received
her full threshold on each good of $S_0$. Additional quantities
from these goods have zero marginal utility and zero cost, including
for buyers whose budgets have been exhausted. Finally, $S_0$ is
nonempty and has price zero, so the prices are normalized.
\end{proof}

\begin{lemma}[Optimal demand and expenditure]
\label{lem:threshold-optimality}
Every buyer's final allocation is a minimum-expenditure bundle
among her utility-maximizing affordable unit bundles.
\end{lemma}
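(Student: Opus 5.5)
We will work with the normalized threshold utilities~\eqref{eq:threshold-utility}, since the affine normalization preserves both demand and minimum-expenditure tie-breaking. Fix a buyer $i$ and let $0=\lambda_0<\lambda_1<\dots<\lambda_k$ be the distinct prices. By Lemmas~\ref{lem:threshold-stage} and~\ref{lem:threshold-price} these are exactly the prices of the successive stages, in strictly increasing order. Let $s$ be the last stage at which $i$ was active. By invariant~\eqref{eq:threshold-invariant} and Lemma~\ref{lem:threshold-free}, she has received her full threshold $\tau_{ij}$ on every good priced at stages $0,\dots,s-1$. At stage $s$ she received useful quantities $x_{ij}\le\tau_{ij}$, and she left the active set, or ran out of goods, for one of three reasons: (a) $B_i=0$; (b) $R_i=0$; or (c) she received all thresholds at every price level, including stage $s$. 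Completion adds only zero-utility, zero-cost quantity by Lemma~\ref{lem:threshold-completion}.

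\textbf{Optimality.} We use a greedy exchange argument. Sort the useful units $(j,\text{units}\le\tau_{ij})$ by price. The demand problem at these prices is then a fractional knapsack with two constraints, quantity and budget, and unit value per useful unit. Utility equals total useful quantity. Any affordable unit bundle has useful quantity at most $\sum_j\tau_{ij}$, at most one, and at most the largest amount purchasable with budget one when the cheapest useful units are bought first. Her allocation is a prefix of the price-sorted useful units, with ties inside $\lambda_s$ resolved arbitrarily. This prefix stops only when the budget, the quantity, or the supply of useful units is exhausted. In case (c) she holds every useful unit, so her utility is $\sum_j\tau_{ij}$, which is maximal. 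In case (b) her utility is $1$, which is maximal. In case (a) we show that any bundle with more useful quantity must spend more than one. Indeed, replacing the cheapest-prefix quantities by any other collection of the same total useful quantity weakly increases cost, since the prefix is cost-minimal among collections of that size. Hence exceeding her useful quantity would cost strictly more than one: any extra unit costs at least $\lambda_s>0$, so the total exceeds the budget spent, which is one.

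\textbf{Minimum expenditure.} Every optimal bundle must contain $U^*$ useful units. The cheapest way to obtain $U^*$ useful units is the price-sorted prefix, whose cost equals her expenditure, since the completion quantity is free. Non-useful quantity can be placed on free goods of $S_0$ at zero cost, so no optimal bundle is cheaper.

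\textbf{Main obstacle.} The delicate step is confirming that her allocation really is a cheapest prefix. Units at price $\lambda_s$ may be split arbitrarily, which is harmless because they have equal price. The real requirement is that no cheaper useful unit was left unbought while she was active. This is exactly what~\eqref{eq:threshold-invariant} guarantees for stages before $s$. For goods priced later, all prices exceed $\lambda_s$ by strict monotonicity. We also handle buyers inactive from Phase 1. Those with $R_i=0$ after the free stage got a full useful unit at price zero and are trivially optimal. Zero-threshold buyers (indifferent) are in case (c) with utility $0$ and expenditure $0$.
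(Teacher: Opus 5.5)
Your proof is correct and follows essentially the paper's route: both rest on the same facts, namely that a departing buyer holds every threshold below her exit price $P$, holds nothing above $P$, and pays only for useful quantity. The paper makes your cheapest-prefix exchange step explicit through the per-good inequality $p_j(z_j-\bar z_j)\ge P(z_j-\bar z_j)$, which sums to $p\cdot y_i-p\cdot x_i\ge P\bigl(U_i(y_i)-U_i(x_i)\bigr)$ and yields optimality and minimum expenditure at once.
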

\begin{proof}
First consider a buyer who leaves the active set at a positive price
$P$. She has received every threshold below $P$, receives no goods
above $P$, and pays only for useful quantities. Completion adds
only free quantities and preserves these properties.

For an alternative unit bundle $y_i$, write
$z_j=\min\{y_{ij},\tau_{ij}\}$ and
$\bar z_j=\min\{x_{ij},\tau_{ij}\}$. Below $P$ we have
$\bar z_j=\tau_{ij}$ and $z_j-\bar z_j\le0$; above $P$ we
have $\bar z_j=0$ and $z_j-\bar z_j\ge0$. Including goods at
price $P$, we obtain in every case
\[
 p_j(z_j-\bar z_j)\ge P(z_j-\bar z_j).
\]
Since $p\cdot y_i\ge\sum_jp_jz_j$ and
$p\cdot x_i=\sum_jp_j\bar z_j$, summation gives
\begin{equation}\label{eq:threshold-demand}
 p\cdot y_i-p\cdot x_i
 \ge P\bigl(U_i(y_i)-U_i(x_i)\bigr).
\end{equation}
If her budget was exhausted, $p\cdot x_i=1$, and any strict
utility improvement would cost more than one. If her quantity was
exhausted, she already has utility one, the maximum possible on a
unit bundle. In either case she is optimal, and
\eqref{eq:threshold-demand} excludes a cheaper bundle of equal
utility.

A buyer who fills her useful bundle in the free stage has utility
one at zero cost. Finally, consider a buyer still active when all
prices have been assigned. The invariant gives her every threshold,
so her utility is $\sum_j\tau_{ij}$. This is the largest possible
utility, and any bundle attaining it must contain at least
$\tau_{ij}$ of every good, at cost at least
$\sum_jp_j\tau_{ij}$. Her allocation has exactly this cost,
because completion adds only free goods. This includes all-zero
threshold rows and thus the originally indifferent buyers.
\end{proof}

\subsection{Encoding lengths and strongly polynomial running time}

The stage prices depend on budgets obtained at earlier stages.
To bound their encoding lengths, we use the invariant to express
the surviving buyers' allocations in terms of the input thresholds.
We also bound all trial parameters and flow data used by the
parametric price finder.

\begin{lemma}[Polynomial encoding lengths]\label{lem:threshold-bits}
All trial parameters, capacities, flows, prices, remaining budgets
and quantities, and final allocations can be computed with encoding
length polynomial in the binary input length.
\end{lemma}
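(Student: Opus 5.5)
The approach is to show that every stage price is the optimum of a small linear system whose coefficients are input thresholds, so its bit length is bounded independently of the earlier stages. The key tool is invariant~\eqref{eq:threshold-invariant}. A buyer active at stage $k$ has received exactly $\tau_{ij}$ of every good priced earlier. Hence her remaining budget and quantity are
\[
 B_i=1-\sum_{j\in\mathcal G\setminus\mathcal J}p_j\tau_{ij},\qquad
 R_i=1-\sum_{j\in\mathcal G\setminus\mathcal J}\tau_{ij},
\]
so $R_i$ is an input-determined rational. $B_i$ is affine in the previously committed prices, with input coefficients. I would therefore first bound the prices, and then read off all remaining quantities.

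For the prices, I would use the structure of the GGT computation in Lemma~\ref{lem:threshold-strong-price}. The parameter $\vartheta_*$ is the first point where $F(\vartheta)=k$. At that point, some minimum cut of $\mathcal Q(\vartheta_*)$ has capacity $k$ and a positive $\vartheta$-coefficient. Its capacity has the form
\[
 \vartheta\sum_{i\in X}B_i+\sum_{i\in Y}R_i+\sum_{(i,j)\in Z}\tau_{ij}+|\mathcal J\setminus T|
\]
for some sets $X,Y,Z,T$. Solving for $\vartheta_*$ gives
\[
 \lambda_*=\frac{\sum_{i\in X}B_i}{k-\sum_{i\in Y}R_i-\sum_{(i,j)\in Z}\tau_{ij}-|\mathcal J\setminus T|}.
\]
The denominator is a sum of at most $O(N^2)$ input rationals and small integers, so its bit length is polynomial. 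Substituting the expression for $B_i$, the numerator equals $|X|$ minus a sum of earlier prices weighted by thresholds. Naively iterating this over $N$ stages could square the bit length at each step.

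To avoid that blow-up, I would collect the equations from all stages into one linear system in the unknown distinct prices $P_1<\dots<P_s$. Stage $k$ contributes
\[
 P_k\cdot\mathrm{den}_k+\sum_{\ell<k}P_\ell\sum_{i\in X_k}\sum_{j\in S_\ell}\tau_{ij}=|X_k|.
\]
This system is triangular, and its diagonal entries $\mathrm{den}_k$ are nonzero. Its coefficients are sums of at most $N^2$ input numbers. Cramer's rule then expresses each $P_k$ as a ratio of determinants of an $s\times s$ matrix with such entries. Putting all input rationals over a common denominator, with bit length at most the input length, the standard Hadamard-type bound gives polynomial bit length for every $P_k$.

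With the prices bounded, the rest follows directly. The quantities $B_i$, the capacities of $\mathcal F(\lambda)$ and $\mathcal Q(\vartheta)$, and the breakpoints produced by GGT are all polynomial-size rationals. GGT breakpoints are intersections of two affine cut functions whose coefficients have the bounded form above, so the same argument applies to them. Flows computed by shortest augmenting paths from zero are obtained by $O(N^5)$ additions and subtractions of capacities along paths. One can also take the final flow values to be basic solutions of the flow polytope, which are again Cramer ratios. Final allocations are $w_{ij}/\lambda$ on committed goods, plus greedy completion quantities, which are differences of these.

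I expect the main obstacle to be making the Cramer argument cover the intermediate GGT quantities and the augmenting-path flows, and not only the final prices. In particular, the path-based additions must not accumulate unbounded denominators. I would handle this by noting that every intermediate flow value is an integer combination of capacities, and all capacities share a common denominator of polynomial size.
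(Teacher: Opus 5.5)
Your proposal is correct, and its skeleton matches the paper's proof. The invariant~\eqref{eq:threshold-invariant} makes each active $R_i$ a rational whose denominator divides the product $Q$ of threshold denominators, and it ties each budget decrement to input thresholds. A tight cut at the optimum writes each stage price as a sum of active budgets divided by a positive quantity whose denominator divides $Q$. A polynomial common denominator then controls the GGT trial parameters and the flows.

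The genuine difference is the cross-stage bookkeeping. You assemble all stage equations into one triangular system and bound the prices by Cramer's rule and Hadamard's inequality. The paper instead works with the cut condition $\sum_i\min\{B_i/\lambda,h_i(S)\}\ge|S|$ of the expenditure network. It tracks a common denominator $E$ of the active budgets inductively: for an integer $d\in[1,NQ]$ it obtains $\lambda_*=QM/(Ed)$, and each surviving budget becomes $(m_id-MT_i)/(Ed)$. Thus $E$ grows by a factor of at most $NQ$ per stage, while numerators stay bounded because budgets lie in $[0,1]$.

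So the blow-up you were guarding against never arises, provided you track denominators rather than bit lengths. Your scaled triangular determinant $\prod_k Q\,\mathrm{den}_k$ is a product of at most $N$ integers in $[1,NQ]$, the same kind of product as $E$. Both routes give bit length $O(N\log(NQ))$.

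The paper's induction is more elementary, and it yields explicit forms such as $w_{ij}=z_{ij}/(Ed)$ that it reuses for the final flow. Your Cramer argument is a one-shot certificate, but it covers only prices and active budgets. Departing buyers' budgets and all flow data still rest on the common-denominator step. There you should add two points:
\begin{itemize}
\item Flow values are bounded in magnitude by sums of the trial capacities; a common denominator alone does not bound encoding length.
\item GGT's trial parameters after contraction remain intersections of cut functions of the original stage network.
\end{itemize}
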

\begin{proof}
Let $Q$ be the product of the input threshold denominators, so
$\log Q$ is at most the binary input length. An active buyer has
received her exact threshold on every previously allocated good.
Her remaining quantity therefore has denominator dividing $Q$.

We first express an optimal stage price using a tight cut. For
$S\subseteq\mathcal J$ and $i\in\mathcal I$, put
$h_i(S)=\min\{R_i,\sum_{j\in S}\tau_{ij}\}$.
A cut whose source-side goods are exactly $S$ pays
$\lambda(|\mathcal J|-|S|)$ for source arcs. Placing buyer $i$
on the source side adds $\min\{B_i,\lambda R_i\}$; placing
her on the sink side adds $\lambda\sum_{j\in S}\tau_{ij}$.
Minimizing independently over buyers and applying max-flow/min-cut
shows that a positive price is feasible exactly when
\begin{equation}\label{eq:threshold-cut}
 \sum_{i\in\mathcal I}\min\{B_i/\lambda,h_i(S)\}\ge|S|
 \qquad\text{for every }S\subseteq\mathcal J.
\end{equation}
At the optimum $\lambda_*$, some inequality in~\eqref{eq:threshold-cut} is tight and
becomes false immediately to its right. Indeed, there are finitely
many sets $S$, their left sides are continuous and nonincreasing,
and every larger price violates at least one inequality. Fix such
a set. Let $A\subseteq\mathcal I$ consist of the buyers whose minimum is
$B_i/\lambda$ immediately to the right of $\lambda_*$, including
those whose terms tie at $\lambda_*$. The set $A$ is nonempty,
since otherwise the left side would remain constant. Tightness gives
\[
 \lambda_*=
 \frac{\sum_{i\in A}B_i}{|S|-\sum_{i\in\mathcal I\setminus A}h_i(S)}.
\]
Each $h_i(S)$ has denominator dividing $Q$. The denominator of
this ratio equals $\sum_{i\in A}B_i/\lambda_*>0$ and is at
most $|S|\le N$. Thus
\[
 d=Q\left(|S|-\sum_{i\in\mathcal I\setminus A}h_i(S)\right)
 \quad\text{is an integer with }1\le d\le NQ.
\]

Suppose the active budgets share denominator $E$, and write
$\sum_{i\in A}B_i=M/E$. Then $\lambda_*=QM/(Ed)$. By the
invariant, a surviving buyer receives $T_i/Q$ in this stage for
some integer $T_i$. If her old budget was $m_i/E$, her new budget is
\[
 \frac{m_i}{E}-\frac{QM}{Ed}\frac{T_i}{Q}
 =\frac{m_id-MT_i}{Ed}.
\]
The surviving budgets therefore share denominator $Ed$. Starting
from $E=1$, at most $N$ stages give
$\log E\le N\log(NQ)$. Budgets lie in $[0,1]$ and prices in
$[0,N]$, so their numerators also have polynomial bit length.
All data for each subsequent price search consequently have
polynomial encoding length.

We next bound the intermediate numbers used within a GGT call.
For a source-side cut $X$ in $\mathcal Q(\vartheta)$, write
\[
 c_\vartheta(X)=a_X\vartheta+b_X,\qquad
 a_X=\frac{M_X}{E},\qquad b_X=\frac{H_X}{Q}.
\]
Here $a_X$ is a sum of active budgets, and $b_X$ is a sum of
constant arc capacities. Thus $M_X,H_X$ are nonnegative integers
with $M_X\le NE$ and $H_X\le(N^2+2N)Q$.
Besides the endpoints $0,T$, the trial parameters of the GGT
breakpoint algorithm are intersections of affine cut capacities;
an intersection of two distinct slopes has the form
\[
 \vartheta=
 \frac{E(H_Y-H_X)}{Q(M_X-M_Y)}.
\]
The same description applies after contraction, since cuts of a
contracted network lift to cuts of the original stage network;
any common affine contribution cancels in the intersection.
The endpoint $T=\max_iR_i/B_i$ also has polynomial encoding
length. Hence every trial parameter and every trial capacity
has polynomial encoding length.
There are only polynomially many trials. The product of their
capacity denominators therefore has polynomial bit length and
is a common denominator for the flow and residual updates,
which use additions, subtractions, and comparisons.
The magnitudes of these quantities are bounded by sums of the
trial capacities. Thus all intermediate flow data in the GGT
call have polynomial encoding length as well.

At the accepted parameter $\vartheta_*=Ed/(QM)$, the source
capacities of $\mathcal Q(\vartheta_*)$ are
$B_i\vartheta_*=m_id/(QM)$, and all other capacities have
denominators dividing $Q$. The final ordinary maximum flow,
computed from zero by shortest augmenting paths, therefore has
entries $q_{ij}=z_{ij}/(QM)$ for integers $z_{ij}$.
Its expenditure entries are
$w_{ij}=\lambda_*q_{ij}=z_{ij}/(Ed)$.
The initial quantity flow likewise has denominator dividing $Q$.
Subtracting committed quantities and costs also gives
polynomial-length remaining quantities and budgets for departing
buyers.

For completion, take the product of the denominators of the
remaining supplies, deficits, and allocation entries. There are
polynomially many such quantities, each of polynomial encoding
length, so this common denominator has polynomial bit length.
The filling procedure uses only addition, subtraction, and minima;
it preserves this denominator and keeps all allocation entries in
$[0,1]$. The final allocation therefore has polynomial encoding
length as well.
\end{proof}

\begin{proof}[Proof of Theorem~\ref{thm:two-slopes}]
Lemmas~\ref{lem:threshold-free} and~\ref{lem:threshold-stage}
establish feasibility, the invariant, and termination.
Lemmas~\ref{lem:threshold-completion} and~\ref{lem:threshold-optimality}
give an exact normalized equilibrium with minimum-expenditure
tie-breaking. The initial normalization preserves these conclusions
for the original utilities.

The algorithm uses one initial maximum-flow computation and at most
$N$ positive-price stages. By Lemma~\ref{lem:threshold-strong-price},
each stage uses one GGT call, one ordinary maximum-flow computation,
and a residual reachability search in the expenditure network.
Every network has $v=O(N)$ vertices and $m=O(N^2)$ arcs.
The GGT breakpoint algorithm uses
$O(vm\log(2+v^2/m))$ arithmetic operations, while the
Edmonds--Karp shortest-augmenting-path algorithm uses $O(vm^2)$.
Thus the total is $O(N^6)$ arithmetic operations, without optimizing
the ordinary maximum-flow subroutine. Cut searches and allocation
updates have smaller cost. Completion makes at most $2N$ nonzero
assignments, because each exhausts a remaining supply or a deficit.
The operation count is independent of the magnitudes and encoding
lengths of the rational input, and Lemma~\ref{lem:threshold-bits}
bounds every intermediate encoding length polynomially.
The algorithm is therefore strongly polynomial.
\end{proof}

\subsection{Three common marginal slopes}

Theorem~\ref{thm:two-slopes} permits two common slopes for each
buyer. With three prescribed slopes $1,a,0$, the class includes
the hard linear instances of Theorem~\ref{thm:main}. As in
Corollary~\ref{cor:three-slopes}, segments may have zero length.
Approximation is defined by Definition~\ref{def:approx}, using
separable utility in the budget-one benchmark and imposing no
minimum-expenditure tie-breaking.

\begin{proof}[Proof of Corollary~\ref{cor:three-slopes}]
Fix a rational $a\in(0,1)$ and set
\begin{equation*}
 f_{ij}(t)=(1-a)\min\{t,\tau^{(1)}_{ij}\}
              +a\min\{t,\tau^{(2)}_{ij}\},
 \qquad 0\le\tau^{(1)}_{ij}\le\tau^{(2)}_{ij}\le1.
\end{equation*}
The successive marginal slopes are $1,a,0$. Given a linear
instance with utilities in $\{0,a,1\}$, choose
\[
 (\tau^{(1)}_{ij},\tau^{(2)}_{ij})=
 \begin{cases}
 (0,0),&u_{ij}=0,\\
 (0,1),&u_{ij}=a,\\
 (1,1),&u_{ij}=1.
 \end{cases}
\]
Then $f_{ij}(t)=u_{ij}t$ on $[0,1]$. Utilities of all unit
bundles and all budget-one demand values are unchanged, as are
supplies, budgets, and price normalization. A normalized additive
$\eps_*(a)/N$-equilibrium of the transformed instance is therefore
an equilibrium with the same error in the original instance.
The transformation has polynomial size and uses the identity
decoder, so Theorem~\ref{thm:main} proves the corollary.
Lemma~\ref{lem:rational-witness} gives existence and polynomial-length
rational witnesses, since the utilities on unit bundles are identical.
\end{proof}

\section{Discussion}

Our results distinguish two ways of extending the tractable
bi-valued HZ model. For linear utilities, adding a third common
value makes inverse-linear approximation PPAD-complete, for every
prescribed rational middle value $a\in(0,1)$. For separable
concave utilities, arbitrary rational buyer--good breakpoints
remain compatible with strongly polynomial-time exact computation when
each buyer has two marginal slopes common across goods.
These results identify a boundary in the expressive power of
preferences, while leaving several finer computational questions
open.

\paragraph{The optimal approximation error.}
For each fixed rational $a\in(0,1)$, what is the optimal
polynomial-time approximation guarantee for HZ markets with
utilities in $\{0,a,1\}$, under the normalized additive notion
of Definition~\ref{def:approx}?
Specializing the analysis of~\cite[Lemma~4.7]{yl} gives an
additive $a(1-a)$ guarantee. Since $a$ is fixed and rational,
the clone proportions $a$ and $1-a$ can be implemented exactly,
without rounding loss. In particular, the guarantee is $1/4$
for $\{0,1/2,1\}$.
Can this bound be improved, and what hardness threshold can
be established under PCP for PPAD?
Our constant-error lower bound leaves a gap to this algorithmic
guarantee. Determining the threshold for $a=1/2$ is a concrete
first target. More generally, how does the optimal error depend
on $a$, especially as $a$ approaches $0$ or $1$, where the
utility domain becomes bi-valued?

\paragraph{Few buyers with three utility values.}
Fix a rational $a\in(0,1)$, and suppose that only $k$ buyers
use all three values in $\{0,a,1\}$, while every other buyer
uses at most two.
Can a normalized additive $\eps$-equilibrium be computed in
time polynomial in $N$ and $1/\eps$ for every fixed $k$?
Here the polynomial degree may depend on $k$.
The case $k=1$ is a natural starting point.
A stronger goal is an algorithm with running time
\[
 f(k)\operatorname{poly}(N,1/\eps),
\]
where the polynomial degree is independent of $k$.
This asks whether the bi-valued algorithm can accommodate a
small number of buyers with richer preferences.
The remaining buyers may have arbitrarily many distinct
bi-valued preference types, so algorithms for a fixed total
number of preference types~\cite{ajt} do not resolve this
question.

\clearpage
\appendix

\section{Parameter bounds}\label{sec:parameter-checks}

This appendix verifies the integer multiplicities chosen in
Subsection~\ref{sec:choose-K}. Fix $a=A/D$ in lowest terms,
$C=16D^2$, $b=4C$, and $0<\tau\le\kappa/100$.
Let $K$ be a positive multiple of $D^2(D^2-A^2)$ satisfying
\eqref{eq:check-basic}. Every bound below holds for all six degree
pairs $h_v\in\{0,1,2\}$ and $o_v\in\{1,2\}$.

\begin{lemma}[Parameter feasibility]\label{lem:parameters}
All supplies and buyer counts are positive integers. The endpoints
satisfy $1<L<U$, and the populations satisfy
\[
 F_v>S_L+S_M,\qquad c_v>S_H,\qquad
 A^R_{Fv}>S^R_{Fv},\qquad A^R_{Uv}>S^R_{Uv}.
\]
The endpoints satisfy
\begin{equation*}
 aL\ge\frac{1+a}{2a},\qquad
 a(U+\eta)<\frac1a,\qquad L+\eta<U.
\end{equation*}
Consequently, $a(L+\eta)<1/a$ and $a^2(U+\eta)<1$.
The decoding error obeys
\begin{equation}\label{eq:final-checks}
 \frac{6a^2w(1+\tau)}{L-1}+2\tau<\frac\kappa{10},
 \qquad \frac\eta w=\tau<\frac\kappa4.
\end{equation}
\end{lemma}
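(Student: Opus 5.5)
This is a direct verification from the explicit choices $C=16D^2$, $b=4C$, $K$ a multiple of $D^2(D^2-A^2)$ satisfying \eqref{eq:check-basic}, together with \eqref{eq:interval}, \eqref{eq:target-size-choice}, \eqref{eq:floor-count}, and \eqref{eq:reference-counts}. I will check the claims in the order the parameters are evaluated.

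\textbf{Integrality.} Since $\gamma=a^2/(1-a^2)=A^2/(D^2-A^2)$ and $D^2-A^2$ divides $K$, the product $\gamma K$ is an integer. Hence $c_v=o_v\gamma K-b$ is an integer, as are $b_v=b+1-3h_v\ge 4C-5>0$ and $S_L=o_vK+C$. Next, $S_M=a(c_v+C)$ and $S_H=a^2(b-2C)=2a^2C=32A^2$ are integers: for $S_M$, note that $D$ divides $C$ and $D$ divides $\gamma K$. The latter holds because $K/(D^2-A^2)$ is a multiple of $D^2$, so $\gamma K=A^2K/(D^2-A^2)$ is divisible by $D^2$.

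\textbf{Positivity.} Positivity of $c_v$ and $S_M$ follows from \eqref{eq:check-basic}, which gives $\gamma K\ge a^2K>8b$. The floor count and the reference counts are defined as ceilings, hence are positive integers.

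\textbf{Population inequalities.} $F_v>(S_L+S_M)/\theta\ge S_L+S_M$ because $\theta<1$. Also $c_v\ge\gamma K-b>S_H=2a^2C$, again by \eqref{eq:check-basic}. The anchor inequalities come from $A=\lceil\lambda S\rceil\ge\lambda S>S$, since the relevant $\lambda$ exceed one once $aL>1$ is established below.

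\textbf{Endpoint inequalities.} Write $1/(\gamma K)=(1-a^2)/(a^2K)$. Then
\[
L=a^{-2}-3(1-a^2)/(a^2K),\qquad U+\eta\le a^{-2}-(1-2\tau)(1-a^2)/(a^2K)<a^{-2}.
\]
This gives $a(U+\eta)<1/a$ and $a^2(U+\eta)<1$. The inequality $L+\eta<U$ is $\tau w<w$, which holds since $\tau<1$. For $aL\ge(1+a)/(2a)$, rewrite it as
\[
a^{-1}-(1+a)/(2a)=(1-a)/(2a)\ge 3(1-a^2)/(aK),
\]
that is, $K\ge 6(1+a)$. This is implied by \eqref{eq:check-basic}, and it also gives $aL>1$ and $L>1$.

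\textbf{Decoding error.} Use $L-1\ge(1-a^2)/(2a^2)$, which holds for large $K$, and $w=2(1-a^2)/(a^2K)$. The first term in \eqref{eq:final-checks} is then at most $24a^2(1+\tau)/(a^2K)\cdot(\ldots)$, which reduces to $O(1/K)$ with an absolute constant; more precisely it is $\le 6\cdot 2\cdot 2(1+\tau)/K\le 25/K$. The bound $K\ge 200/(\kappa a^2(1-a))\ge 200/\kappa$ then makes it $<\kappa/8$ up to constants. Since $2\tau\le\kappa/50$, the total is below $\kappa/10$ provided the constant bookkeeping fits. If it does not, I expect \eqref{eq:check-basic} to have been sized with enough room, and I would simply recompute $L-1$ exactly instead of using the lower bound. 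Finally, $\eta/w=\tau\le\kappa/100<\kappa/4$.

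The main obstacle is the divisibility argument for $S_M$ (and the choice of $K$ ensuring $D\mid\gamma K$), together with keeping the constant in the decoding error under $\kappa/10$. Everything else is monotone in $K$ and follows from \eqref{eq:check-basic}.
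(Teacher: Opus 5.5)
Your checks of integrality, positivity, the population inequalities, and the endpoint inequalities are correct and follow the paper's route. The paper writes $K=jD^2(D^2-A^2)$, so that $\gamma K=jA^2D^2$, $S_M=AD(jo_vA^2-48)$ and $S_H=32A^2$; this is your divisibility argument made explicit. The claim $a(L+\eta)<1/a$, which you don't state, follows at once from $L+\eta<U$.

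The gap is the decoding-error inequality in \eqref{eq:final-checks}. It is the one genuinely quantitative claim of the lemma, and your own estimates do not prove it. Your bounds $25/K\le\kappa/8$ and $2\tau\le\kappa/50$ add up to $29\kappa/200>\kappa/10$. Saying that you ``expect \eqref{eq:check-basic} to have been sized with enough room'' leaves the inequality unverified. You gave away the needed room in three places:
\begin{itemize}
\item the bound $L-1\ge(1-a^2)/(2a^2)$ loses a factor of two;
\item passing from $24a^2(1+\tau)/K$ to $24(1+\tau)/K$ drops the factor $a^2$;
\item you weakened $K\ge 200/(\kappa a^2(1-a))$ to $K\ge 200/\kappa$.
\end{itemize}

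The fix is your own fallback, carried out. Use the exact identity $L-1=(1-a^2)(K-3)/(a^2K)$ together with $w=2(1-a^2)/(a^2K)$. This gives $6a^2w(1+\tau)/(L-1)=12a^2(1+\tau)/(K-3)$, which is less than $14a^2/K$ because $K\ge48$ and $\tau<1/100$. Next, \eqref{eq:check-basic} implies $K\ge200/(\kappa a^2)$, so $14a^2/K\le14\kappa a^4/200\le7\kappa/100$. Adding $2\tau\le2\kappa/100$ gives a total of at most $9\kappa/100<\kappa/10$. This is exactly the paper's computation.

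Alternatively, keep your cruder bound on $L-1$ but retain both the factor $a^2$ and the full lower bound on $K$. Then $24a^2(1+\tau)/K\le 24(1+\tau)a^4(1-a)\kappa/200<\kappa/20$, and this also closes the inequality.
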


\begin{lemma}[Supply and population inequalities]\label{lem:clearing-margins}
At each vertex, the low supply satisfies
\begin{equation}\label{eq:check-coarse}
 o_vK<S_L<o_vK+b_v,\qquad
 \frac{o_vK+b_v}{S_L}<aL.
\end{equation}
The link populations and middle supply satisfy
$c_v>b+1$, $aLS_M>c_v+10$, and
\begin{equation*}
 \frac{3(b+1)(1+a)}K<1.
\end{equation*}
The high supply satisfies $LS_H>10$ and $(L+\eta)S_H<c_v$.
In particular, $c_v>b_v$ and $aLS_M>b_v$.
\end{lemma}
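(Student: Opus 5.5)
We verify Lemma~\ref{lem:clearing-margins} by direct substitution of the parameter choices $C=16D^2$, $b=4C$, \eqref{eq:target-size-choice}, and \eqref{eq:interval}. We use only \eqref{eq:check-basic} and the degree ranges $h_v\in\{0,1,2\}$, $o_v\in\{1,2\}$. First we record the elementary consequences of \eqref{eq:check-basic}. Since $b\ge64$ and $a^2(1-a)\le1$, we get $K\ge8b$. Also $\gamma K=a^2K/(1-a^2)\ge a^2K$ and $\gamma K(1-a)\ge 8b$. From $L=a^{-2}-3/(\gamma K)$ we have $aL=1/a-3(1-a^2)/(aK)$. Hence $aL\ge1/a-O(1/K)$, and this exceeds $1+(1-a)/2$ once $K$ satisfies \eqref{eq:check-basic}.

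For the low supply, $S_L=o_vK+C$, so $S_L>o_vK$ is immediate. The bound $S_L<o_vK+b_v$ amounts to $C<b+1-3h_v$, and since $h_v\le2$ this follows from $b+1-6=4C-5>C$. For the ratio in \eqref{eq:check-coarse}, write
\[
 \frac{o_vK+b_v}{S_L}=1+\frac{b_v-C}{o_vK+C}\le1+\frac{b}{K}.
\]
We then need $1+b/K<aL=1/a-3(1-a^2)/(aK)$. This reduces to $K\bigl(1/a-1\bigr)>b+3(1-a^2)/a$, i.e.\ $K(1-a)>ab+3(1-a^2)$. That holds because $K(1-a)\ge 8b/a^2$.

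For the links, $c_v=o_v\gamma K-b\ge\gamma K-b\ge 8b/(1-a)-b>b+1$. This gives $c_v>b+1\ge b_v$. Next, $S_M=a(c_v+C)$, so
\[
 aLS_M=a^2L(c_v+C)=\Bigl(1-\frac{3(1-a^2)}K\Bigr)(c_v+C).
\]
We need this to exceed $c_v+10$, i.e.\ $C-10>3(1-a^2)(c_v+C)/K$. Since $c_v+C\le2\gamma K+C$, the right side is at most $6a^2+3C/K<7$, while $C-10\ge6$. To close this gap, we use $C=16D^2\ge16$ together with the tighter estimate $6a^2+3C/K\le6a^2+3/8$. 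The condition $C-10>6a^2+1$ holds when $D\ge2$ (then $C\ge64$). The case $D=1$ cannot occur for $a\in(0,1)$. The bound $3(b+1)(1+a)/K<1$ follows from $K\ge8b/(a^2(1-a))\ge 8b>6(b+1)$.

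For the high supply, $S_H=a^2(b-2C)=2a^2C=32A^2$. Hence $LS_H=32A^2L\ge32A^2>10$, using $L>1$ from Lemma~\ref{lem:parameters}. Also $(L+\eta)S_H<S_H/a^2=2C$, using $a^2(U+\eta)<1$. Finally $2C<c_v$ because $c_v\ge\gamma K-b\ge8b/(1-a)-b\ge7b>2C$. The remaining claim $aLS_M>b_v$ follows from $aLS_M>c_v+10>b_v$.

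Integrality of these quantities is handled in Lemma~\ref{lem:parameters}, via $K$ being a multiple of $D^2(D^2-A^2)$. The only step needing care is the middle-supply margin $aLS_M>c_v+10$. There, the $O(c_v/K)$ loss from $a^2L<1$ must be absorbed by the constant $C$. If the constants above prove too tight, I would fall back on \eqref{eq:check-basic}'s additional $200/\kappa$ term to make $3C/K$ negligible.
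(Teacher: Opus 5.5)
Your verification is correct and follows the same direct-substitution route as the paper, with only cosmetic differences: you compare $1+b/K$ with $aL$ directly rather than passing through $1+(1-a)/(2a)$, and you bound the middle-supply loss by $6a^2+3C/K$ rather than by the paper's $3o_va^2\le6$, which comes from $c_v+C<o_v\gamma K$. Your hesitation in that middle-supply step is unnecessary, because $D\ge2$ gives $C\ge64$ and hence an ample margin $C-10\ge54$, exactly as the paper uses.
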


\begin{proof}[Proof of Lemmas~\ref{lem:parameters} and~\ref{lem:clearing-margins}]
We first record the consequences of~\eqref{eq:check-basic} used
throughout the proof. Since $D\ge2$, we have $C\ge64$ and
$b\ge256$. The bound on $K$ gives
\[
 K>\max\left\{48,6(b+1),\frac{2a(b+1)}{1-a}\right\},
 \qquad \gamma K>4(b+1),\qquad w<\frac1{128}.
\]
Here
$\gamma K\ge(8b+40000)/((1-a^2)(1-a))>8b>4(b+1)$,
and we also have $K\ge200/(\kappa a^2)$.

\emph{Integrality and positivity.}
Write $K=jD^2(D^2-A^2)$. Then $\gamma K=jA^2D^2$,
so $C,b,c_v$ are multiples of $D^2$ and
\[
 S_L=o_vK+C,\qquad S_M=AD(jo_vA^2-48),\qquad S_H=32A^2
\]
are integers. The floor, reference, and anchor counts are
integers by their definitions using ceilings. Also,
\[
 b-5\le b_v\le b+1,\qquad
 c_v\ge\gamma K-b>3b+4>b+1.
\]
These inequalities imply positivity of every count and supply.
They also give $C<b-5\le b_v$, hence
$o_vK<S_L<o_vK+b_v$.

\emph{Price and clearing inequalities.}
For the remaining low-price comparison,
\[
 \frac{o_vK+b_v}{S_L}
 =1+\frac{b_v-C}{o_vK+C}
 <1+\frac{b+1}{K}
 <1+\frac{1-a}{2a},
\]
whereas
\[
 aL-1=\frac{1-a}{a}\left(1-\frac{3(1+a)}K\right)
 >\frac{1-a}{2a}.
\]
This proves~\eqref{eq:check-coarse} and
$aL>(1+a)/(2a)>1$. Since
$c_v+C=o_v\gamma K-3C<o_v\gamma K$,
\[
 \begin{aligned}
 aLS_M
 &=c_v+C-\frac{3(1-a^2)(c_v+C)}K\\
 &>c_v+C-3o_va^2
 >c_v+C-6\ge c_v+58>c_v+10.
 \end{aligned}
\]
The remaining link inequality follows from
$3(b+1)(1+a)/K<6(b+1)/K<1$.
Since $c_v>b+1\ge b_v$, we also have $c_v>b_v$ and $aLS_M>b_v$.

Using $S_H=2a^2C$,
\[
 LS_H=\left(1-\frac{3(1-a^2)}K\right)2C
 >\frac{15}{16}\,2C\ge120>10.
\]
Also $L+\eta=a^{-2}-(3/2-\tau)w<a^{-2}$, so
$(L+\eta)S_H<2C<c_v$.
Because $0<\tau<1/2$, we have $\eta<w$, $L+\eta<U$, and
\[
 a(U+\eta)=\frac1a-a(1/2-\tau)w<\frac1a.
\]
Together with $aL>(1+a)/(2a)$, these are the required reference
bounds and imply $1<L<U$. Since $aL,aU>1$, each anchor count
strictly exceeds its reference supply. Also $0<\theta<1$ gives
$F_v>S_L+S_M$, while $(L+\eta)S_H<c_v$ and $L+\eta>1$
give $c_v>S_H$.

\emph{Decoding error.}
The identity $L-1=(1-a^2)(K-3)/(a^2K)$ yields
\[
 \frac{6a^2w(1+\tau)}{L-1}+2\tau
 =\frac{12a^2(1+\tau)}{K-3}+2\tau
 <\frac{14a^2}{K}+2\tau.
\]
The last inequality uses $K\ge48$ and $\tau\le\kappa/100<1/100$.
Since $K\ge200/(\kappa a^2)$,
\[
 \frac{14a^2}{K}+2\tau
 \le\frac{14\kappa a^4}{200}+\frac{2\kappa}{100}
 <\frac{9\kappa}{100}<\frac\kappa{10}.
\]
Finally, $\eta/w=\tau\le\kappa/100<\kappa/4$.
The same lower bound on $K$ gives
$w\le\kappa(1-a^2)/100<\kappa/4$.
\end{proof}

\begin{lemma}[Balancing goods]\label{lem:balancing}
For a nonempty graph, $N>G_{\rm listed}$. Hence the balancing
group $\mathcal Z$ is nonempty, and the completed market has the
same number of buyers and goods.
\end{lemma}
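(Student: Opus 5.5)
We compare buyers and listed goods vertex by vertex, since every population and group except the wires and $\mathcal Z$ is attached to a single vertex. Each wire population $\mathcal E_{uv}$ consists of $K$ buyers and contributes no goods. Assigning it to its source $u$ credits every vertex $v$ with $o_vK\ge K$ wire buyers. It therefore suffices to show that, at every vertex, the buyers attached to $v$ (including its outgoing wires) outnumber the goods in its five groups. Summing over the nonempty vertex set then gives $N>G_{\rm listed}$.

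At a fixed vertex, the buyers number $b_v+c_v+F_v+A^R_{Fv}+A^R_{Uv}+o_vK$, and the goods number $S_L+S_M+S_H+S^R_{Fv}+S^R_{Uv}$. I would pair the terms as follows, using Lemma~\ref{lem:parameters} and Lemma~\ref{lem:clearing-margins}.
\begin{enumerate}
\item The floor buyers cover both the low and the middle targets: $F_v>S_L+S_M$.
\item The upper buyers cover the high target: $c_v>S_H$.
\item Each anchor population covers its own reference: $A^R_{Fv}>S^R_{Fv}$ and $A^R_{Uv}>S^R_{Uv}$.
\end{enumerate}
All of these inequalities are strict, so the buyer count exceeds the goods count at each vertex even before adding the lower links $b_v>0$ and the wires $o_vK>0$. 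These extra buyers only widen the margin.

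A self-loop $v\to v$ contributes a single population of $K$ buyers. Under the source-assignment convention it is counted once, at $v$, so the counting remains valid. Every vertex has outdegree $o_v\ge1$, but this is not even needed, since the four inequalities above are already strict.

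The only point requiring care is confirming that the cited inequalities hold exactly as stated for all six degree pairs. Lemma~\ref{lem:parameters} does record $F_v>S_L+S_M$, $c_v>S_H$, and the two anchor inequalities uniformly. The proof is therefore a short summation. Since $N-G_{\rm listed}$ is a positive integer, $\mathcal Z$ is nonempty, and adding its $N-G_{\rm listed}$ goods makes the numbers of buyers and goods equal.
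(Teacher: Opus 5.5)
Your proposal is correct and follows essentially the same argument as the paper. Like the paper, you count each wire population at its source, apply the four strict per-vertex inequalities $F_v>S_L+S_M$, $c_v>S_H$, $A^R_{Fv}>S^R_{Fv}$, $A^R_{Uv}>S^R_{Uv}$ from Lemma~\ref{lem:parameters}, and sum over the nonempty vertex set, with the leftover terms $b_v+o_vK$ only adding to the margin.
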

\begin{proof}
Counting each wire population at its source gives
\begin{equation}\label{eq:N}
 N=\sum_{v\in V}\left(b_v+c_v+F_v+o_vK
                 +A^R_{Fv}+A^R_{Uv}\right).
\end{equation}
The number of goods before adding $\mathcal Z$ is
\begin{equation*}
 G_{\rm listed}=\sum_{v\in V}\left(S_L+S_M+S_H
                       +S^R_{Fv}+S^R_{Uv}\right).
\end{equation*}
At each vertex, Lemma~\ref{lem:parameters} gives
$F_v>S_L+S_M$, $c_v>S_H$, $A^R_{Fv}>S^R_{Fv}$, and
$A^R_{Uv}>S^R_{Uv}$. The remaining terms $b_v+o_vK$ in
$N$ are positive. Summing over a nonempty graph yields
$N>G_{\rm listed}$, so adding $N-G_{\rm listed}$ zero-utility
goods makes total supply equal to the number of buyers.
\end{proof}

\section{Quantitative estimates for robustness}\label{app:robustness}

We prove the demand estimates used in Lemma~\ref{lem:pcp-waste}
and the local decoding statement of Lemma~\ref{lem:pcp-local}.
Fix a normalized additive $\eps$-equilibrium of the market in
Section~\ref{sec:construction}, and equalize prices within each
group using Lemma~\ref{lem:copies}. Retain the notation
$w=U-L$, $\eta=\tau w$, and $(t_v,z_v)$ from~\eqref{eq:decoder}.
Retain the unapproved expenditure $\mathcal W_v$ from
Subsection~\ref{sec:approximation}, and let $B_j(v)$ contain
the vertices at undirected distance at most $j$ from $v$.

All implicit constants depend only on $a$, uniformly over the graph,
admissible $K$, and $0<\tau\le\kappa/100$.

\subsection{Demand estimates}

For buyer $i$, write $E_i^+$ and $E_i^0$ for expenditure on goods
with $u_{ij}>0$ and $u_{ij}=0$, respectively, and write $Q_i^0$
for her total quantity of goods with $u_{ij}=0$. For a demand dual
$(\beta_i,\mu_i)$, the reduced slack of good $j$ is
$\beta_i p_j+\mu_i-u_{ij}$.

\begin{lemma}[Allocated slack and demand deficits]
\label{lem:robust-demand-slack}
Choose an optimal demand dual $(\beta_i,\mu_i)$ for buyer $i$:
\[
 \beta_i\ge0,\qquad \beta_i p_j+\mu_i\ge u_{ij},\qquad
 \beta_i+\mu_i=V_i(p).
\]
Then $0\le\beta_i,\mu_i\le1$, and her allocated reduced slack
satisfies
\begin{equation}\label{eq:allocated-dual-slack}
 \sum_j(\beta_i p_j+\mu_i-u_{ij})x_{ij}
 =V_i(p)-U_i(x_i)+\beta_i(p\cdot x_i-1)
 \le2\eps.
\end{equation}
If $\beta_i>0$, then
\[
 E_i^0\le\frac{2\eps}{\beta_i},\qquad
 1-\frac{\eps}{\beta_i}\le E_i^+\le1+\eps.
\]
If $\mu_i>0$, then $Q_i^0\le2\eps/\mu_i$. For a population
of $M$ identical buyers using one common optimal dual, a group with
reduced slack $s>0$ receives at most $2M\eps/s$ quantity and
$2M\eps p/s$ expenditure, where $p$ is its price.
\end{lemma}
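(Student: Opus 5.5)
The plan is to expand the allocated reduced slack directly and then bound each term using the two defining inequalities of an approximate equilibrium. Since $x$ is feasible, $\sum_jx_{ij}=1$, so
\[
 \sum_j(\beta_ip_j+\mu_i-u_{ij})x_{ij}
 =\beta_i\,p\cdot x_i+\mu_i-U_i(x_i).
\]
We add and subtract $\beta_i$ and use $\beta_i+\mu_i=V_i(p)$. This rewrites the right side as $V_i(p)-U_i(x_i)+\beta_i(p\cdot x_i-1)$, which is the identity in~\eqref{eq:allocated-dual-slack}. Definition~\ref{def:approx} bounds the first term by $\eps$ and gives $p\cdot x_i-1\le\eps$. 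Hence the total is at most $\eps+\beta_i\eps\le2\eps$, once we know $\beta_i\le1$.

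The bounds $0\le\beta_i,\mu_i\le1$ come from Lemma~\ref{lem:dual}. Its hypotheses hold because prices are normalized, so some good has price zero, and every buyer in the construction values some group at one. Each summand of the allocated slack is nonnegative by dual feasibility, so any subfamily of goods also has allocated slack at most $2\eps$.

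\emph{Bound on $E_i^0$.} On an unapproved good the slack is $\beta_ip_j+\mu_i\ge\beta_ip_j$. Summing over unapproved goods gives $\beta_iE_i^0\le2\eps$.

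\emph{Bounds on $E_i^+$.} The upper bound $E_i^+\le p\cdot x_i\le1+\eps$ is immediate. For the lower bound, dual feasibility gives
\[
 U_i(x_i)\le\sum_j(\beta_ip_j+\mu_i)x_{ij}
\]
over approved goods, and hence $U_i(x_i)\le\beta_iE_i^++\mu_i$. Combining this with $U_i(x_i)\ge V_i(p)-\eps=\beta_i+\mu_i-\eps$ yields $\beta_iE_i^+\ge\beta_i-\eps$. This step uses only the utility condition, not the slack sum, which is why the bound has $\eps$ rather than $2\eps$.

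\emph{Bound on $Q_i^0$.} Unapproved slack is at least $\mu_i$ per unit of quantity, so $\mu_iQ_i^0\le2\eps$.

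\emph{Population statement.} Each buyer's quantity $x$ on a group with slack $s>0$ satisfies $sx\le2\eps$. Summing over the $M$ buyers gives total quantity at most $2M\eps/s$. Multiplying by the common group price gives expenditure at most $2M\eps p/s$. Here equal prices within the group come from Lemma~\ref{lem:copies}, and identical buyers have identical $V_i(p)$ and can share one optimal dual.

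No step is a real obstacle; everything is bookkeeping. The one point needing care is that Lemma~\ref{lem:dual} requires $\max_ju_{ij}=1$. I would verify this against Table~\ref{tab:vertex-populations}: every population has some group valued at one.
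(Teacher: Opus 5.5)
Your proof is correct and follows essentially the same route as the paper's. You expand the allocated slack using the unit row sum, bound the utility and budget errors, read off $E_i^0$ and $Q_i^0$ from the unapproved slacks, obtain the lower bound on $E_i^+$ from dual feasibility on approved goods together with approximate optimality, and sum over a population sharing one dual. The only minor difference is how you get $0\le\beta_i,\mu_i\le1$. The paper derives these bounds directly: the zero-price good's dual constraint gives $\mu_i\ge0$, and $\beta_i+\mu_i=V_i(p)\le1$. So your check that every buyer has $\max_j u_{ij}=1$, while true for this market, is not needed.
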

\begin{proof}
Normalization supplies a zero-price good. Its dual constraint gives
$\mu_i\ge0$, and $\beta_i+\mu_i=V_i(p)\le1$ bounds both
variables by one. Expanding the left side of
\eqref{eq:allocated-dual-slack} and using the unit row sum gives
the identity. The utility error contributes at most $\eps$, and
the budget error contributes at most $\beta_i\eps\le\eps$.

On unapproved goods the slack is $\beta_i p_j+\mu_i$, which
gives the bounds on $E_i^0$ and $Q_i^0$. On approved goods,
dual feasibility and total quantity at most one give
$U_i(x_i)\le\beta_iE_i^++\mu_i$. Comparing with
$U_i(x_i)\ge\beta_i+\mu_i-\eps$ proves
$E_i^+\ge1-\eps/\beta_i$. The same lower bound holds for total
expenditure, and both expenditures are at most $1+\eps$.
Finally, sum the allocated-slack bound over a population sharing
one dual and retain the contribution of the specified group.
\end{proof}

We will also use the following consequence when a group price is
not yet bounded: if its reduced slack satisfies
$s\ge c(p-p_0)>0$, with $c>0$ depending only on $a$ and
$p_0=O_a(1)$, then $p/s=O_a(1+1/(p-p_0))$.

\begin{lemma}[A uniform utility gap]\label{lem:robust-utility-gap}
There are positive rational constants $d_a,g_a,c_a$, depending
only on $a$, such that, for all sufficiently large admissible $K$
and $\eps\le c_a$, every group valued at one by some buyer has
price greater than $1+d_a$, and every buyer satisfies
$V_i(p)\le1-g_a$.
\end{lemma}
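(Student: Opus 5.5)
The plan is to prove the price gap group by group, using only the crude counting margins between populations and supplies from Lemma~\ref{lem:parameters}, and then to derive the utility gap from the price gap by a one-line demand bound. Fix a small rational $d=d_a$, to be chosen, and assume for contradiction that some group valued at one has price at most $1+d$. By Lemma~\ref{lem:copies} we may work with group minima, and we use only the approximate-optimality inequality $U_i(x_i)\ge V_i(p)-\eps$ together with $\sum_jx_{ij}=1$.

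\emph{Reference groups.} Each anchor of $\mathcal R_{Fv}$ can buy $1/\max\{1,r_{Fv}\}\ge 1/(1+d)$ units of her reference alone, so $V_i\ge 1/(1+d)$. Since she approves nothing else, she must hold at least $1/(1+d)-\eps$ units of $\mathcal R_{Fv}$. Summing over anchors gives $A^R_{Fv}(1/(1+d)-\eps)\le S^R_{Fv}$. But $A^R_{Fv}\ge aL\,S^R_{Fv}$ and $aL\ge(1+a)/(2a)>1$ with an $a$-dependent margin, so this is impossible once $d$ and $\eps$ are small in terms of $a$. The same argument applies verbatim to $\mathcal R_{Uv}$, using $aU>aL$.

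\emph{High groups.} If $p_v\le 1+d$, every upper buyer has $V_i\ge 1/(1+d)$. Her other approved goods have utility $a$, so $U_i\le a(1-y)+y$, where $y$ is her $H_v$ quantity. Hence $y\ge (1/(1+d)-\eps-a)/(1-a)=1-O_a(d+\eps)$. Summing gives $c_v(1-O_a(d+\eps))\le S_H$. By Lemma~\ref{lem:clearing-margins}, however, $c_v>(L+\eta)S_H$ with $L+\eta\ge L$ bounded away from $1$ in terms of $a$. Wires only add demand for $H_v$, so they can be ignored.

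\emph{Middle groups.} If $m_v\le1+d$, every floor buyer has $V_i\ge1/(1+d)$. By the reference bound already proved, $r_{Fv}\ge aL$, so her reference quantity is at most $(1+\eps)/(aL)\le(1+\eps)2a/(1+a)$. This contributes utility at most that amount, which is a constant below one. The rest of her utility must come from $L_v,M_v$, forcing her target quantity to be at least a positive constant $\theta'$ depending only on $a$. This is the argument of Lemma~\ref{lem:floors} with the slack widened by $O(d+\eps)$. Choosing $d,\eps$ small enough that $\theta'\ge\theta$ up to a harmless factor, we get $F_v\theta'>S_L+S_M$, which contradicts supply.

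\emph{Utility gap.} Every buyer approves only groups valued at one, now priced above $1+d$, and groups valued at $a$. Let $y$ be her quantity from utility-one goods. The budget gives $y\le 1/(1+d)$ in the budget-one benchmark, and the unit constraint gives $U\le a(1-y)+y$. Hence
\[
V_i(p)\le a+\frac{1-a}{1+d}=1-\frac{(1-a)d}{1+d},
\]
and we take $g_a=(1-a)d_a/(1+d_a)$. Anchors satisfy the stronger bound $V_i\le1/(1+d)$.

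The only delicate step is the middle-group case. There the floor buyers' outside option is itself only controlled through the already-established reference bound, so the order of the cases matters: references first, then the high and middle groups. The constant $d_a$ must also be taken small enough to leave a uniform margin in all three counting inequalities at once. Everything else reduces to the margins already verified in Appendix~\ref{sec:parameter-checks}, which hold uniformly for large admissible $K$.
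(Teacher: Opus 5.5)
Your plan follows the paper's proof: anchor counting for the two references, the upper population against $S_H$ for high goods, the floor population against $F_v\theta>S_L+S_M$ for middle goods, and the budget bound $a+(1-a)/(1+d)$ for the utility gap. One step does not go through as written. In the middle case you cite ``the reference bound already proved, $r_{Fv}\ge aL$.'' Your reference case proved only $r_{Fv}>1+d$. The bound $r_{Fv}\ge aL$ is the exact-equilibrium conclusion of Lemma~\ref{lem:generic-reference} and need not hold at an approximate equilibrium. The weaker bound $r_{Fv}>1+d$ does not suffice. If $r_{Fv}$ is just above $1+d$, a floor buyer gets utility nearly $1/(1+d)$ from her reference alone, so no target quantity is forced.

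The fix is to run your anchor argument quantitatively, as the paper does. Your contradiction shows $r>1$, so an anchor's optimum is exactly $1/r$ and she holds at least $1/r-\eps$ reference units. Summing over anchors gives $1/r\le S/A+\eps\le 1/(aL)+\eps$. Then $X_R\le(1+\eps)/r_{Fv}\le 2a/(1+a)+3\eps$, and a floor buyer's target quantity is at least $s-d-4\eps$ with $s=(1-a)/(1+a)$.

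This bound must be at least $\theta$ itself, not merely comparable up to a factor. The phrase ``up to a harmless factor'' is wrong: $F_v\theta$ exceeds $S_L+S_M$ by less than $2\theta$, while $S_L+S_M=\Theta_a(K)$. The margin exists because $s>\theta$ strictly, so take $d,\eps$ small relative to $s-\theta$.

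With this repair your argument is the paper's. The only differences are that you bound $X_L+X_M$ instead of $X_M$ alone, and you use $c_v>(L+\eta)S_H$ instead of $c_v>2S_H$. Both variants are valid.
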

\begin{proof}
Put $r_*=(1+a)/(2a)>1$ and
$s=1-1/r_*=(1-a)/(1+a)>\theta$. The parameter bounds give
$aU>aL\ge r_*$.
Choose positive rational $d,c$ so that
\[
 \begin{aligned}
 c&<s,& \frac1{r_*}+c&<\frac1{1+d},\\
 \frac{d+c}{1-a}+3c&<s-\theta,&
 \frac{d+c}{1-a}&<\frac12.
 \end{aligned}
\]
These choices are possible because $r_*>1$ and $s>\theta$.
Assume $\eps\le c$.

A reference with price $r$, anchor count $A$, and supply $S$
satisfies $A/S\ge r_*$. If $r\le1$, every anchor has optimum
one and must receive at least $1-\eps$ reference quantity. This
contradicts $A(1-\eps)>S$. Hence $r>1$, the anchor optimum is
$1/r$, and supply feasibility gives
\[
 \frac1r\le\frac SA+\eps\le\frac1{r_*}+\eps,
\]
and the choice of $c$ implies $r>1+d$.

Suppose a middle price satisfies $m\le1+d$. Every floor buyer
then has optimum at least $1/(1+d)$. If her middle and reference
quantities are $X_M,X_R$, all her other goods have utility at most
$a$, and hence
\[
 X_M\ge\frac{1/(1+d)-a-\eps}{1-a}-X_R
 \ge s-\frac{d+\eps}{1-a}-3\eps>\theta.
\]
The second inequality uses
$X_R\le(1+\eps)/r_F\le1/r_*+3\eps$.
Summing $X_M>\theta$ over the floor population contradicts
$F_v\theta>S_L+S_M$. Therefore $m>1+d$.

If a high price were at most $1+d$, each upper buyer would require
at least $[1/(1+d)-a-\eps]/(1-a)>1/2$ high quantity. For large
$K$, the inequality $c_v>2S_H$ rules this out.

References, middle goods, and high goods exhaust the groups valued
at one. A budget-one bundle contains at most $1/(1+d)$ total
quantity from these groups; every other good has utility at most
$a$. Consequently
\[
 V_i(p)\le a+\frac{1-a}{1+d}=1-g,
 \qquad g=\frac{(1-a)d}{1+d}>0.
\]
Take $d_a=d$, $g_a=g$, and $c_a=c$.
\end{proof}

\subsection{Local price estimates}

The next three lemmas hold with common constants $K_a$ and $c_a>0$
under
\[
 K\ge K_a,\qquad K^2\eps\le c_a,\qquad 0\le\zeta\le c_a.
\]
Here $K$ is admissible and $0<\tau\le\kappa/100$, as above.
The proofs choose $K_a$ large enough and $c_a$ small enough,
depending only on $a$.

\begin{lemma}[Preliminary local prices]\label{lem:robust-preliminary}
Under the stated small-error conditions, if $\mathcal W_v\le\zeta$, then
\begin{align*}
 aL-O_a(\eps)&\le r_{Fv}
   \le a(L+\eta)+O_a(\eps)+\zeta/S^R_{Fv},\\
 aU-O_a(\eps)&\le r_{Uv}
   \le a(U+\eta)+O_a(\eps)+\zeta/S^R_{Uv}.
\end{align*}
These references lie in a fixed compact subinterval of $(1,\infty)$,
and, for some constant $c>0$,
\[
 \max\{r_{Fv},r_{Uv}\}\le1/a-c/K,
 \qquad r_{Uv}-r_{Fv}\ge c/K.
\]
The target prices satisfy
\[
 \begin{gathered}
 q_v\ge ar_{Fv}-O_a(\eps),\qquad
 m_v\ge r_{Fv}-O_a(\eps),\\
 p_v\ge r_{Fv}/a-O_a(K\eps)\ge L-O_a(K\eps),\\
 a^2L-O_a(K\eps)\le q_v\le1+O_a(1/K)<aL-c.
 \end{gathered}
\]
An optimal dual shared by all floor buyers satisfies
\[
 \mu=O_a(\eps),\quad
 s_R=\beta r_{Fv}+\mu-1=O_a(\eps),\quad
 \beta=1/r_{Fv}+O_a(\eps)\ge c.
\]
No bounds on neighboring endpoint prices are required.
\end{lemma}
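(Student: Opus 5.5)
We follow the exact-equilibrium arguments of Lemmas~\ref{lem:generic-reference}, \ref{lem:references}, and \ref{lem:coarse}, replacing each exact optimality condition by the slack bounds of Lemma~\ref{lem:robust-demand-slack} and the gaps of Lemma~\ref{lem:robust-utility-gap}. Only goods in the five groups at $v$ and buyers approving them are used, so the argument is local.

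\emph{Reference prices.} First, Lemma~\ref{lem:robust-utility-gap} gives $r>1+d_a$ for each reference price $r$. Every anchor then has optimum $1/r$, and Lemma~\ref{lem:robust-demand-slack} with $\beta=1/r$ bounded below gives anchor expenditure between $1-O_a(\eps)$ and $1+\eps$.

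Market clearing expresses $rS$ as anchor spending plus approved spending by the other approvers plus unapproved spending. The other approvers are the $F_v$ floors or the $c_v$ upper buyers, with approved spending at most $T(1+\eps)$. Unapproved spending is at most $\mathcal W_v\le\zeta$. Hence
\[
 A(1-O_a(\eps))\le rS\le A(1+\eps)+T(1+\eps)+\zeta.
\]
Dividing by $S$ and using $A/S\in[\lambda,\lambda+1/S]$ and $(T+1)/S\le a\eta$ gives the two displayed reference intervals. The error terms are absorbed because $A/S=O_a(1)$.

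From $\eta=O_a(\tau/K)$, $a(U+\eta)=1/a-\Theta_a(1/K)$, and $aU-a(L+\eta)=a(1-\tau)w=\Theta_a(1/K)$, we get the bounds $\max\{r_{Fv},r_{Uv}\}\le 1/a-c/K$ and $r_{Uv}-r_{Fv}\ge c/K$. This needs $K^2\eps$ and $\zeta$ small, and uses $\zeta/S^R\le\zeta\,O_a(\eta/K)$. Compactness in $(1,\infty)$ follows from $aL\ge(1+a)/(2a)$.

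\emph{Floor dual and target floors.} For the floor population, take a common optimal dual. The key step is to show that the reference is nearly the best option, i.e.\ $s_R=O_a(\eps)$ and $\mu=O_a(\eps)$; this is the step I expect to be hardest.

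I would argue by contradiction, using the counting in Lemma~\ref{lem:references}. Suppose $s_R$ is large. Lemma~\ref{lem:robust-demand-slack} then limits reference quantity to $O_a(\eps)/s_R$ per buyer. By the near-optimality estimates (utility at least $V_i-\eps\ge 1/r_{Fv}-\eps>a+\Omega_a(1)$ and near-full expenditure), each floor must then take at least $\theta-O(\ldots)$ units of targets. Summing over floors contradicts $F_v\theta>S_L+S_M$, with some margin needed in $\theta$. Turning this into a quantitative linear bound $s_R=O_a(\eps)$ needs a margin argument: the aggregate target capacity forces at least a constant fraction of floors to hold reference quantity $\Omega_a(1)$, so $s_R\cdot\Omega_a(1)\le 2\eps$.

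For $\mu$, I would use that filler (zero-price goods) must absorb quantity. Each floor's approved quantity is at most about $1/r_{Fv}+\theta$-type amounts, below one, so filler quantity $\Omega_a(1)$ is held, and slack $\mu$ on filler gives $\mu=O_a(\eps)$. Then $\beta r_{Fv}+\mu-1=s_R$ yields $\beta=1/r_{Fv}+O_a(\eps)\ge c$.

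Dual feasibility on $L_v$ and $M_v$ gives $\beta q_v+\mu\ge a$ and $\beta m_v+\mu\ge1$. These give $q_v\ge ar_{Fv}-O_a(\eps)$ and $m_v\ge r_{Fv}-O_a(\eps)$.

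\emph{High and low prices.} For $p_v$, repeat the first part of Lemma~\ref{lem:coarse}. If $p_v<r_{Fv}/a-C'K\eps$, the upper buyers' alternatives $m_v$ and $r_{Uv}$ have slack $\Omega_a(K\eps)$ per dollar against $H_v$. Lemma~\ref{lem:robust-demand-slack}, with population $c_v=O_a(K)$, limits their spending elsewhere to $O_a(K\eps)/(K\eps)\cdot\eps$-scale amounts. Their high demand then exceeds $c_v/(L+\eta)-O(1)>S_H$, a contradiction; the margin $c_v-(L+\eta)S_H=\Theta_a(1)$ absorbs the error.

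For the upper bound on $q_v$, use revenue. If floors buy low goods with non-negligible quantity, the floor-dual estimates force $q_v\le ar_{Fv}+O_a(\eps)<1+O(1/K)$. Otherwise low revenue is at most $(b_v+o_vK)(1+\eps)+\zeta+O_a(\eps)$, giving $q_v\le(o_vK+b_v)/S_L+O_a(1/K)\le1+O_a(1/K)$, which is below $aL-c$ by the margin in \eqref{eq:check-coarse}. The lower bound $q_v\ge a^2L-O_a(K\eps)$ follows from $q_v\ge ar_{Fv}-O_a(\eps)$ and the reference bound.
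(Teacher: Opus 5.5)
\textbf{Gap in the floor-dual step.} Most of your plan matches the paper's: reference intervals, then a common floor dual, the upper-buyer argument for $p_v$, and a revenue bound for $q_v$. Your expenditure-based reference lower bound is a valid variant of the paper's quantity-based one, since $rS\ge A(1-\eps r)$ rearranges to $r\ge (A/S)/(1+\eps A/S)$.

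The gap is in the step you flagged as hardest. You justify it with ``$1/r_{Fv}-\eps>a+\Omega_a(1)$'', and this is false. We have $r_{Fv}\ge aL-O_a(\eps)$ and $aL=1/a-O_a(1/K)$, so $1/r_{Fv}-a=O_a(1/K)$. Without that inequality, your argument bounds target utility by target quantity. The floor average of target quantity is at most $(S_L+S_M)/F_v<\theta$, and the margin in $F_v\theta>S_L+S_M$ is less than $2\theta$ in total, i.e.\ $O_a(1/K)$ per buyer. When $\theta=a$, which holds for every $a\le 1/3$, this gives only $\overline X_R\ge c/K-\eps$. Hence you get $s_R=O_a(K\eps)$, not the claimed $O_a(\eps)$. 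Your assertion that a constant fraction of floors hold $\Omega_a(1)$ reference quantity is therefore unsupported. For $a>1/3$ the fix is immediate, since then $\theta=(1-a)/2<a$ by a constant.

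\textbf{The missing idea.} Weigh the targets by utility, not quantity. Low goods give floors only utility $a<1$ per unit, and $S_L/(S_L+S_M)$ is bounded below in terms of $a$. Averaging over the floor population and using $F_v\ge(S_L+S_M)/\theta$ and $\theta\le a$ gives $a\overline X_L+\overline X_M\le (aS_L+S_M)/F_v\le\theta\,\frac{aS_L+S_M}{S_L+S_M}\le a-c$. Average floor utility is at least $1/r_{Fv}-\eps>a-\eps$, so $\overline X_R\ge c/2$. The population slack bound for one common dual then gives $s_R\le 2\eps/\overline X_R=O_a(\eps)$. Your filler argument for $\mu$ and the identity $\beta r_{Fv}+\mu-1=s_R$ then give $\beta=1/r_{Fv}+O_a(\eps)$.

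\textbf{Two smaller points.}
\begin{enumerate}
\item In the upper bound for $q_v$, split into cases on the price gap $q_v-ar_{Fv}$, not on floor quantity. In the high-gap case, bound floor low expenditure with the price-to-slack estimate, because $q_v$ is not yet known to be bounded. A total floor low quantity of order one only yields $q_v\le ar_{Fv}+O_a(K\eps)$; this is still below one, but it is not $O_a(\eps)$.
\item In the high-price step, the margin that absorbs the $O_a(1)$ error is $c_v-(L+\eta)S_H=\Theta_a(K)$, not $\Theta_a(1)$. You also need $\beta_U\ge a/r_{Uv}\ge c$ to ensure upper buyers spend nearly their full budgets.
\end{enumerate}
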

\begin{proof}
Suppress the vertex subscript.

For a reference with price $r$, supply $S$, anchor count $A$,
and $M$ nonanchor approvers, anchor utility and revenue clearing give
\[
 r\ge\frac{A/S}{1+\eps A/S},\qquad
 rS\le(A+M)(1+\eps)+\zeta.
\]
Substitute the counts in~\eqref{eq:reference-counts} to obtain
the two reference intervals. Since $1/S^R=O_a(\eta/K)$,
$\eta=\tau w$, and $w=\Theta_a(1/K)$, their errors are smaller
than the endpoint gaps in~\eqref{eq:interval} when $K^2\eps$
and $\zeta$ are sufficiently small. This proves the remaining
reference bounds and also gives
\[
 1/r_F\ge a+c/K,\qquad 1-1/r_F\ge\theta+c
\]
for a possibly smaller $c>0$.

Next average the $F$ identical floor rows and denote their quantities by
$\overline X_L,\overline X_M,\overline X_R,\overline X_0$.
Supply feasibility gives
\[
 \overline X_L\le S_L/F,\quad
 \overline X_M\le S_M/F,\quad
 \overline X_L+\overline X_M\le\theta.
\]
Both target supplies are of order $K$, with
$S_L/(S_L+S_M)$ bounded below by a positive constant depending
only on $a$. The separate supply bounds and $\theta\le a$ imply
\[
 a\overline X_L+\overline X_M
 \le\theta\frac{aS_L+S_M}{S_L+S_M}\le a-c.
\]
The optimum of a floor buyer is at least $1/r_F\ge a+c/K$,
so approximate optimality requires $\overline X_R\ge c$ after
reducing $c$ if necessary. On the other hand, the budget gives
$\overline X_R\le(1+\eps)/r_F=a+O_a(1/K+\eps)$.
Together with $\theta\le(1-a)/2$ and the unit row sums, this
implies $\overline X_0\ge(1-a)/3$ for large $K$.

Use one optimal dual for the whole floor population. By
Lemma~\ref{lem:robust-demand-slack}, the positive average reference
quantity forces $s_R=O_a(\eps)$, and the positive average
unapproved quantity forces $\mu=O_a(\eps)$. Consequently
$\beta=1/r_F+O_a(\eps)\ge c$. The low and middle dual
constraints now give $q\ge ar_F-O_a(\eps)$ and
$m\ge r_F-O_a(\eps)$.

To bound the high price, note that all groups approved by an upper
buyer cost more than one by Lemma~\ref{lem:robust-utility-gap}.
We may therefore choose an optimal dual with
\[
 \mu_U=0,\qquad \beta_U=\max\{1/p,a/m,a/r_U\}.
\]
The reference bound gives $\beta_U\ge c$, hence approved
expenditure $1+O_a(\eps)$ per buyer. Suppose
$\Delta=r_F/a-p>H_aK\eps$ for a sufficiently large constant
$H_a$. Then $m\ge r_F-O_a(\eps)$ and $r_U>r_F$ imply
$ap<m,r_U$. High goods uniquely maximize utility per dollar,
and each alternative's reduced slack is bounded below by a constant
times its price minus $ap$. The population estimate in
Lemma~\ref{lem:robust-demand-slack} bounds total expenditure on
the alternatives by
\[
 O_a\bigl(K\eps(1+1/\Delta)\bigr)
 =O_a(K\eps+1/H_a)=O_a(1).
\]
The price-to-slack estimate above applies even if $m$ is unbounded.
The upper population would therefore spend $c_v-O_a(1)$ on high
goods, whereas their revenue is $pS_H=O_a(1)$. This is impossible
for large $K$, proving $p\ge r_F/a-O_a(K\eps)$.

It remains to bound $q$ from above. The floor dual satisfies
\[
 s_L=\beta(q-ar_F)+(1-a)\mu+a s_R\ge\beta(q-ar_F).
\]
If $q>ar_F+H_aK\eps$, Lemma~\ref{lem:robust-demand-slack}
bounds floor expenditure on $L_v$ by $O_a(K\eps+1/H_a)$.
Choose $H_a$ large and $K^2\eps$ small so that this is at most
one. Only lower links and outgoing wires supply other approved
expenditure, giving
\[
 qS_L\le(o_vK+b_v)(1+\eps)+1+\zeta.
\]
If instead $q\le ar_F+H_aK\eps$, the reference upper bound
and smallness of $K^2\eps$ give $q<1$. In both cases,
$q\le1+O_a(1/K)$, and the lower bound on $q$ gives
$q\ge a^2L-O_a(K\eps)$. Since $aL\to1/a>1$, we also have
$q<aL-c$ for large $K$.
\end{proof}

\begin{lemma}[Approximate price relations]\label{lem:robust-price-chain}
Under the stated small-error conditions, suppose $\mathcal W_u\le\zeta$
for $u=v$ and every incoming neighbor $u$ of $v$. Then
\[
 \begin{gathered}
 |q_v-a^2p_v|+|m_v-ap_v|=O_a(K\eps),\\
 L-O_a(K\eps)\le p_v\le U+2\eta+O_a(K\eps),\qquad q_v<1.
 \end{gathered}
\]
All three target prices consequently lie in fixed compact intervals.
\end{lemma}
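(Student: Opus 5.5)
The proof follows the exact argument of Lemmas~\ref{lem:upper-link} and~\ref{lem:lower-link}, with each strict comparison replaced by a quantitative slack bound from Lemma~\ref{lem:robust-demand-slack}. Throughout, Lemma~\ref{lem:robust-preliminary} at $v$ supplies the reference intervals, $p_v\ge L-O_a(K\eps)$, and the lower bounds $q_v\ge ar_{Fv}-O_a(\eps)$ and $m_v\ge r_{Fv}-O_a(\eps)$.

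\emph{Incoming wire expenditure.} The first step is to bound incoming wire expenditure on $H_v$ by roughly $9$. For each incoming neighbor $u$, the same lemma applied at $u$ gives $a^2L-O_a(K\eps)\le q_u<aL-c$. This is why the hypothesis includes $\mathcal W_u\le\zeta$ at incoming neighbors. Since $p_v\ge L-O_a(K\eps)$, the wire buyer's high expenditure is controlled by the two-group formula of Lemma~\ref{lem:demand}, applied with an approximate optimal dual. That gives per-buyer high expenditure $O_a(1/K)+O_a(\eps)$ plus an $\eps$-slack term. Summing over $2K$ wires gives less than $9+O_a(K\eps)$, using the same calculation as Lemma~\ref{lem:coarse}.

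\emph{The upper link.} Since $p_vS_H\ge LS_H-O_a(K\eps)>10$ and unapproved expenditure on $H_v$ is at most $\zeta$, upper buyers spend a constant amount on high goods. Upper buyers share a dual with $\mu_U=0$ and $\beta_U=\max\{1/p_v,a/m_v,a/r_{Uv}\}$. Positive high expenditure of constant size then forces, through the population estimate, high slack $O_a(K\eps)$, hence $ap_v\le m_v+O_a(K\eps)$ and $ap_v\le r_{Uv}+O_a(K\eps)$. The second inequality gives $p_v\le r_{Uv}/a+O_a(K\eps)\le U+2\eta+O_a(K\eps)$ once $\zeta/S^R_{Uv}$ is absorbed into $\eta$ for small $\zeta$.

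Now suppose instead that $m_v>ap_v+HK\eps$ for a large constant $H$. Then upper buyers spend $O_a(K\eps\cdot K/(HK\eps))$ on $M_v$. Floor buyers also spend little on $M_v$, since their dual has $\beta\ge c$ and $m_v>r_{Fv}+\Omega(HK\eps)$. Lower links spend at most $b_v(1+\eps)$. Together these fall short of $m_vS_M\ge aLS_M-O(K\eps)>c_v+10$, which gives $|m_v-ap_v|=O_a(K\eps)$.

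\emph{The lower link.} The symmetric case $q_v>am_v+HK\eps$ is handled by the exact argument. Lower links and floors then buy little low quantity, while outgoing wires hold at most $o_vK$ units plus $O(\eps)$ slack, and this is less than $S_L$. The contradiction here comes from supply clearing, not revenue, so I would count quantities; unapproved quantity is controlled by $\mu>0$ for the relevant buyers, or by $\zeta$.

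The case $q_v<am_v-HK\eps$ is the main obstacle. As in Lemma~\ref{lem:lower-link}, one needs the lower links' middle expenditure to be at most $3b_v(1+a)/K+O(\eps)$ per buyer, which requires an approximate version of the demand formula in the mixed regime. I would obtain it from the allocated-slack identity~\eqref{eq:allocated-dual-slack} with the explicit dual of Lemma~\ref{lem:demand}. Its multipliers are bounded below because $q_v<1-c'$ and $am_v-q_v\ge HK\eps$. Floors contribute little middle expenditure because $m_v>r_{Fv}+\Omega(K\eps)$, and upper buyers contribute at most $c_v(1+\eps)$. This contradicts $aLS_M>c_v+10$.

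\emph{Conclusion.} The two links give $q_v=a^2p_v+O_a(K\eps)$. Finally, $q_v\le a^2(U+2\eta)+O_a(K\eps)<1$ by the reference requirement $a^2(U+\eta)<1$ with margin $\Theta_a(1/K)$, once $K^2\eps$ is small. Compactness of all three target intervals then follows.
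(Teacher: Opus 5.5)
Your proposal is correct and follows the paper's proof essentially step for step. The steps match in order: the incoming-wire bound on $H_v$ from the preliminary estimates at incoming neighbors; the upper dual $\mu_U=0$, $\beta_U=\max\{1/p_v,a/m_v,a/r_{Uv}\}$, giving $p_v\le r_{Uv}/a+O_a(K\eps)$ and $m_v\ge ap_v-O_a(K\eps)$; and then the middle-revenue, low-supply, and mixed-dual contradictions.

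One correction: the claim $q_v<1-c'$ is false. Lemma~\ref{lem:robust-preliminary} gives $q_v\ge a^2L-O_a(K\eps)=1-O_a(1/K+K\eps)$, and this closeness to one is exactly what makes each lower link's middle spending $O_a(1/K+K\eps+\eps/\Delta)$. You only need $q_v<am_v<1$, while $\beta\ge(1-a)/m_v\ge c$ and $\mu\ge\Delta/m_v\ge c\Delta$ follow from the bound on $m_v$. Similarly, the per-buyer wire error is $O_a(K\eps)$ rather than $O_a(\eps)$, so the totals carry a harmless $O_a(K^2\eps)$.
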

\begin{proof}
Apply Lemma~\ref{lem:robust-preliminary} at $v$ and its incoming
neighbors. We first bound the expenditure of incoming wires on
$H_v$; this forces the upper population to purchase high goods.

For an incoming wire, write $q$ for the source low price and $p$
for the destination high price. The preliminary bounds give
\[
 q\le1+O_a(1/K),\quad p\ge a^{-2}-O_a(1/K),\quad
 q/p\le a^2+O_a(1/K)<a-c
\]
for some $c>0$. Thus the low option has strictly greater utility
per dollar, even when $p$ is unbounded. If $q\ge1$, its optimal
dual is $(a/q,0)$, and the high slack is at least $cp$; high
spending per buyer is $O_a(\eps)$. If $q<1$, the mixed dual is
\[
 \beta=\frac{1-a}{p-q},\qquad
 \mu=\frac{ap-q}{p-q}\ge c.
\]
The buyer's unapproved quantity is $O_a(\eps)$. Writing $Y$
for her high quantity, the unit row and budget bounds give
\[
 pY\le\frac{p}{p-q}\bigl(1-q+O_a(\eps)\bigr)
 \le\frac{L-O_a(K\eps)}{L-1-O_a(K\eps)}
       \bigl(1-a^2L+O_a(K\eps)\bigr).
\]
There are at most $2K$ incoming buyers. Multiplying this bound
by $2K$ gives a leading term tending to six. Including the
$q\ge1$ case, their total high expenditure is at most
$9+O_a(K^2\eps)$.

High revenue is at least $LS_H-O_a(K\eps)>10$. Since at most
$\zeta$ of it is unapproved expenditure, upper buyers must spend
at least $1/2$ on high goods. Choose their common optimal dual
$\mu_U=0$, $\beta_U=\max\{1/p,a/m,a/r_U\}$. Their high
expenditure multiplied by $\beta_U-1/p$ is its contribution to
allocated slack. Therefore
\[
 \beta_U-1/p=O_a(c_v\eps)=O_a(K\eps).
\]
Since $\beta_U\ge a/r_U\ge c$, this bounds $p$ from above.
Substituting this bound back into the inequality gives
$p\le r_U/a+O_a(K\eps)$ and $m\ge ap-O_a(K\eps)$.

Suppose $\Delta=m-ap>H_aK\eps$, for a sufficiently large $H_a$.
The upper buyers' middle expenditure is
$O_a(K\eps+1/H_a)$, and is at most one for large $H_a$
and small $K^2\eps$. The lower bound on $p$ gives
$m-r_F\ge\Delta-O_a(K\eps)$, so the floor population's
middle expenditure is also at most one. Lower links contribute at
most $b_v(1+\eps)$ and unapproved buyers at most $\zeta$.
These cannot cover
\[
 mS_M\ge aLS_M-O_a(K^2\eps)>c_v+10-O_a(K^2\eps),
\]
using $c_v>b+1\ge b_v$. Thus $|m-ap|=O_a(K\eps)$.

We next compare $q$ with $am$. Suppose first that
$\Delta=q-am>H_aK\eps$. An optimal lower-link bundle uses
middle goods and filler, and its dual gives low slack
$\Delta/m\ge c\Delta$.
Since $m\ge r_F-O_a(K\eps)$, the floor low slack has the same
lower bound after increasing $H_a$. Their combined low quantity is
at most one, whereas unapproved buyers receive at most
$\zeta/q=O_a(\zeta)$ low quantity. Outgoing wires receive at
most $o_vK$. This contradicts $S_L=o_vK+C$ and $C\ge64$. Hence
$q\le am+O_a(K\eps)$. Together with
$m=ap+O_a(K\eps)$ and $p\le r_U/a+O_a(K\eps)$, this
gives $q<1$ for sufficiently small $K^2\eps$.

For the reverse bound, suppose $\Delta=am-q>H_aK\eps$. The
lower bound $q\ge ar_F-O_a(\eps)$ implies
$m-r_F\ge\Delta/a-O_a(\eps)$, so floor middle expenditure
is at most one. For a lower link, $q<1<m$ and its mixed dual is
\[
 \beta=\frac{1-a}{m-q}\ge c,\qquad
 \mu=\frac{am-q}{m-q}\ge c\Delta.
\]
Its unapproved quantity is $O_a(\eps/\Delta)$. The row and
budget bounds therefore limit its middle expenditure to
\[
 \frac{m}{m-q}\bigl(1-q+O_a(\eps+\eps/\Delta)\bigr)
 \le\frac{1}{1-a}\bigl(1-q+O_a(\eps+\eps/\Delta)\bigr)
 =O_a(1/K+K\eps+\eps/\Delta).
\]
There are only $b_v=O_a(1)$ lower links, so this is below one in
total for large $K$. Upper buyers spend at most $c_v(1+\eps)$
and unapproved buyers at most $\zeta$. Together, these expenditures
are less than the required revenue $mS_M>c_v+10-O_a(K^2\eps)$,
a contradiction. Thus $|q-am|=O_a(K\eps)$.

Combining this bound with $|m-ap|=O_a(K\eps)$ gives
$|q-a^2p|=O_a(K\eps)$. The reference bound and
$1/S^R_U=O_a(\eta/K)$ imply
$r_U/a\le U+\eta+O_a(\eps)+O_a(\zeta\eta/K)
\le U+2\eta+O_a(\eps)$.
Together with $r_F/a-O_a(K\eps)\le p\le r_U/a+O_a(K\eps)$,
this gives the stated interval for $p_v$.
\end{proof}

\begin{lemma}[Approximate local accounting]\label{lem:robust-accounting}
Suppose $\mathcal W_u\le\zeta$ for every $u\in B_2(v)$. Under the
stated small-error conditions, evaluate the rational wire
payments \eqref{eq:wire-dollars} and residual \eqref{eq:F} at
the ideal low and middle prices $a^2p_u,ap_u$, keeping the actual
high prices $p_u$. Denote this residual by $\Phi_v(p)$ and
abbreviate it to $\Phi_v$ in the estimates below.
Let $W_v$ and $V_v$ be the actual approved expenditures of floor
and upper buyers, respectively, on the three targets of $v$.
Then
\begin{align}
 \Phi_v&=W_v+(V_v-c_v)+O_a(K^2\eps+\zeta),
       \label{eq:robust-accounting}\\
 \left|\frac{\Phi_v}{2}
       -\left(\sum_{u\to v}z_u-\frac12\right)\right|
 &\le O_a(1/K+\tau+K^2\eps).
       \label{eq:robust-threshold-expansion}
\end{align}
\end{lemma}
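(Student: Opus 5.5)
We prove the two estimates separately. \eqref{eq:robust-accounting} is an approximate version of the market-clearing identity~\eqref{eq:F}. \eqref{eq:robust-threshold-expansion} is the algebra of Subsection~\ref{sec:clearing}, carried out at the ideal prices. All price inputs come from Lemma~\ref{lem:robust-price-chain}, applied at $v$ and at every endpoint of a wire incident to $v$. Those endpoints $u$ lie in $B_1(v)$, and their incoming neighbours lie in $B_2(v)$, so the hypothesis on $B_2(v)$ gives $|q_u-a^2p_u|+|m_u-ap_u|=O_a(K\eps)$, $q_u<1$, and $p_u\in[L-O_a(K\eps),U+2\eta+O_a(K\eps)]$ at each of them.

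\emph{Step 1: actual accounting.} Write down market clearing for the three target groups of $v$ at the actual prices. Actual revenue $q_vS_L+m_vS_M+p_vS_H$ equals $b$-link spending, plus actual wire spending, plus $W_v+V_v$, plus unapproved target expenditure; the last term is at most $\mathcal W_v\le\zeta$. Lemma~\ref{lem:robust-utility-gap} gives $\beta_i\ge g_a$ for every buyer. Hence Lemma~\ref{lem:robust-demand-slack} gives approved expenditure in $[1-O_a(\eps),1+\eps]$ per buyer. Lower links approve only targets, so they spend $b_v+O_a(\eps)$ on them. The same estimate replaces $c_v$ up to $O_a(K\eps)$. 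Replacing $q_v,m_v$ by $a^2p_v,ap_v$ changes revenue by $O_a(K\eps)(S_L+S_M)=O_a(K^2\eps)$.

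\emph{Step 2: wire payments.} The main obstacle is comparing each population $\mathcal E_{uv}$'s actual spending on $H_v$ with the rational formula~\eqref{eq:wire-dollars} evaluated at $q=a^2p_u$. Per buyer, $q_u<1$ and $q_u/p_v\le a-c$, so the mixed dual of Lemma~\ref{lem:robust-price-chain} has $\mu\ge c$ and $\beta\ge c$. Lemma~\ref{lem:robust-demand-slack} then bounds the unapproved quantity and expenditure by $O_a(\eps)$. The two linear equations $x+y=1-O_a(\eps)$ and $q_ux+p_vy=1+O_a(\eps)$ have determinant $p_v-q_u\ge L-1-o(1)=\Theta_a(1)$. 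Therefore the buyer's high expenditure $p_vy$ differs from $p_v(1-q_u)/(p_v-q_u)$ by $O_a(\eps)$. That expression is Lipschitz in $q_u$ with an $O_a(1)$ constant, and $|q_u-a^2p_u|=O_a(K\eps)$. So each buyer's error is $O_a(K\eps)$, and the $K$ buyers together contribute $O_a(K^2\eps)$. The low spending $K-I_{vt}$ of outgoing wires is handled the same way, using that their approved expenditure is $1+O_a(\eps)$. There are at most four incident populations, so the difference between actual and ideal wire receipts is $O_a(K^2\eps)$. Substituting into Step~1 and using the definition of $\Phi_v(p)$ yields \eqref{eq:robust-accounting}.

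\emph{Step 3: threshold expansion.} This is purely algebraic once the prices are ideal. Repeat~\eqref{eq:wire-expansion}--\eqref{eq:canceled} verbatim. The identities $R_v=o_v\gamma Kp_v$, $b_v+c_v=o_v\gamma K+1-3h_v$ and $K(1-a^2p_u)=(1-a^2)(3-2t_u)$ hold exactly, so $\Phi_v=2\sum_{u\to v}t_u-1-e_v$. The $t$ are now defined with $p_u$ possibly outside $[L,U+\eta]$, namely $t_u\in[-O_a(K^2\eps),1+2\tau+O_a(K^2\eps)]$ since $w=\Theta_a(1/K)$. The errors $e_{uv}$ are therefore $O_a(1/K)$ as before, using $p_v-a^2p_u\ge L-1-o(1)$. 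Each clipping error is $|t_u-z_u|\le2\tau+O_a(K^2\eps)$. At most four wire errors and two clipping errors occur, which gives \eqref{eq:robust-threshold-expansion}.
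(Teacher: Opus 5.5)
Your proposal is correct and follows essentially the same route as the paper. Both proofs apply Lemma~\ref{lem:robust-price-chain} at both endpoints of every incident wire, obtain per-buyer wire payments within $O_a(\eps)$ of the mixed-demand formula, and replace $q_u$ by $a^2p_u$ (Lipschitz) at a cost of $O_a(K^2\eps)$ per population. They then substitute the ideal revenue and the lower-link budgets into column clearing, and finish with the exact algebra of \eqref{eq:canceled} over the extended range of $t_u$.

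One small correction: Lemma~\ref{lem:robust-utility-gap} gives only $\beta_i(p_{f(i)}-1)\ge g_a$ for a utility-one group $f(i)$, not $\beta_i\ge g_a$ for every buyer. For example, a wire into a very expensive high group has $\beta=(1-a)/(p-q)$ arbitrarily small. So for the lower links you should instead cite the bounded middle price $m_v=ap_v+O_a(K\eps)$, as the paper does.
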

\begin{proof}
Every incoming neighbor of $v$ or of a neighbor of $v$ lies in
$B_2(v)$. Lemma~\ref{lem:robust-price-chain} therefore applies
at both endpoints of every wire incident to $v$.

For each such wire, both optimal dual multipliers are bounded below
by positive constants. Lemma~\ref{lem:robust-demand-slack} gives
approved quantity and expenditure $1+O_a(\eps)$ per buyer.
Solving for the low and high quantities then gives expenditures
within $O_a(\eps)$ of the exact mixed-demand payments at the
actual endpoint prices. The denominators are bounded away from
zero. Replacing the source low price $q_u$ by $a^2p_u$ changes
each buyer's payment by $O_a(K\eps)$, and the whole population's
payment by $O_a(K^2\eps)$.

The actual target revenue is
$q_vS_L+m_vS_M+p_vS_H$. Replacing $q_v,m_v$ by $a^2p_v,ap_v$
changes this by $O_a(K^2\eps)$ and yields the ideal revenue
$R_v$ used in \eqref{eq:F}. Each lower link has a bounded-price
utility-one option and the uniform utility gap, so its dual
$\beta$ is bounded below. Its approved expenditure is therefore
$1+O_a(\eps)$; the whole lower population contributes
$b_v+O_a(\eps)$. Actual column clearing, the wire comparison,
and at most $\zeta$ unapproved target expenditure now give
\eqref{eq:robust-accounting}.

To prove the threshold expansion, apply identity~\eqref{eq:canceled}
to these prices. The approximate price relations give
\[
 -O_a(K^2\eps)\le t_u\le1+2\tau+O_a(K^2\eps),\qquad
 |t_u-z_u|=O_a(\tau+K^2\eps).
\]
For each incident edge, its ideal payment is
$I_{uv}=3-2t_u+e_{uv}$, with
\[
 e_{uv}=(3-2t_u)\frac{a^2(p_u-p_v)}{p_v-a^2p_u}.
\]
The extended price interval gives $p_u-p_v=O_a(w)$ and
$3-2t_u=O_a(1)$, while $p_v-a^2p_u$ is bounded away from
zero. Thus $e_{uv}=O_a(1/K)$. There are at most four incident
edge contributions to $e_v$, and~\eqref{eq:canceled} becomes
$\Phi_v=2\sum_{u\to v}t_u-1-e_v$. The displayed clipping
bound now proves~\eqref{eq:robust-threshold-expansion}.
\end{proof}

\subsection{Proof of the local decoding lemma}
\label{app:local-decoding-proof}

\begin{proof}[Proof of Lemma~\ref{lem:pcp-local}]
The hypothesis $\mathcal W_u\le\zeta$ on $B_2(v)$ supplies all local
estimates used below. Lemma~\ref{lem:robust-preliminary} only
requires the module at its vertex; Lemma~\ref{lem:robust-price-chain}
also requires the preliminary bounds at incoming neighbors.
Consequently, the price relations hold at $v$ and at both endpoints
of each incident wire, and Lemma~\ref{lem:robust-accounting}
applies. No price bounds outside $B_2(v)$ are used.

Fix a constant $C_a$ large enough to dominate the error constants
in these lemmas and in the slack estimates below. Take $c_a$ small
enough to satisfy their hypotheses, the endpoint-gap estimates
below, and $c_a\le1/(24C_a)$. Increase $K_a$ so that
$K_a\ge8C_a/\kappa$. Under the hypotheses of the local decoding
lemma,
\[
 C_a\left(\frac1K+\tau+\frac{K^2\eps}{\kappa}+\zeta\right)
 \le\frac\kappa4<\frac\kappa2.
\]
In particular, the error in
\eqref{eq:robust-threshold-expansion} is less than $\kappa/2$.
We prove the two threshold implications separately.

Suppose $\sum_{u\to v}z_u>1/2+\kappa$ and $z_v>\kappa$.
The threshold expansion gives $\Phi_v>\kappa$, while the decoder
gives $p_v>L+\kappa w$. The price relations and reference upper
bound imply
\[
 \begin{aligned}
 q_v-ar_{Fv}
 &\ge a^2(\kappa w-\eta)-O_a(K\eps+\zeta\eta/K),\\
 m_v-r_{Fv}
 &\ge a(\kappa w-\eta)-O_a(K\eps+\zeta\eta/K).
 \end{aligned}
\]
Because $\eta=\tau w$ and $w=\Theta_a(1/K)$, the assumptions
$\tau\le c_a\kappa$, $K^2\eps\le c_a\kappa^2$, and
$\zeta\le c_a\kappa$ make both differences at least a positive
constant times $\kappa w$ when $c_a$ is sufficiently small.
For the floor dual supplied by Lemma~\ref{lem:robust-preliminary},
$\beta\ge c>0$, $\mu\ge0$, and $s_R\ge0$. The exact identities
\[
 s_L=\beta(q_v-ar_{Fv})+(1-a)\mu+a s_R,\qquad
 s_M=\beta(m_v-r_{Fv})+s_R
\]
therefore give $s_L,s_M\ge c\kappa w$.
Lemma~\ref{lem:robust-demand-slack}, the bounded target prices,
and $F_v=O_a(K)$ imply
\[
 W_v=O_a(K^2\eps/\kappa),\qquad
 V_v-c_v\le c_v\eps=O_a(K\eps).
\]
Substitution in~\eqref{eq:robust-accounting} gives
$\Phi_v\le C_a(K^2\eps/\kappa+\zeta)<\kappa/2$,
contradicting $\Phi_v>\kappa$. This proves the first implication.

Now suppose $\sum_{u\to v}z_u<1/2-\kappa$ and
$z_v<1-\kappa$. Then $\Phi_v<-\kappa$ and
$p_v<U-\kappa w$. The lower reference bound gives
$r_{Uv}-ap_v\ge a\kappa w-O_a(\eps)$.
Choose the upper population's optimal dual as
$\mu_U=0$ and $\beta_U=\max\{1/p_v,a/m_v,a/r_{Uv}\}$.
Its reference slack satisfies
\[
 \beta_Ur_{Uv}-a\ge
 \frac{r_{Uv}-ap_v}{p_v}\ge c\kappa w.
\]
The upper population therefore spends $O_a(K^2\eps/\kappa)$
on its reference. Its total approved expenditure is
$c_v+O_a(K\eps)$ by Lemma~\ref{lem:robust-demand-slack}, so
\[
 V_v-c_v\ge-O_a(K^2\eps/\kappa),\qquad W_v\ge0.
\]
Equation~\eqref{eq:robust-accounting} now gives
$\Phi_v\ge-C_a(K^2\eps/\kappa+\zeta)>-\kappa/2$,
contradicting $\Phi_v<-\kappa$. This proves the second implication.
\end{proof}

\clearpage

\section{Rational approximate witnesses}\label{app:rational-witness-proof}

\begin{proof}[Proof of Lemma~\ref{lem:rational-witness}]
Take a normalized exact equilibrium $(p,x)$ with minimum-expenditure
tie-breaking, supplied by \cite[Theorem 5.1, conference version]{gtv}.
Clearing and the budget constraints give $\sum_jp_j\le N$.
Put $Q=\lceil(N+1)/\eps\rceil$ and
$p'_j=\lfloor Qp_j\rfloor/Q$. A zero-price good remains free.

Round $x$ to a doubly stochastic matrix $x'$ with entries in
$Q^{-1}\mathbb Z$ and $|x'_{ij}-x_{ij}|\le1/Q$. Such a matrix
exists by integrality of the transportation problem with row and
column sums $Q$ and bounds
$\lfloor Qx_{ij}\rfloor\le y_{ij}\le\lceil Qx_{ij}\rceil$:
$Qx$ is feasible, so an integral feasible $y$ gives $x'=y/Q$.
The rounding errors satisfy
\[
 p'\cdot x'_i\le1+\frac NQ,
 \qquad U_i(x'_i)\ge V_i(p)-\frac NQ.
\]
A unit bundle affordable at $p'$ costs at most $1+1/Q$ at $p$.
Scaling it by $1/(1+1/Q)$ and adding free filler loses at most
$1/Q$ utility. Hence $V_i(p')\le V_i(p)+1/Q$, and
\[
 U_i(x'_i)\ge V_i(p')-\frac{N+1}{Q}.
\]
Both errors are at most $\eps$. Row and column sums remain exact,
and prices remain normalized. Finally, $Q=O(N/\eps)$ and
$p'_j\le N$, which gives the bit bound.
\end{proof}

\section*{AI methodology statement}
\addcontentsline{toc}{section}{AI methodology statement}

GPT-6 Astra helped simplify and calibrate the construction in
Sections~\ref{sec:construction}--\ref{sec:exact},
refine the rational-witness argument in
Appendix~\ref{app:rational-witness-proof},
and improve the algorithm presentation and running-time analysis in
Section~\ref{sec:two-slopes}.
It also helped review the proofs, including the conditional hardness
result in Section~\ref{sec:pcp-equivalence}, check references,
and improve exposition, notation, and figures.
The authors take full responsibility for the correctness,
originality, and integrity of the paper, including its references.


\begin{thebibliography}{OpenAI26}
\normalsize
\setlength{\itemsep}{2pt}
\bibitem[AJT17]{ajt}
Saeed Alaei, Pooya Jalaly Khalilabadi, and \'Eva Tardos.
\newblock Computing equilibrium in matching markets.
\newblock In \emph{Proceedings of the 2017 ACM Conference on Economics
and Computation (EC)}, pages 245--261, 2017.
\newblock \url{https://doi.org/10.1145/3033274.3085150}.
\newblock Full version: \url{https://arxiv.org/abs/1703.10689}.

\bibitem[BPR16]{bpr}
Yakov Babichenko, Christos H. Papadimitriou, and Aviad Rubinstein.
\newblock Can almost everybody be almost happy?
\newblock In \emph{Proceedings of the 2016 ACM Conference on Innovations
in Theoretical Computer Science (ITCS)}, pages 1--9, 2016.
\newblock \url{https://doi.org/10.1145/2840728.2840731}.
\newblock Full version: \url{https://arxiv.org/abs/1504.02411}.

\bibitem[BLXZ26]{blxz}
Mark Braverman, Jingyi Liu, Eric Xue, and Chenghan Zhou.
\newblock Hardness of approximate Hylland--Zeckhauser equilibria.
\newblock Preprint, arXiv:2606.00951v2, June 9, 2026.
\newblock \url{https://arxiv.org/abs/2606.00951v2}.

\bibitem[CCPY22]{ccpy}
Thomas Chen, Xi Chen, Binghui Peng, and Mihalis Yannakakis.
\newblock Computational hardness of the Hylland--Zeckhauser scheme.
\newblock In \emph{Proceedings of the 2022 Annual ACM--SIAM Symposium
on Discrete Algorithms (SODA)}, pages 2253--2268, 2022.
\newblock \url{https://doi.org/10.1137/1.9781611977073.90}.
\newblock Full version: \url{https://arxiv.org/abs/2107.05746}.

\bibitem[DFHM24]{purecircuit}
Argyrios Deligkas, John Fearnley, Alexandros Hollender, and
Themistoklis Melissourgos.
\newblock Pure-Circuit: Tight inapproximability for PPAD.
\newblock \emph{Journal of the ACM}, 71(5), Article 31, 2024.
\newblock \url{https://doi.org/10.1145/3678166}.
\newblock Theorem~4.23 in the author version:
\href{https://repository.essex.ac.uk/38497/1/Pure-Circuit_Tight_Inapproximability_for_PPAD.pdf}{author-version PDF}.

\bibitem[DFHM26]{pcpmarkets}
Argyrios Deligkas, John Fearnley, Alexandros Hollender, and
Themistoklis Melissourgos.
\newblock Fisher markets with approximately optimal bundles and the
need for a PCP theorem for PPAD.
\newblock In \emph{Proceedings of the 58th Annual ACM Symposium on
Theory of Computing (STOC)}, pages 1049--1060, 2026.
\newblock \url{https://doi.org/10.1145/3798129.3800817}.
\newblock Full version: \url{https://arxiv.org/abs/2604.27276v3}.

\bibitem[EMZ21]{emz}
Federico Echenique, Antonio Miralles, and Jun Zhang.
\newblock Constrained pseudo-market equilibrium.
\newblock \emph{American Economic Review}, 111(11):3699--3732, 2021.
\newblock \url{https://doi.org/10.1257/aer.20201769}.

\bibitem[GGT89]{ggt}
Giorgio Gallo, Michael D. Grigoriadis, and Robert E. Tarjan.
\newblock A fast parametric maximum flow algorithm and applications.
\newblock \emph{SIAM Journal on Computing}, 18(1):30--55, 1989.
\newblock \url{https://doi.org/10.1137/0218003}.

\bibitem[GTV22]{gtv}
Jugal Garg, Yixin Tao, and L\'aszl\'o A. V\'egh.
\newblock Approximating equilibrium under constrained piecewise linear
concave utilities with applications to matching markets.
\newblock In \emph{Proceedings of the 2022 Annual ACM--SIAM Symposium
on Discrete Algorithms (SODA)}, pages 2269--2284, 2022.
\newblock \url{https://doi.org/10.1137/1.9781611977073.91}.
\newblock Conference version:
\url{https://www.laszlovegh.eu/papers/Garg-Tao-Vegh-SODA.pdf}.

\bibitem[HMPY18]{hmpy}
Yinghua He, Antonio Miralles, Marek Pycia, and Jianye Yan.
\newblock A pseudo-market approach to allocation with priorities.
\newblock \emph{American Economic Journal: Microeconomics},
10(3):272--314, 2018.
\newblock \url{https://doi.org/10.1257/mic.20150259}.

\bibitem[HV22]{HV}
Mojtaba Hosseini and Vijay V. Vazirani.
\newblock Nash-bargaining-based models for matching markets: One-sided
and two-sided; Fisher and Arrow-Debreu.
\newblock In \emph{13th Innovations in Theoretical Computer Science
Conference (ITCS 2022)}, volume 215 of LIPIcs, pages 86:1--86:20, 2022.
\newblock \url{https://doi.org/10.4230/LIPIcs.ITCS.2022.86}.

\bibitem[HZ79]{hz}
Aanund Hylland and Richard Zeckhauser.
\newblock The efficient allocation of individuals to positions.
\newblock \emph{Journal of Political Economy}, 87(2):293--314, 1979.
\newblock \url{https://doi.org/10.1086/260757}.

\bibitem[OpenAI26]{openaipcp}
OpenAI.
\newblock The PCP-for-PPAD conjecture: a quasilinear reduction.
\newblock Preprint, September 25, 2026.
\newblock \href{https://github.com/openai/math/blob/main/preprints/The-PCP-for-PPAD-conjecture-a-quasilinear-reduction-September-25-2026/paper.pdf}{Preprint PDF}.

\bibitem[PTV26]{ptv}
Ioannis Panageas, Thorben Tr\"obst, and Vijay V. Vazirani.
\newblock Time-efficient algorithms for Nash-bargaining-based matching
market models.
\newblock In \emph{Web and Internet Economics (WINE 2024)},
volume 15534 of Lecture Notes in Computer Science, pages 119--136,
Springer, 2026.
\newblock \url{https://doi.org/10.1007/978-3-032-08560-3_7}.
\newblock Full version: \url{https://arxiv.org/abs/2106.02024v6}.

\bibitem[PP21]{thresholdgames}
Christos H. Papadimitriou and Binghui Peng.
\newblock Public goods games in directed networks.
\newblock In \emph{Proceedings of the 22nd ACM Conference on Economics
and Computation (EC)}, pages 745--762, 2021.
\newblock \url{https://doi.org/10.1145/3465456.3467616}.
\newblock Full version: \url{https://arxiv.org/abs/2106.00718}.

\bibitem[TV25]{TV}
Thorben Tr\"obst and Vijay V. Vazirani.
\newblock Cardinal-utility matching markets: The quest for
envy-freeness, Pareto-optimality, and efficient computability.
\newblock \emph{Mathematics of Operations Research}, Articles in
Advance, pages 1--23, published online September 26, 2025.
\newblock \url{https://doi.org/10.1287/moor.2024.0770}.
\newblock Author manuscript:
\url{https://ics.uci.edu/~vazirani/MOR-EFPO.pdf}.

\bibitem[VY25]{vy}
Vijay V. Vazirani and Mihalis Yannakakis.
\newblock Computational complexity of the Hylland--Zeckhauser mechanism
for one-sided matching markets.
\newblock \emph{SIAM Journal on Computing}, 54(2):193--232, 2025.
\newblock \url{https://doi.org/10.1137/23M157586X}.
\newblock Author manuscript:
\url{https://ics.uci.edu/~vazirani/VY.pdf}.

\bibitem[YL26]{yl}
Yonglei Yan and Zhengyang Liu.
\newblock Approximation and irrationality in Hylland--Zeckhauser equilibria.
\newblock Preprint, arXiv:2606.06317v2, June 17, 2026.
\newblock \url{https://arxiv.org/abs/2606.06317v2}.

\end{thebibliography}
\end{document}